\documentclass{article}

\usepackage{graphicx} 
\usepackage{fullpage}
\usepackage[top=2cm, bottom=4.5cm, left=2.5cm, right=2.5cm]{geometry}
\usepackage{amsmath,amsthm,amsfonts,amssymb,amscd}
\usepackage{lastpage}
\usepackage{enumerate}
\usepackage{mathrsfs}
\usepackage{xcolor}
\usepackage{hyperref}
\usepackage{tikz}

\usepackage[noadjust]{cite}
\usepackage[noend, boxed]{algorithm2e}
\usepackage{placeins}
\usepackage[nobreak]{mdframed}

\usetikzlibrary{positioning, shapes.geometric, shapes.misc}
\tikzset{main node/.style = {circle, fill = white, font = \fontsize{8}{0}\selectfont, draw, minimum size=1.3cm, inner sep=0pt}}
\newtheorem{theorem}{Theorem}

\newtheorem{corollary}[theorem]{Corollary}
\newtheorem{lemma}[theorem]{Lemma}

\newtheorem{definition}{Definition}

\newcommand{\len}{\texttt{len}}
\usepackage{todonotes}

\newcommand{\Oish}{\widetilde{O}}
\newcommand{\Ohat}{\widehat{O}}
\newcommand{\Omegaish}{\widetilde{\Omega}}
\newcommand{\Omegahat}{\widehat{\Omega}}

\newcommand{\dist}{\texttt{dist}}
\newcommand{\vdist}{\texttt{vdist}}
\newcommand{\epoch}{\texttt{epoch}}

\newcommand{\val}{\texttt{val}}

\newcommand{\eps}{\varepsilon}

\newcommand{\poly}{\texttt{poly}}
\newcommand{\vfg}{\texttt{vfg}}
\newcommand{\efg}{\texttt{efg}}
\newcommand{\vufg}{\texttt{vufg}}
\newcommand{\vuufg}{\texttt{vuufg}}

\newcommand{\mass}{\texttt{mass}}

\usepackage{makecell}
\usepackage{multirow}

\usetikzlibrary{arrows.meta}

\newcommand{\mvc}{\textsc{VertexCut}}

\newcommand{\cost}{\texttt{cost}}

\author{Greg Bodwin and Luba Samborska\\University of Michigan EECS\\ \texttt{\{bodwin, liubovs\}@umich.edu}}
\date{}

\begin{document}
\title{Improved Upper Bounds for the Directed Flow-Cut Gap\thanks{This work was supported by NSF:AF 2153680}}
\maketitle





\thispagestyle{empty}
\begin{abstract}
We prove that the flow-cut gap for $n$-node directed graphs is at most $n^{1/3 + o(1)}$.
This is the first improvement since a previous upper bound of $\Oish(n^{11/23})$ by Agarwal, Alon, and Charikar (STOC '07), and it narrows the gap to the current lower bound of $\Omegaish(n^{1/7})$ by Chuzhoy and Khanna (JACM '09).
We also show an upper bound on the directed flow-cut gap of $W^{1/2}n^{o(1)}$, where $W$ is the sum of the minimum fractional cut weights.

As auxiliary contributions along the way to our main proof, we prove near-equivalence between the edge and vertex directed flow-cut gaps, and we show that one can assume unit capacities and uniform fractional cut weights without loss of generality when parametrizing by $W$.
\end{abstract}

\clearpage
\setcounter{page}{1}

\section{Introduction}

We study multi-pair extensions of the \emph{Min-Cut Max-Flow (MCMF) Theorem}, a primitive in graph theory and algorithms.
This theorem states that, for any graph $G$ and node pair $(s, t)$, the maximum value of an $s \leadsto t$ flow is equal to the minimum capacity of a cut separating $s$ and $t$.
In our setting, instead of one input pair $(s, t)$, the goal is to analyze flows or cuts among an entire set of demand pairs $P \subseteq V(G) \times V(G)$.
Even in this generalized setting, it remains true that the capacity of the minimum \emph{fractional} multicut that simultaneously separates all pairs in $P$ is equal to the value of the maximum multiflow\footnote{In the maximum multiflow problem, the goal is to simultaneously push any amount of flow between each demand pair (while respecting capacities), and the goal is to maximize the sum of flow values over the demand pairs.} for $P$ (this follows by LP duality; see e.g., \cite{CK09} Section 2 for formal details).
However, it is no longer generally true that the minimum \emph{integral} multicut for $P$ has the same capacity.
An example of an instance on which these quantities differ is shown in Figure \ref{fig:mcmffail}.


\begin{figure}
\begin{center}
\begin{tikzpicture}[every node/.style={font=\small}]

  \node at (0, -3) {Integral Multicut};

  \node[draw, circle, fill=red!15, inner sep=1.5pt] (s1) at (-1.8, -2.2) {$s_1$};
  \node[draw, rectangle, rounded corners=1pt, fill=blue!15, inner sep=2pt] (s2) at ( 0.0, -2.2) {$s_2$};
  \node[draw, diamond, aspect=1.2, fill=green!15, inner sep=1.2pt] (s3) at ( 1.8, -2.2) {$s_3$};

  \node[draw, diamond, aspect=1.2, fill=green!15, inner sep=1.2pt] (t3) at (-1.8,  2.2) {$t_3$};
  \node[draw, rectangle, rounded corners=1pt, fill=blue!15, inner sep=2pt] (t2) at ( 0.0,  2.2) {$t_2$};
  \node[draw, circle, fill=red!15, inner sep=1.5pt] (t1) at ( 1.8,  2.2) {$t_1$};

  \node[fill=black, circle, inner sep=1.6pt] (a) at (-0.8,  0.6) {};
  \node[fill=black, circle, inner sep=1.6pt]  (b) at ( 0.8,  0.6) {};
  \node[fill=black, circle, inner sep=1.6pt] (c) at ( 0.0, -0.6) {};

  \draw[->] (a) to node [midway, red] {\bf \Huge $\times$} (b);
  \draw[->] (b) to node [midway, red] {\bf \Huge $\times$} (c);
  \draw[->] (c) -- (a);
  
  \draw[->] (s1) -- (a);
  \draw[->] (c) to [bend right=50] (t1);

  \draw[->] (s2) -- (c);
  \draw[->] (b) -- (t2);

  \draw[->] (s3) -- (b);
  \draw[->] (a) -- (t3);
\end{tikzpicture} \hspace{0.08\textwidth}%
\begin{tikzpicture}[every node/.style={font=\small}]

  \node at (0, -3) {Fractional Multicut};

  \node[draw, circle, fill=red!15, inner sep=1.5pt] (s1) at (-1.8, -2.2) {$s_1$};
  \node[draw, rectangle, rounded corners=1pt, fill=blue!15, inner sep=2pt] (s2) at ( 0.0, -2.2) {$s_2$};
  \node[draw, diamond, aspect=1.2, fill=green!15, inner sep=1.2pt] (s3) at ( 1.8, -2.2) {$s_3$};

  \node[draw, diamond, aspect=1.2, fill=green!15, inner sep=1.2pt] (t3) at (-1.8,  2.2) {$t_3$};
  \node[draw, rectangle, rounded corners=1pt, fill=blue!15, inner sep=2pt] (t2) at ( 0.0,  2.2) {$t_2$};
  \node[draw, circle, fill=red!15, inner sep=1.5pt] (t1) at ( 1.8,  2.2) {$t_1$};

  \node[fill=black, circle, inner sep=1.6pt] (a) at (-0.8,  0.6) {};
  \node[fill=black, circle, inner sep=1.6pt]  (b) at ( 0.8,  0.6) {};
  \node[fill=black, circle, inner sep=1.6pt] (c) at ( 0.0, -0.6) {};

  \draw[->] (a) to node [midway, above=-0.1cm, red, sloped] {\footnotesize $1/2$} (b);
  \draw[->] (b) to node [midway, below=-0.1cm, red, sloped] {\footnotesize $1/2$} (c);
  \draw[->] (c) to node [midway, below=-0.1cm, red, sloped] {\footnotesize $1/2$} (a);
  
  \draw[->] (s1) -- (a);
  \draw[->] (c) to [bend right=50] (t1);

  \draw[->] (s2) -- (c);
  \draw[->] (b) -- (t2);

  \draw[->] (s3) -- (b);
  \draw[->] (a) -- (t3);
\end{tikzpicture} \hspace{0.08\textwidth}%
\begin{tikzpicture}[every node/.style={font=\small}]
  \node[draw, circle, fill=red!15, inner sep=1.5pt] (s1) at (-1.8, -2.2) {$s_1$};
  \node[draw, rectangle, rounded corners=1pt, fill=blue!15, inner sep=2pt] (s2) at ( 0.0, -2.2) {$s_2$};
  \node[draw, diamond, aspect=1.2, fill=green!15, inner sep=1.2pt] (s3) at ( 1.8, -2.2) {$s_3$};

  \node[draw, diamond, aspect=1.2, fill=green!15, inner sep=1.2pt] (t3) at (-1.8,  2.2) {$t_3$};
  \node[draw, rectangle, rounded corners=1pt, fill=blue!15, inner sep=2pt] (t2) at ( 0.0,  2.2) {$t_2$};
  \node[draw, circle, fill=red!15, inner sep=1.5pt] (t1) at ( 1.8,  2.2) {$t_1$};

  \node[fill=black, circle, inner sep=1.6pt] (a) at (-0.8,  0.6) {};
  \node[fill=black, circle, inner sep=1.6pt]  (b) at ( 0.8,  0.6) {};
  \node[fill=black, circle, inner sep=1.6pt] (c) at ( 0.0, -0.6) {};

  
  \draw[->, red] (s1) to node [left, red] {$\frac{1}{2}$} (a);
  \draw[->, red] (c) to [bend right=50] (t1);

  \draw[->, blue] (s2) to node [left, blue] {$\frac{1}{2}$} (c);
  \draw[->, blue] (b) -- (t2);

  \draw[->, green!50!black] (s3) to node [left, green!50!black] {$\frac{1}{2}$} (b);
  \draw[->, green!50!black] (a) -- (t3);

  \draw [red] (a) to [bend left=5] (b) to [bend right=5] (c);
  \draw [blue] (c) to [bend left=5] (a) to [bend right=5] (b);
  \draw [green!50!black] (b) to [bend left=5] (c) to [bend right=5] (a);

  \node at (0, -3) {Multiflow};

\end{tikzpicture}
\end{center}
\vspace{-0.2cm} 
\caption{\label{fig:mcmffail} An instance $(G, P)$ with unit edge costs/capacities, on which the minimum fractional and integral cuts differ.  A minimum integral cut has cost $2$ (left), while a minimum fractional cut has cost $\frac{3}{2}$ (middle), matching the maximum flow of value $\frac{3}{2}$ (right).}
\end{figure}
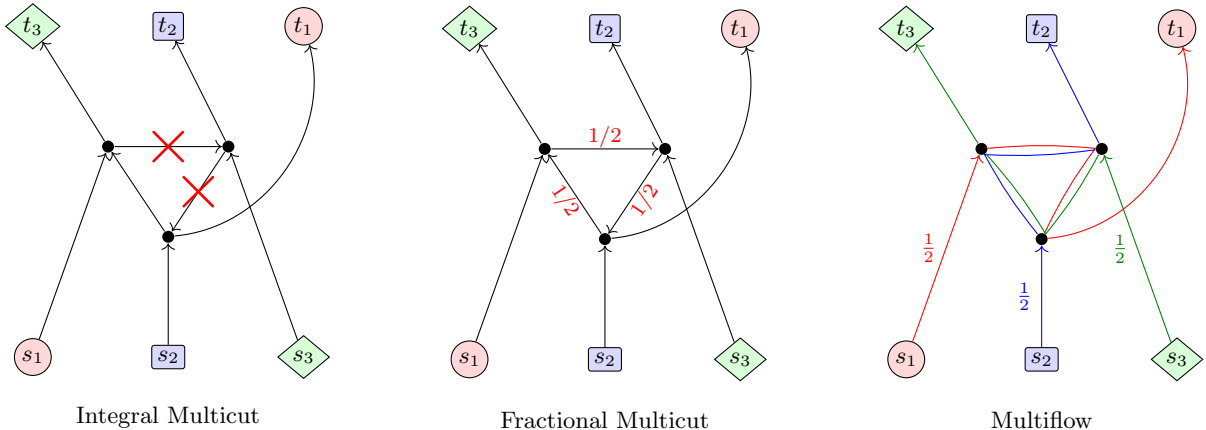

\renewcommand{\arraystretch}{2}
\begin{table}[t]
\begin{center}
\begin{tabular}{ccc}
\makecell{\textbf{Flow-Cut}\\\textbf{Gap}} & \textbf{Notes} & \textbf{Citations}\\
\hline
$O(\log |P|)$ & undirected only & \cite{LR99, GVY96}\\
\hline
\hline
$O(|P|)$ & MCMF Theorem & \cite{FF15}\\
$\Oish(n^{1/2})$ &  & \cite{CKR01}\\
$\min\left\{O(W), O(n^{1/2})\right\}$ & & \cite{Gupta03}\\
$\Oish(n^{11/23})$ & & \cite{AAC07}\\
$\min\left\{\Ohat(n^{1/3}), \Ohat(W^{1/2})\right\}$ &  & \color{red} \textbf{this paper}\\
\hline
\hline
$\Omega(\log |P|)$ &  & \cite{LR99, GVY96}\\
$\Omega(|P|)$ & $|P| \le O\left( \frac{\log n}{\log \log n} \right)$ only & \cite{SSZ04}\\
$\Omegaish(n^{1/7})$ &  \text{directed only} & \cite{CK09} (see also \cite{BHT23})
\end{tabular}
\end{center}
\caption{\label{tbl:fcgprior} Previous work on the flow-cut gap in the setting of $n$-node input graphs, with demand pairs $P$ and sum of fractional cut weights $W$.}
\end{table}

A fundamental question in the area is to determine the maximum possible ratio between these quantities in the multi-pair setting.
This ratio is called the \emph{flow-cut gap}.
The flow-cut gap has been abstracted and studied since at least the late '80s; we summarize some of the progression of bounds in Table \ref{tbl:fcgprior}.
A highlight is the seminal work of Leighton and Rao \cite{LR99}, which essentially closed the problem for \emph{undirected} $n$-node input graphs: there are upper and lower bounds of the form $\Theta(\log n)$.
Unfortunately, this understanding does not extend to the setting of \emph{directed} input graphs.
In the current state-of-the-art, Chuzhoy and Khanna \cite{CK07, CK09} constructed directed $n$-node graphs where the flow-cut gap is $\Omegaish(n^{1/7})$, providing a strong separation from the logarithmic gap known for undirected graphs.
Meanwhile, the best upper bound is $\Oish(n^{11/23})$ by Agarwal, Alon, and Charikar \cite{AAC07}, leaving a significant gap in the exponent.
Despite extensive research attention from the community in the two decades since these results (e.g., \cite{Filtser24, FFIKLMZ25, CSW13, LR13, OS81, KLR19, LMM13, Rao99, GNRS04, SSS17, KS22}; see also the 2014 survey of Chuzhoy \cite{Chuzhoy14}), these bounds have resisted improvement in the general case; better upper and lower bounds are known only for various specific graph classes such as planar, minor-free, bounded-treewidth, etc.

Our main result is the following new upper bound, lowering the exponent in the upper bound to $1/3$.
Here and throughout the paper, we use $\Ohat$ notation to hide $n^{o(1)}$ or $n^{-o(1)}$ factors, where $n$ is the number of nodes in the relevant graph (and $\Omegahat$ notation is used similarly).

\begin{mdframed}[backgroundcolor=green!10]
\begin{theorem} [Main Result] \label{thm:main}
The flow-cut gap for $n$-node directed graphs is at most $\Ohat\left(n^{1/3}\right)$.
This holds in both the edge- and vertex-capacitated settings.
\end{theorem}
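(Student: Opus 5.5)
The plan is to work in the vertex-capacitated setting with unit capacities, and to prove a bound of $\Ohat(\min\{n^{1/3}, W^{1/2}\})$, where $W$ is the fractional cut weight. The two pieces combine through a threshold on how much fractional weight is concentrated. The edge version then follows from the near-equivalence of edge and vertex gaps announced in the abstract, via the standard line-graph reduction, which changes $n$ only polynomially. The reduction is arranged so that the relevant parameter stays within an $n^{o(1)}$ factor.

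We start from an optimal fractional vertex cut $x : V \to [0,1]$, so that every demand path $s \leadsto t$ has $x$-length at least $1$. We normalize as the abstract suggests: by splitting vertices into copies and rounding weights to powers of two, we reduce to the case where all nonzero weights are equal, say $x(v) = 1/L$ on a set $S$ of size $WL$. Each demand path must then contain at least $L$ vertices of $S$. The losses in this step should be $n^{o(1)}$, or at worst polylogarithmic.

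\textbf{Step 1: the $W^{1/2}$ bound.} We build an integral cut of size $\Ohat(\sqrt{W}\cdot W)$. The tool is a region-growing / shortcut argument in the $S$-distance metric. Pick a random threshold and sample about $\sqrt{W}$ random ``centers''. Around each center, grow directed balls in and out of radius $\Theta(L)$ measured in $S$-hops. Then cut the $S$-vertices on a random layer, where each layer has expected size about $|S|/L\cdot$(number of balls).

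A demand path has length at least $L$ in $S$. It is therefore either separated by some cut layer, or it avoids every ball. In the second case its $S$-vertices lie in a sparse residual portion, which we recurse on, or else we cut that portion directly, with its size bounded by $W\cdot\sqrt W$. Balancing the number of centers against the residual mass is what produces the $\sqrt W$ exponent. The $n^{o(1)}$ factor comes from the recursion depth when the scales are chosen geometrically.

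\textbf{Step 2: reducing to small $W$.} If $W \le n^{2/3}$, Step 1 already gives the gap $\Ohat(W^{1/2}) \le \Ohat(n^{1/3})$. Otherwise the fractional solution is heavy, and we follow the Agarwal--Alon--Charikar / Cheriyan--Karloff--Rabani style. Every vertex whose fractional weight is at least $n^{-1/3}$ is deleted integrally at cost at most $n^{1/3}\cdot W$. Demands that still survive then have paths of $x$-length at least $1$ using only light vertices, so such paths contain at least $n^{1/3}$ vertices.

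For these long paths we use a hitting-set / dense-part decomposition in the style of \cite{AAC07}. We sample about $n^{2/3}\log n$ vertices to hit all long paths, and argue that this sample can be charged to $W\cdot n^{1/3}$, since $W \ge n^{2/3}$ here. The mid-range regime, where the surviving instance has small effective $W'$, is handed back to Step 1 with $W' \le n^{2/3}$.

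\textbf{Main obstacle.} The hard part is Step 1: obtaining $\sqrt W$ rather than $W$ (the bound of \cite{Gupta03}) in a directed metric, where balls are asymmetric and a path can leave and re-enter balls. I would first try Gupta's argument with two scales: one handles pairs whose shortest paths are ``concentrated'' near a few high-degree $S$-vertices, and random layered cuts handle the rest. The hope is that a careful charging shows each cut vertex is paid for by $\sqrt W$ units of fractional mass.
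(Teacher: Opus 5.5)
Your high-level architecture matches the paper's: a $W^{1/2}$ bound in the unit-cost, uniform-weight vertex setting, converted to $n^{1/3}$ by deleting heavy vertices, then transferred to edges. But Step 1 carries the entire technical content of the theorem, it is not an argument, and the sketch you give does not work. The dichotomy ``either separated by some cut layer, or it avoids every ball'' is false for directed balls. Suppose a center $c$ has an edge to every vertex of a long demand path, but no vertex of that path can reach $c$. Then the whole path sits inside $c$'s out-ball at out-distance $0$, so a random layer at positive radius misses it, and the path is disjoint from $c$'s in-ball. The ``sparse residual portion'' is never defined, and nothing shows its mass decreases. Moreover, $\sqrt{W}$ random centers only hit paths with roughly $L\sqrt{W}$ weighted vertices. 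Paths with between $L$ and $L\sqrt{W}$ such vertices are left unaddressed, and that is exactly where center-sampling approaches stall.

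The missing idea is the paper's:
\begin{itemize}
\item Keep Gupta's random level cuts, but process demand pairs in uniformly random order.
\item Freeze the fractional cuts within $O(\log n/\log\log n)$ epochs, recomputing capped min cuts and starting a new epoch when $\mass$ drops by a $\log n$ factor.
\item Bound each epoch's length through $\val(u,v)=\sum_{(s,t)\in P}\val_{s,t}(u,v)$: with high probability a connected pair is disconnected within $O(|P|\log n/\val(u,v))$ rounds.
\end{itemize}
It then remains to exhibit a connected pair with $\val(u,v)\ge \mass\cdot\Omegahat((L/n)^{3/2})$. The paper first builds a path witness: per-demand vertex-disjoint paths of $\Omegahat(L)$ nodes in the transitive closure of $G\setminus X$, numbering $\Omegahat(\mass)$ because the epoch has not ended. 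It fixes a base path with a $2$-hop shortcut set $S$ and runs a charging scheme, with two outcomes:
\begin{itemize}
\item Many charges land on $S$, giving $\val\ge\Omegahat(\sigma d/L)$.
\item The scheme halts early and yields a triangle structure: a node $u$ in the prefixes of $\Omegahat(L^2d/(\sigma n))$ witness paths whose suffixes meet the base path. This forces one pair $(u,v_1)$ or $(v_1,m)$ to carry $\Omegahat(1)$ value for that many distinct demands.
\end{itemize}
Taking $\sigma=L^{3/2}n^{-1/2}$ gives $|X|\le\Ohat((n/L)^{3/2})$, i.e.\ gap $\Ohat(W^{1/2})$.

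Your other two steps also need repair. In Step 2, cutting a sample of $n^{2/3}\log n$ vertices hits only a fixed polynomial family of paths (e.g.\ shortest ones), not every long $s\leadsto t$ path. In a complete layered DAG with $n^{1/3}$ layers of $n^{2/3}$ vertices, most long paths avoid the sample, so it is not a cut. The case split is also unnecessary. Delete every vertex with $w(v)\ge n^{-1/3}/4$, at cost $O(n^{1/3})\langle\cost,w\rangle$. Then any surviving $a\leadsto b$ path with interior $v_1,\dots,v_k$ has $2w$-distance at least $1$ between $v_1$ and $v_k$. The survivors carry total $2w$-weight below $n^{2/3}/2$, so the $W$-bound applies directly with $W'=O(n^{2/3})$.

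For edges, the line-graph (or subdivision) reduction preserves $W$ but turns $n$ into $m$, which can be $n^2$, so the vertex $n$-bound yields only $n^{2/3}$. Deleting edges of weight at least $\tau$ leaves residual weight up to $m\tau$ rather than $n\tau$, so you cannot route through the $W$-bound either. You need an edge-to-vertex reduction that keeps the vertex count at $n^{1+o(1)}$ and $W$ within $O(1)$. The paper's construction replaces each vertex $v$ by a biclique $L_v\times R_v$ of $O(\log n)$ vertices labelled with weights $1,\tfrac12,\dots,\tfrac1n,0$. Each edge $(u,v)$ maps to an edge between the copies in $R_u$ and $L_v$ whose label is its rounded weight, so that $n'=O(n\log n)$ and $w'(V')=O(W)$.
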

\end{mdframed}

For a more formal statement of this theorem, see Section \ref{sec:formalisms}.
Our proof is constructive, in that we provide an algorithm that runs in randomized polynomial time, and (deterministically) produces a valid cut that (with high probability) satisfies the size/cost bound implied by this theorem.

\subsection{Parametrizing the Flow-Cut Gap by Total Weight}

The technical part of this paper will focus on a different parametrization of the flow-cut gap, by the \emph{total fractional cut weight $W$} of the instance.
That is: given a directed graph $G$ and set of demand pairs $P$, letting $w$ be a minimum fractional multicut for $P$, the total fractional cut weight $W$ is defined to be the sum of the cut weights used by $w$.
We show the following bound:
\begin{mdframed}[backgroundcolor=green!10]
\begin{theorem} \label{thm:mainweight}
The flow-cut gap for directed graphs with total fractional cut weight $W$ is $\Ohat(W^{1/2})$.
This holds in both the edge- and vertex-capacitated settings.
\end{theorem}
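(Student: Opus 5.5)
We will prove that on any instance with total fractional weight $W$ there is an integral multicut of cost $\Ohat(W^{1/2})$ times the fractional optimum. We first normalize the instance, following the auxiliary reductions announced in the abstract. Using the near-equivalence of edge and vertex gaps, we work in the vertex-capacitated setting. By scaling and rounding, we may assume unit capacities and that every vertex with nonzero fractional weight has weight exactly $1/k$ for some integer $k$. Then the fractional optimum is $W$ and the support $S$ has $|S| = kW$. We also assume every demand pair has fractional distance at least $1$. After scaling, this means every $s \leadsto t$ path for $(s,t)\in P$ passes through at least $k$ vertices of $S$. The target is therefore an integral vertex cut of size $\Ohat(W^{3/2})$.

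The core is a two-regime rounding based on distance layers, in the style of region growing. Fix a threshold $\Delta$, to be balanced against $W$. In the first regime we use random thresholds. Pick a uniformly random radius $r \in [0,1]$ and, for sources $s$, cut the vertices $v \in S$ whose fractional distance interval from $s$ straddles $r$. Repeating this with $O(\log n)$ scales and taking the union separates every pair. Pairs whose separating structure is "short" are charged to the fractional weight with an overhead of $\Ohat(\Delta)$. In the second regime, pairs that survive have many internally disjoint-ish long paths, and we handle them by sampling. Sample each vertex of $S$ into the cut with probability about $\log n/(k/\Delta)$ to hit every path that uses at least $k/\Delta$ support vertices among a polynomial family of canonical paths. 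This costs $\Ohat(|S|\Delta/k) = \Ohat(W\Delta)$. Pairs not handled this way have an $s\leadsto t$ structure where distances advance in large jumps, which forces few "levels." A counting argument over the anti-chain structure of those levels then bounds their cut cost by $\Ohat(W^2/\Delta)$. Balancing gives $\Delta = W^{1/2}$ and total cost $\Ohat(W^{3/2}) = \Ohat(W^{1/2})\cdot W$.

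The main obstacle is the last counting step: showing that the residual pairs can be cut with $\Ohat(W^2/\Delta)$ vertices, i.e. that few surviving pairs need expensive cuts. I would first try a charging argument. Each residual pair $(s,t)$ is cut by a set of vertices at a common fractional distance from $s$. Distinct sources sharing such level sets are merged, using transitivity in the DAG obtained by contracting strongly connected components of $G - S$, with $n^{o(1)}$ loss from recursing on $O(\log n)$ geometric weight scales. If direct charging fails, I would fall back on an iterative scheme: repeatedly find a sparse cut around a heavy vertex, delete it, and recurse. The recursion depth contributes only the $n^{o(1)}$ factor hidden in $\Ohat$.
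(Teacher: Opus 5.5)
Your normalization (vertex setting, unit costs, uniform weights $1/k$, target cut of size $\Ohat(W^{3/2})$) agrees with the paper's reductions. Those reductions are vertex-splitting constructions (Section~\ref{sec:reductions}), not mere scaling and rounding. The core argument, however, has genuine gaps.

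\textbf{Your second regime does not produce a cut.} Sampling support vertices at rate $p\approx\Delta\log n/k$ hits every path in a fixed polynomial-size family. But separating $(s,t)$ requires hitting \emph{all} $s\leadsto t$ paths, and there can be exponentially many. Suppose $s$ and $t$ are joined through $k$ layers of width $w$, with complete bipartite connections between consecutive layers and uniform weights $1/k$. Then every vertex cut must contain a full layer, and your sample contains one with probability at most $kp^{w}$, which is negligible for large $w$. So this regime certifies no pair as handled. Nothing then supports your claimed structure of the ``residual'' pairs (large jumps, few levels, anti-chains).

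\textbf{The cost bounds are asserted, not derived.} A random level cut from a single source already costs up to $W$. Controlling the total over all sources requires bounding how often each vertex is eligible to be cut, and that amortization is the whole problem. Gupta's pairwise-disconnection argument (Theorem~\ref{thm:guptalem}) bounds it by $O(W)$ rounds per vertex, giving $O(W^2)$ in total. So the residual bound $\Ohat(W^2/\Delta)$ you need is exactly the factor-$W^{1/2}$ improvement the theorem asserts. The step you flag as ``the main obstacle'' is therefore the entire theorem. Your tools for it are unquantified. The fallback (cut sparse cuts around heavy vertices, then recurse) is region growing, which breaks in directed graphs: once an out-ball's boundary is cut, the ball cannot be discarded, because pairs inside it, or paths entering it from outside, can still be connected.

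\textbf{The missing idea is the paper's triple-wise analysis under a random processing order.}
\begin{enumerate}
\item The paper cuts demand pairs in uniformly random order, using random level cuts for weights frozen within an epoch. This makes the expected cost per round $\mass/|P|$ (Lemma~\ref{lem:epochcost}).
\item It bounds the number of rounds per epoch by exhibiting a still-connected pair with $\val(u,v)\ge\mass\cdot\Omegahat(L/n)^{3/2}$ (Lemma~\ref{lem:vallb}). Random order disconnects such a pair quickly (Lemma~\ref{lem:valub}).
\item That pair is found using a path witness (Lemma~\ref{lem:pathwitness}), a base path $\pi_b$ of large suffix degree, an $\Oish(L)$-size 2-hop shortcut set on $\pi_b$, and a charging scheme with parameter $\sigma$. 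This gives a dichotomy:
\begin{itemize}
\item either some shortcut edge accumulates $\val\ge\Omegahat(\sigma d/L)$;
\item or the scheme halts early, and a single node $u$ lies in the prefixes of $\Omegahat(L^2d/(\sigma n))$ paths whose suffixes meet $\pi_b$. The $\val$ triangle inequality (Lemma~\ref{lem:valtrineq}) then concentrates their value on one of $O(\log n)$ candidate pairs.
\end{itemize}
\item Choosing $\sigma=L^{3/2}n^{-1/2}$ balances the two cases.
\end{enumerate}
Without an analogue of this dichotomy, your proposal does not establish the $W^{1/2}$ bound.
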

\end{mdframed}

(See Section \ref{sec:formalisms} for a more precise version of this theorem statement.)
For comparison to prior work, the argument of Gupta \cite{Gupta03} implies a bound of $O(W)$.
It might also be possible to extract similar bounds parametrized by $W$ from the arguments in other prior work \cite{CKR01, AAC07}, but this seems less clear.

We have three reasons to highlight this parametrization by $W$.
First, on philosophical grounds, we should expect flow-cut gap bounds to be invariant to the addition of isolated nodes, or more generally any nodes that do not affect connectivity among the demand pairs.
These nodes would affect the parameter $n$ but not $W$, suggesting that the parameter $W$ might more accurately reflect the hardness of an instance.
Second, in the vertex setting, bounds for $W$ \emph{imply} bounds for $n$ via the following well-known reduction:
\begin{mdframed}[backgroundcolor=green!10]
\begin{theorem} [Folklore; see Theorem \ref{thm:vfgself}] \label{thm:introwton}
Suppose that the vertex flow-cut gap for $n$-node, $W$-weight directed graphs is at most $\Ohat(W^c)$, for some constant $c$.
Then it is also at most $\Ohat(n^{\frac{c}{1+c}})$.
\end{theorem}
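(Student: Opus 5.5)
The plan is a threshold argument on the fractional vertex weights. Fix an instance $(G,P)$ on $n$ nodes, and let $w$ be a minimum fractional vertex multicut. By scaling, assume unit vertex costs; the weighted case reduces to this by the standard trick of replacing each vertex by a blob of parallel copies, polynomially many of them. Then $W = \sum_v w(v) \le n$. Choose a threshold $\tau$, and let $H = \{v : w(v) \ge \tau\}$ be the set of heavy vertices.

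The first step is to bound the number of heavy vertices by Markov: $|H| \le W/\tau$. We place all of $H$ into the integral cut.

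The second step handles the light vertices. Delete $H$ and consider the residual instance $(G - H, P')$, where $P'$ is the set of pairs still connected. The restriction of $w$ is still a feasible fractional cut, because every $s\leadsto t$ path avoiding $H$ keeps its full weight of at least $1$. Now rescale: let $w' = \min(1, w/\tau)$, which is defined on light vertices and satisfies $w' < 1$ there. Feasibility now requires every path to have length at least $1/\tau$ under $w'$, not merely at least $1$, and that extra room is exactly what the hypothesis can exploit.

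The cleanest way to use it is to reformulate via a subdivision. Replace each light vertex $v$ by a directed path of $\lceil n\, w(v)\rceil$ copies, or more carefully round weights to multiples of $\tau$. In the resulting graph, every demand path must hit many vertices, so the uniform weight $\tau$ on each copy forms a fractional cut. I expect this part to be the messy one.

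A simpler route that avoids subdivision is to note that the residual instance has $n$ nodes, all of weight below $\tau$. We want the effective $W$ to be what we pay for. The hypothesis applied directly to the residual gives an integral cut of cost $\Ohat(W^{1+c})$, and plain thresholding alone does not improve on this. So the real content lies in the case $W$ large relative to $n$.

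The final step is to balance the two regimes.
\begin{itemize}
\item If $W \le n^{1/(1+c)}$, apply the hypothesis directly. The gap is $\Ohat(W^c) \le \Ohat(n^{c/(1+c)})$.
\item If $W \ge n^{1/(1+c)}$, take all $n$ vertices as the cut. The gap is then $n/W \le n^{1 - 1/(1+c)} = n^{c/(1+c)}$.
\end{itemize}
Either way the ratio is $\Ohat(n^{c/(1+c)})$, which proves the theorem; in particular, this simple case split suffices and no thresholding is needed.

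The main obstacle is the weighted case. There, "take all vertices" costs the total capacity rather than $n$, so I would first reduce to unit capacities. To do this, I would round capacities to powers of two, discard vertices of negligible capacity, which cost at most $1/n$ of the optimum in total, and treat each capacity scale separately, losing only polylogarithmic factors absorbed into $\Ohat$. Alternatively, I would invoke the paper's promised reduction to unit capacities when parametrizing by $W$.
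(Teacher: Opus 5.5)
Your closing case split is valid only for unit costs. There, ``take every vertex'' costs $n$, so the gap is at most $\min\{\Ohat(W^c), n/W\} \le \Ohat(n^{c/(1+c)})$, which in that setting is slicker than thresholding. But the statement concerns the capacitated gap $\vfg(n,W)$ with arbitrary vertex costs; that is the version needed to get the capacitated vertex part of Theorem~\ref{thm:main} from Theorem~\ref{thm:mainweight}. With costs, the cut consisting of all positive-weight vertices is not bounded by $\langle \cost, w\rangle \cdot n/W$, or by anything depending only on $n$ and $W$: a single expensive vertex of tiny weight already makes the ratio unbounded.

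Neither of your repairs closes this. Replacing each vertex by about $\cost(v)$ parallel copies, or invoking the paper's unit-cost reduction of Section~\ref{sec:vufgtovfg}, preserves $W$ up to constants (after normalizing costs) but not $n$. The new instance has $n' = \sum_v \lceil \cost(v) \rceil$ nodes, polynomially many at best ($O(n^2)$ in the paper's version), so your case split yields only $\Ohat((n')^{c/(1+c)})$, e.g.\ $\Ohat(n^{2c/(1+c)})$. Put differently, ``take everything'' in the copied instance again costs the total capacity. This loss in $n$ is why the paper removes capacities only when parametrizing by $W$. And ``treat each capacity scale separately'' does not yield one valid cut. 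A demand path can collect its unit of fractional weight from many scales, and different $s \leadsto t$ paths can be dominated by different scales, so there is no per-scale flow-cut instance whose solutions combine. The natural salvage is to cut wholesale the scales of large total weight and apply the hypothesis, with costs, to the union of the rest. But that already uses the residual-weight observation below, at which point the scales are unnecessary.

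The missing idea is the thresholding you began and then dropped; the sentence ``plain thresholding alone does not improve on this'' is where the argument goes wrong. After deleting $H = \{v : w(v) \ge \tau\}$, the residual instance has total weight $w(V \setminus H) < n\tau$, not $W$. So the hypothesis, applied directly with arbitrary costs, gives a cut of cost $\langle \cost, w\rangle \cdot \Ohat((n\tau)^c)$. Meanwhile $\cost(H) \le \sum_v \cost(v)\, w(v)/\tau = \langle \cost, w \rangle/\tau$, by a per-vertex Markov bound valid for any costs. Taking $\tau \approx n^{-c/(1+c)}$ balances the two terms. This is the paper's proof, and it never needs unit costs.

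You also glossed over one detail. Vertex distance ignores endpoints, so putting $H$ in the cut does not separate a demand pair with an endpoint in $H$. Yet such pairs disappear from $G - H$, so your claim that the restricted $w$ is still feasible for the residual instance does not cover them. The paper handles this as follows. For each $s \leadsto t$ path $(s, v_1, \dots, v_k, t)$ with no interior vertex in $H$, it passes to the light interior endpoints $v_1, v_k$, which satisfy $\vdist_w(v_1, v_k) \ge 1 - 2\tau \ge 1/2$. It then runs the second stage on $G - H$ with weights $2w$ (with $\tau \le 1/4$), whose total weight is still $O(n\tau)$.
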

\end{mdframed}

This theorem statement is essentially an abstraction of a standard technique in the area, where one cuts all nodes of sufficiently high weight as a preprocessing step.
Through this reduction, the vertex version of Theorem \ref{thm:mainweight} \emph{implies} the vertex version of Theorem \ref{thm:main}.
We note that a converse reduction, using a flow-cut gap bound depending on $n$ to imply one depending on $W$, does not seem to be known or implied by current techniques.

Third, as a new technical result, we develop a useful simplification of the problem which seems to be available only when parametrizing by $W$.
It is easiest to understand the flow-cut gap in the special case of \emph{uncapacitated} input instances (where all nodes/edges have capacity $1$), and also on instances of \emph{uniform weight} (where all nodes/edges have the same fractional cut weight as each other).
When parametrizing by $n$ these simplifications seem to lose generality, and so prior work has had to contend with variable capacities or weights.\footnote{In the edge multigraph setting, it is trivial to remove capacities, by scaling such that capacities are integers and then replacing each edge with a number of parallel edges equal to its capacity.  However, (1) this reduction can blow up the total cut weight $W$, and (2) the analogous reduction for the vertex setting can blow up the number of nodes $n$.  It is also not known how to reduce to uniform weights in either setting, when parametrizing by $n$.}
However, we prove that these simplifications can be made without loss of generality when parametrizing on $W$, in the vertex setting:

\begin{mdframed}[backgroundcolor=green!10]
\begin{theorem} [See Section \ref{sec:reductions}] \label{thm:utwreductions}
Suppose that the directed vertex flow-cut gap is at most $\Ohat(W^c)$, for some constant $c$, in the special case of uncapacitated input graphs in which all nodes have the same weight as each other.
Then a bound of $\Ohat(W^c)$ also holds for the directed vertex flow-cut gap in the general case (with arbitrary capacities and weights).
\end{theorem}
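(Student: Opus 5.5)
We give two reductions in sequence: first we remove capacities while keeping $W$ essentially unchanged, and then we make the weights uniform.

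\textbf{Setup and rounding.} Fix an instance $(G,P)$ with vertex capacities $c$ and a minimum fractional multicut $w$ of total weight $W = \sum_v w(v)$. Here $w(v)\ge 0$, and every $s\leadsto t$ path for $(s,t)\in P$ has $w$-length at least $1$. The fractional cost is $\sum_v c(v)w(v)$. We first discard every node with $w(v) < 1/(2n)$; the total loss along any path is then at most $1/2$, so doubling the remaining weights restores feasibility at constant-factor cost. Next we round each remaining weight down to a power of $2$ and double again. After these steps there are $O(\log n)=n^{o(1)}$ weight classes, with weights ranging from $1/n$ up to $O(1)$. Any node with $w(v)\ge 1/2$ can simply be cut integrally, at cost at most $2c(v)w(v)$, so we may assume all weights are below $1/2$.

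\textbf{Removing capacities.} To make the graph uncapacitated, we replace each node $v$ by a collection of $k_v$ parallel copies. Each copy inherits all in- and out-edges of $v$, and $k_v$ is chosen proportional to $c(v)$, rescaled and rounded. A cut in the new graph must either remove all copies of $v$, or leave a path through $v$ available. Hence integral cuts correspond to each other with cost scaled by the common factor, and so do fractional solutions, since we may put weight $w(v)$ on every copy. The danger is that $W$ blows up. The key choice is to scale so that $\sum_v k_v w(v) = O(W)$. We achieve this by normalizing so that the fractional cost $\sum_v c(v)w(v)$ equals $W$, rounding each $k_v$ up to at least $1$, and observing that the rounding adds at most $\sum_v w(v) = W$. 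A subtlety is that the new minimum fractional multicut might not be the copied one. However, its weight is at most the copied one's, and the flow-cut gap bound is monotone in $W$, so this is fine.

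\textbf{Uniform weights.} Now all capacities are $1$, and the weights lie in $O(\log n)$ dyadic classes $2^{-i}$. We make them uniform by subdividing: node $v$ of weight $2^{-i}$ becomes a directed path of $2^{i_{\max}-i}$ unit-capacity nodes, each of weight $2^{-i_{\max}}$. This preserves path lengths, and therefore fractional feasibility, and it preserves the total weight. Integrally, cutting any single node of the chain cuts $v$, at cost $1$, so integral cut costs are preserved as well. We then apply the hypothesized $\Ohat(W^c)$ bound to the resulting uncapacitated, uniform-weight instance and pull the integral cut back to the original instance.

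\textbf{Main obstacle.} The hardest step is the bookkeeping in the capacity removal. The uncapacitated instance's fractional cost now equals the uniform weight times the number of nodes, and we must check that the gap is measured as integral cost over fractional cost consistently before and after both transformations. In particular, the subdivision multiplies the node count by up to $n$, but it does not change $W$, which is exactly why parametrizing by $W$ is the right choice here. The polylogarithmic losses from rounding into weight classes are all absorbed into the $n^{o(1)}$ factor.
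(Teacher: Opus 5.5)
Your proposal follows the paper's route. Costs are removed by splitting each node into $\lceil \cost(v)\rceil$ parallel copies, after normalizing so that $\langle \cost, w\rangle$ is comparable to $W$ and rounding copy counts up so that $W$ grows by only a constant factor. Weights are then made uniform by subdividing each node into a directed path. The one real difference is the subdivision grain: you subdivide at the minimum dyadic weight, which keeps path lengths exact and gives $O(nW)$ nodes, whereas the paper uses $\lceil w(v)/(W/n)\rceil$ copies of the average weight, which at most doubles both $W$ and the node count.

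Do make explicit that low-weight nodes are contracted, joining each in-neighbor to each out-neighbor as the paper does, rather than deleted or merely zeroed. Deletion destroys paths the cut must still hit. Surviving zero-weight nodes have no valid place in the uniform-weight instance, and when their cost is large they would produce unboundedly many copies, losing control of the hidden $n^{o(1)}$ factors.
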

\end{mdframed}

The previous theorem only applies in the vertex setting; it remains conceivable that the directed edge flow-cut gap is truly different in the general vs.\ uncapacitated/uniform weight settings, even when parametrized on $W$.
However, we show an additional reduction that lifts bounds from the vertex setting to the edge setting, and so it is enough for our purposes to focus on the vertex setting where these reductions apply.
The vertex-to-edge direction of the following reduction is folklore, while the edge-to-vertex direction is new.

\begin{mdframed} [backgroundcolor=green!10]
\begin{theorem} [See Theorems \ref{thm:edgetovert} and \ref{thm:vfgleefg}] \label{thm:edgetovertintro}
The directed edge and vertex flow-cut gaps are equivalent, up to factors of $O(\log n)$ in the number of nodes $n$ and $O(1)$ in the total weight $W$.
\end{theorem}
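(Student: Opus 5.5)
We prove the two directions separately, by explicit instance transformations that map fractional cuts and integral cuts back and forth with controlled loss.

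\emph{Vertex to edge (folklore direction).} Given an edge instance $(G,P)$, we build the vertex instance $G'$ by subdividing every edge $(u,v)$ with a new node $x_{uv}$ of capacity $c(u,v)$. Every original node gets a prohibitively large capacity, or equivalently is replaced by a large bundle of parallel copies; to keep $W$ unchanged we give original nodes infinite capacity, which is standard. Demand pairs stay on the original nodes. A fractional edge multicut $w$ becomes a fractional vertex multicut of the same cost and the same total weight, by placing weight $w(u,v)$ on $x_{uv}$. Conversely, an integral vertex cut in $G'$ that avoids the infinite-capacity nodes is an integral edge cut of the same cost. Hence the edge gap on $(G,P)$ is at most the vertex gap on $G'$. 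The node count becomes $n+m\le n^2$, so an $\Ohat(n^{c})$ bound for vertex instances turns into a polynomially worse bound in $n$. We therefore route the $n$-parametrized statements through $W$, where nothing changes, and then use Theorem \ref{thm:introwton}. Where a statement is directly in $n$, we instead note that only $O(n)$ edges can carry nonnegligible fractional weight after the standard preprocessing.

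\emph{Edge to vertex (the new direction).} Given a vertex instance $(G,P)$ with capacities $c(v)$, we split each node $v$ into $v_{in}\to v_{out}$ with an edge of capacity $c(v)$, and give original edges infinite capacity. This keeps $n$ within a factor of $2$ and preserves fractional and integral cuts exactly, except that endpoints of demand pairs can themselves be cut. We handle this by attaching demands to $v_{in}$ for sinks and to $v_{out}$ for sources, and by including the split edge so that cutting terminal nodes is still allowed. The difficulty is that edge flow-cut gap bounds, as stated, apply to arbitrary capacities, while infinite capacities on original edges may not be permitted or may break a parametrization such as uncapacitated edges. We therefore replace each original edge $(u,v)$ by a bundle of $\deg$-many parallel paths, or simply give it capacity $\sum_x c(x)$, which exceeds any useful integral cut. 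An edge cut that uses such an edge can then be rerouted: replace the cut edge $(u_{out},v_{in})$ by the split edge of $u$ or of $v$. This costs at most the capacity of that node, and we charge it using a random-threshold rounding.

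This rerouting step is where I expect the $O(\log n)$ loss to arise. When the integral edge cut produced by the edge-gap bound uses expensive original edges, we instead take a region-growing argument on the fractional vertex solution, as in \cite{GVY96}, to convert it into a vertex cut at a logarithmic cost. The $O(1)$ factor in $W$ comes from the fractional weight placed on split edges, which equals the vertex weight up to the constant needed to scale so that every pair has distance at least $1$ after terminal splitting.
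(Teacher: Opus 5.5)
Your reduction from edge instances to vertex instances, which is what bounds $\efg$ by $\vfg$, has a genuine gap. This is the direction the paper treats as new (Theorem \ref{thm:edgetovert}), not the folklore one. Subdividing every edge gives only $\efg(n,W)\le\vfg(n+m,W)$. Since $n+m$ can be $\Theta(n^2)$, the node count is not preserved up to $O(\log n)$, and neither of your patches repairs this.

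\emph{Routing through $W$.} Applying Theorem \ref{thm:introwton} to the subdivided instance gives $(n+m)^{c/(1+c)}$, e.g.\ $n^{2/3}$ instead of $n^{1/3}$ for $c=\frac12$. The heavy-element cutting behind that theorem bounds the leftover weight by (number of elements) $\times$ (threshold), which is $m$ times the threshold for edges. It also only addresses bounds of the form $W^c$, not the two-parameter comparison the statement asserts.

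\emph{Only $O(n)$ edges carry non-negligible weight.} This is false. The standard preprocessing (zero out weights below $\frac{1}{2n}$, double the rest) leaves every surviving edge at weight $\ge\frac{1}{2n}$, but the fractional cut is arbitrary. For example, four layers of $n/4$ nodes joined by complete bipartite graphs, with weight $\frac13$ on every edge, has $\Theta(n^2)$ edges of constant weight.

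\emph{The missing idea} is to share vertices among edges of similar weight. Replace each node $v$ by $O(\log n)$ vertices, one per (incoming/outgoing, weight class). The classes are powers of $\frac12$ obtained by rounding edge weights up; weights $\le\frac{1}{2n}$ go to a weight-$0$ class and weights $\ge 1$ to class $1$. Join $v$'s incoming-class vertices to its outgoing-class vertices by a biclique. Give each class vertex weight equal to its label and cost equal to the total cost of its edges. Then:
\begin{itemize}
\item Cutting a class vertex cuts all of its edges at exactly their total cost.
\item Every edge in a class has weight within a factor $2$ of the label, so $\langle\cost',w'\rangle=O(\langle\cost,w\rangle)$ and $w'(V')=O(W)$.
\item A simple path of weight $\ge 1$ has at most $n$ edges, so its tiny edges total at most $\frac12$. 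Its image meets two class vertices of label $\ge w(e)$ per non-tiny edge, hence has weight $\ge 2(1-\frac12)=1$.
\end{itemize}
This gives $O(n\log n)$ nodes, which is exactly where the $\log n$ in the statement comes from.

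For the reverse direction ($\vfg\le\efg$), the in/out split is the right, standard construction (Theorem \ref{thm:vfgleefg}). Attaching sources at $v_{out}$ and sinks at $v_{in}$ correctly ignores endpoints. It loses a factor $2$ in $n$ and nothing in $W$. The rest of that paragraph should be dropped rather than repaired. $\efg$ is defined for arbitrary costs, and the original edges carry weight $0$. So giving them cost above $\langle\cost,w\rangle\cdot\efg(2n,W)$ leaves $\langle\cost',w'\rangle$ unchanged and forces any cut meeting the bound to use split edges only; no rerouting is needed.

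This matters because the rerouting you describe can fail. Replacing $(u_{out},v_{in})$ by the split edge of $u$ does not cut paths that start at $u_{out}$, i.e.\ when $u$ is the source of a demand pair, and symmetrically for $v$. Your fallback is impossible: a region-growing rounding of a directed fractional vertex multicut at logarithmic loss would give an $O(\log n)$ directed flow-cut gap, contradicting the $\Omegaish(n^{1/7})$ lower bound of Chuzhoy and Khanna. So the $O(\log n)$ loss does not arise in this direction at all. Likewise, the $O(1)$ loss in $W$ comes from the weight rounding in the other direction, not from the split edges.
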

\end{mdframed}


\subsection{Relationship to Nearby Problems}

We next overview some nearby problems that are known to be related to the flow-cut gap.

\paragraph{Multicut Approximation Algorithms.}
Given a multicut instance $G = (V, E, \cost), P$, it is NP-hard to find an (edge or vertex) integral cut $X$ of minimum $\cost(X)$.
In fact, various hardness of approximation results are known \cite{CKKRS06, CK09}; the highest lower bound is $\Omega( 2^{\log^{1-\eps} n})$, which assumes $\textsc{np} \not \subseteq \textsc{zpp}$.


The directed flow-cut gap provides a natural strategy for approximation algorithms, by computing a fractional cut $w$ with an LP and then rounding to $X$.
Most of the prior work on upper bounds for the problem work this way (a notable exception is \cite{KKN05}).
Indeed, since our results are algorithmic, they imply:
\begin{mdframed}[backgroundcolor=green!10]
\begin{corollary}
There is a randomized polynomial-time algorithm for minimum-cost edge or vertex multicut that achieves an approximation ratio of $\min\left\{\Ohat(n^{1/3}), \Ohat(W^{1/2})\right\}$.
\end{corollary}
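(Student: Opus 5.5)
The corollary follows by reading Theorems \ref{thm:main} and \ref{thm:mainweight} algorithmically. The plan has three steps: solve the multicut LP, round the fractional solution with the constructive procedure behind those theorems, and compare the resulting cost to the LP optimum.

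First, given an instance $G = (V, E, \cost), P$ in either the edge or the vertex setting, we solve the standard fractional multicut LP in polynomial time. The LP asks for weights $w$ on edges (resp.\ vertices) such that every $s \leadsto t$ path with $(s,t) \in P$ has total weight at least $1$, while minimizing $\sum \cost \cdot w$. Although the LP has exponentially many path constraints, it can be solved with the ellipsoid method and a shortest-path separation oracle. Alternatively, one can solve the polynomial-size compact formulation that uses distance labels for each source. The optimum $\mathrm{LP}^*$ is a lower bound on the minimum integral multicut cost $\mathrm{OPT}$, since every integral cut is a feasible fractional one.

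Second, we feed $w$ into the rounding algorithm that proves the main theorems. As noted after Theorem \ref{thm:main}, that procedure runs in randomized polynomial time and always outputs a valid multicut $X$. With high probability, $\cost(X) \le \Ohat(n^{1/3}) \cdot \mathrm{LP}^*$, and likewise $\cost(X) \le \Ohat(W^{1/2}) \cdot \mathrm{LP}^*$, where $W$ is the total fractional weight of the computed $w$.

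This step has two points to check. The first is that the reductions used along the way are efficient: the edge/vertex equivalence of Theorem \ref{thm:edgetovertintro}, the reduction to uncapacitated uniform-weight instances of Theorem \ref{thm:utwreductions}, and the high-weight preprocessing of Theorem \ref{thm:introwton}. Each needs to be a polynomial-time transformation of the instance and of $w$, and each must map cuts back with only the stated loss. The second is that all blowups in instance size stay polynomial, so that the $n^{o(1)}$ factors remain $n^{o(1)}$ in the original $n$.

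Third, we obtain the minimum of the two guarantees by running both roundings (or reusing one run) and returning the cheaper valid cut, so its cost is at most $\min\{\Ohat(n^{1/3}), \Ohat(W^{1/2})\} \cdot \mathrm{LP}^* \le \min\{\cdot\} \cdot \mathrm{OPT}$. If a guarantee only in expectation is wanted, repeating polynomially many times and keeping the cheapest cut amplifies the success probability.

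The main obstacle I expect is the polynomial-time check on the reductions in the second step, in particular the capacity- and weight-uniformization of Theorem \ref{thm:utwreductions}. A naive version could create exponentially many copies of vertices when costs or weights are large. I would first round weights to powers of $(1+\eps)$ and discard negligible-weight elements, to bound the number of classes. I would then argue that the uncapacitated construction needs only a polynomial number of copies after scaling, or is never built explicitly and is only simulated.
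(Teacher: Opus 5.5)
Your proposal is correct and follows the paper's own reasoning: solve the fractional multicut LP, round $w$ with the randomized polynomial-time procedure behind Theorems \ref{thm:main} and \ref{thm:mainweight}, and compare against $\langle \cost, w\rangle = \mathrm{LP}^* \le \mathrm{OPT}$, keeping the cheaper of the two roundings to get the minimum. The polynomial-size concern you flag is already handled in the paper: the unit-cost reduction contracts vertices of weight at most $\frac{1}{2n}$, caps weights at $1$, and normalizes costs, which bounds every $\lceil \cost(v)\rceil$ polynomially and gives $n' = O(n^2)$, while the uniform-weight reduction only has $n' \le 2n$.
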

\end{mdframed}

Like the flow-cut gap, the previous upper bound for this approximation ratio was $\Oish(n^{11/23})$ by Agarwal, Alon, and Charikar \cite{AAC07}.

\paragraph{The Sparsest Cut Gap.}
The \emph{sparsest multicut} computational problem is a relaxation of minimum multicut.
Here, the cut $X$ does not necessarily have to separate \emph{all} demand pairs.
The goal is to minimize the ratio of $\cost(X)$ to the number of demand pairs separated.
Its fractional version is the LP dual of \emph{maximum concurrent flow}, where the goal is to maximize the \emph{minimum} flow value pushed between any demand pair (rather than the sum of flow values).
One can again ask about the \emph{sparsest cut gap}, which is the maximum possible ratio between these quantities when the cut is forced to be integral.

Following some initial work on this problem, e.g.~\cite{HR06}, it was proved by Agarwal, Alon, and Charikar \cite{AAC07} that in the setting of general directed graphs, this gap is the same as the flow-cut gap up to $O(\log n)$ factors.
Their reduction preserves edge weights, and hence we have:
\begin{mdframed}[backgroundcolor=green!10]
\begin{corollary}
The sparsest cut gap is at most $\min\left\{\Ohat(n^{1/3}), \Ohat(W^{1/2})\right\}$.
\end{corollary}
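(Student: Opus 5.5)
The plan is to obtain this corollary by composing the flow-cut gap bounds of Theorems \ref{thm:main} and \ref{thm:mainweight} with the reduction of Agarwal, Alon, and Charikar \cite{AAC07} from sparsest multicut to minimum multicut. In the directed setting, that reduction shows the sparsest cut gap is at most the flow-cut gap times $O(\log n)$. Two features of it matter for us. First, it does not change the graph $G$, so $n$ is preserved. Second, it only ever applies a multicut rounding to instances whose fractional cut weights come from (a rescaling of) the optimal fractional sparsest cut, so the relevant $W$ is controlled. Since $O(\log n) = n^{o(1)}$, this logarithmic loss is absorbed into the $\Ohat$ notation.

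The steps are as follows. Let $y$ be an optimal fractional sparsest multicut, i.e.\ an LP-dual solution of maximum concurrent flow, with value $\lambda$. Use the standard bucketing argument:
\begin{itemize}
\item group the demand pairs by their $y$-distance into $O(\log n)$ geometric scales (after discarding negligible pairs);
\item pick a scale $D$ whose bucket $P_D$ carries a $\Omega(1/\log n)$ fraction of the LP's demand mass;
\item define $w = y/D$. This is a feasible fractional multicut for $P_D$, with $\cost(w) \le O(\log n)\,\lambda\,|P_D|$.
\end{itemize}
Now apply the rounding algorithm behind Theorems \ref{thm:main} and \ref{thm:mainweight} to $(G,P_D)$ with the fractional cut $w$. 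This yields an integral multicut $X$ separating all of $P_D$, with $\cost(X) \le \min\{\Ohat(n^{1/3}),\Ohat(W^{1/2})\}\cdot \cost(w)$. So its sparsity is $\cost(X)/|P_D| \le \min\{\cdot\}\cdot O(\log n)\,\lambda$.

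The main point requiring care is the $W$ parameter. The flow-cut gap bound applies to the total fractional cut weight of the instance $(G,P_D)$. That weight is at most the total weight of $w$, because a minimum fractional multicut for $P_D$ can only be cheaper, and the $\Ohat(W^{1/2})$ rounding only needs \emph{some} feasible fractional cut of total weight $W$. We then have to relate this to the $W$ of the sparsest cut instance. If $W$ is defined as the sum of weights of the optimal fractional sparsest cut normalized so that separated pairs have distance at least $1$, the rescaling by $1/D$ matches that normalization up to the factor-$2$ bucket width. The weight therefore changes only by an $O(1)$ factor, which the phrase ``reduction preserves edge weights'' is meant to capture.

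I expect this weight bookkeeping to be the only real obstacle. The first thing I would try is to state the sparsest-cut $W$ as the weight of the rescaled $y/D$ directly, which makes the parameter match by definition. With that, combining the bucketing loss $O(\log n)\le n^{o(1)}$ with both gap bounds gives the claimed $\min\{\Ohat(n^{1/3}),\Ohat(W^{1/2})\}$ sparsest cut gap, in both the edge and vertex settings.
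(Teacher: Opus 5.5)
Your proposal is correct and takes the same route as the paper, which simply composes Theorems \ref{thm:main} and \ref{thm:mainweight} with the reduction of \cite{AAC07} (sparsest cut gap at most $O(\log n)$ times the flow-cut gap, with edge weights carried over) and absorbs the $\log n$ into $\Ohat$; your bucketing argument is the standard unpacking of that reduction. Your handling of $W$ is as precise as the paper's remark that the reduction ``preserves edge weights.'' One small correction there: a min-cost fractional cut need not have smaller total weight, so drop that justification; it is unnecessary anyway, because the formal version of Theorem \ref{thm:mainweight} bounds $\cost(X)$ by $\langle \cost, w\rangle\cdot\Ohat(w(E)^{1/2})$ directly for the given feasible $w = y/D$.
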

\end{mdframed}

\paragraph{(Weak) Low-Diameter Decompositions.}

A \emph{(weak) low-diameter decomposition} (LDD), also sometimes called a \emph{random quasipartition} \cite{MSS18, AF26}, is a randomized algorithm that takes on input an edge-weighted graph $G = (V, E, w)$ and a parameter $\Delta$, and returns a cut $X \subseteq E$ with the following properties:
\begin{itemize}
\item Every pair of nodes with $\dist_w(s, t) \ge \Delta$ is cut by $X$ (deterministically),\footnote{A \emph{strong} LDD instead requires $\dist_{G \setminus X}(s, t) \ge \Delta$.} and
\item For any edge $e \in E$, the probability that $e \in X$ is at most $\frac{w(e)}{\Delta} \cdot \texttt{LOSS}$, for some hopefully-small expression \texttt{LOSS}.\footnote{Some LDDs allow an additional $+\frac{1}{\poly(n)}$ term in the edge-cut probability, where algorithmic applications can tolerate it.  We will not consider such a term in our discussion.}
\end{itemize}

LDDs are a basic tool in graph algorithms with applications to metric embeddings, approximation algorithms, and shortest path computation (e.g., \cite{Bartal96, FRT03, HHWZ25, BFHL25, BCF23, BNW25}).
They are also a natural \emph{strengthening} of the flow-cut gap.
To see this: by rescaling edge weights, we may assume without loss of generality that $\Delta=1$, and so the first property states that the weights $w$ form a fractional cut for the node pairs that we wish to separate.
Applying linearity of expectation, the second property implies that in expectation, the size (or cost) of the cut $X$ is at most $\texttt{LOSS}$ times that of the fractional cut $w$.
Thus, it implies \texttt{LOSS} as a bound on the flow-cut gap.

Recently, Bringmann, Fischer, Haeupler, and Latypov \cite{BFHL25} (Lemma 2.2) showed that the multiplicative weights update method can be used to show the \emph{converse} reduction: bounds on the flow-cut gap imply bounds on weak LDDs.
Although the focus of \cite{BFHL25} is on edge cuts and roundtrip distance, their proof generalizes directly to our setting, implying: 

\begin{mdframed}[backgroundcolor=green!10]
\begin{corollary} \label{cor:wldd}
Every $n$-node directed graph $G = (V, E, w)$ has a weak LDD with parameter $\texttt{LOSS} \le \min\left\{\Ohat(n^{1/3}), \Ohat(W^{1/2})\right\}$.
\end{corollary}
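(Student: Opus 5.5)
The plan is to derive Corollary \ref{cor:wldd} from Theorems \ref{thm:main} and \ref{thm:mainweight} by a multiplicative weights argument, following Lemma 2.2 of \cite{BFHL25}. Rescale so that $\Delta = 1$, and let $P$ be the set of pairs with $\dist_w(s,t) \ge 1$. Then $w$ is a fractional multicut for $P$. We want a distribution $\mathcal{D}$ over integral multicuts $X$ of $P$ such that $\Pr_{X \sim \mathcal{D}}[e \in X] \le \texttt{LOSS} \cdot w(e)$ for every edge $e$. Sampling from $\mathcal{D}$ then gives the weak LDD. In the vertex setting the same argument runs with vertices in place of edges.

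\textbf{Reduction to a zero-sum game.} A cutter picks a multicut $X$ of $P$. An adversary picks a distribution $c$ over edges, which we view as costs. The payoff is $c(X)/c(w)$, where $c(w) = \sum_e c(e) w(e)$. By von Neumann's minimax theorem applied to the ratio form (equivalently, LP duality on the covering LP), $\mathcal{D}$ exists once we show the following: for every cost function $c \ge 0$ there is a multicut $X$ of $P$ with
\[
c(X) \le \texttt{LOSS} \cdot \sum_e c(e) w(e).
\]
That inequality is exactly a flow-cut gap statement for the instance $(G, P)$ with costs $c$. The fractional cut $w$ is feasible, so the minimum fractional cut has cost at most $c(w)$, and the theorems give an integral cut within $\Ohat(n^{1/3})$ of it. So for the $n$-parameter the only step is to invoke Theorem \ref{thm:main} for each $c$. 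Algorithmically, we run MWU for $\poly(n)$ rounds: update the costs $c$ multiplicatively on edges that are cut often, call the constructive rounding from the main theorem each round, and output a uniformly random round. This loses only a constant factor. That the rounding succeeds only with high probability is harmless, since we can verify the cost and resample.

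\textbf{The $W$ parameter.} Here we need care. Theorem \ref{thm:mainweight} bounds the gap in terms of the total weight of the \emph{minimum} fractional cut for cost $c$. We want a bound in terms of the LDD's own $W = \sum_e w(e)$, which is independent of $c$. The minimum fractional cut under costs $c$ may have larger total weight than $w$. The fix I would try first is to reprove, or inspect, Theorem \ref{thm:mainweight} in its formal version from Section \ref{sec:formalisms}, which I expect to take any feasible fractional cut $w$ and round it to an integral cut of cost $\Ohat(W(w)^{1/2}) \cdot c(w)$. The natural rounding, which uses the unit-capacity and uniform-weight reductions of Theorem \ref{thm:utwreductions}, works from a given fractional solution, so the bound should hold relative to $w$ itself.

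\textbf{Main obstacle.} The hardest step is confirming this "any feasible $w$" form, particularly through the reductions of Theorem \ref{thm:utwreductions} and the edge-to-vertex conversion of Theorem \ref{thm:edgetovertintro}, which change $W$ only by $O(1)$ factors. Once that holds, the minimax argument applies uniformly and gives $\texttt{LOSS} \le \min\{\Ohat(n^{1/3}), \Ohat(W^{1/2})\}$.
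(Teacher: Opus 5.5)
Your argument is correct, but its core step differs from the paper's. The paper never uses minimax. In Section \ref{sec:wlddred} it fixes $w$, after rounding weights below $\frac{1}{2n}$ down to $0$ and doubling the rest, and gives zero-weight edges prohibitive cost. It then runs an explicit multiplicative-weights loop: each round calls the rounding oracle for a cut $X_i$ with $\cost(X_i)\le\efg(n,W)\cdot\langle\cost,w\rangle$ and then inflates the cost of every $e\in X_i$. The LDD is the uniform distribution over the $\Oish(n)$ cuts produced.

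You instead get existence from LP duality. This is cleanest as Farkas' lemma rather than minimax on the ratio $c(X)/c(w)$, which is not bilinear. The system $\lambda\ge 0$, $\sum_X\lambda_X=1$, $\sum_{X\ni e}\lambda_X\le\texttt{LOSS}\cdot w(e)$, over multicuts $X$ of $P$, is infeasible only if some $c\ge 0$ has $\min_X c(X)>\texttt{LOSS}\cdot\langle c,w\rangle$, and the flow-cut gap bound rules this out. Your route is shorter, loses nothing beyond the gap bound, and needs no special handling of zero-weight edges. However, it is non-constructive; an explicit multiplicative-weights construction is what gives a polynomial-time sampler.

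What you call the main obstacle is not one. The formal statements in Section \ref{sec:formalisms}, and the definitions of $\efg(n,W)$ and $\vfg(n,W)$ in Section \ref{sec:reductions}, already take an arbitrary given $w$. They bound $\cost(X)$ by $\langle\cost,w\rangle$ times a function of $n$ and $w(E)$ (resp.\ $w(V)$), for the pairs at $w$-distance at least $1$, and every reduction there is proved in that form. The paper's proof relies on exactly this: it holds $w$, hence $W$, fixed while only the costs change.

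If you carry out your algorithmic version, do not copy the paper's step size. Its update factor is $1+\frac{1}{w(e)\cdot\efg(n,W)}$, so the potential bound is only $\langle\cost,w\rangle\le 2^iW$. The estimate $\log(1+\frac1x)=\Theta(\frac1x)$ used there holds only for $x\ge 1$, which is the trivial case. For $x=w(e)\cdot\efg(n,W)<1$ one only has $\log(1+\frac1x)=\Theta(1+\log\frac1x)$, far below $\frac1x$ when $x$ is small. So that analysis only shows that $e$ lies in an $O(1/(1+\log\frac1x))$ fraction of the cuts.

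The fix is the factor $1+\frac{\eps}{w(e)\cdot\efg(n,W)}$ with $\eps=\frac{1}{2n}\le\min_{w(e)>0}w(e)\cdot\efg(n,W)$. The potential then grows by a factor at most $1+\eps$ per round. Since $\ln(1+y)\ge y\ln 2$ for $0\le y\le 1$, after $T\ge\eps^{-1}\ln(2nW)$ rounds each edge lies in at most a $\frac{2}{\ln 2}\,w(e)\cdot\efg(n,W)$ fraction of the cuts. This is the constant-factor loss you claimed.
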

\end{mdframed}

Since there are some mild technical details needed to update their proof to our setting, we provide a formal proof of this corollary in Section \ref{sec:wlddred}.
In fact, we show a more general reduction: any bound on flow-cut gaps implies the corresponding bound for weak LDDs (up to log factors).

To clarify a possible point of confusion: there has been a recently popular line of work on ``directed LDDs,'' with applications to shortest path algorithms under negative weights \cite{HHWZ25, BFHL25, BCF23, BNW25}.
These papers are specifically studying LDDs for the \emph{directed roundtrip} setting.
Through the previous reduction, this is equivalent to the flow-cut gap in a model where for each demand pair $(s, t) \in P$, (1) the fractional cut $w$ must satisfy $\dist_w(s, t) + \dist_w(t, s) \ge 1$, and (2) the integral cut $X$ must hit all $s \leadsto t \leadsto s$ ``roundtrip'' paths.
This line of work has nearly closed the flow-cut gap for roundtrip distances: the state-of-the-art upper bound is $O(\log n \log \log n)$, and the lower bound is $\Omega(\log n / \log \log n)$.
The previous corollary gives a directed LDD in the one-way (non-roundtrip) setting, where $\texttt{LOSS}$ will need to be polynomial in $n$ in the worst case \cite{CK09}.

\subsection{Organization}

In Section \ref{sec:formalisms}, we give more formal statements of our main results, and apply some reductions (which we prove in Section \ref{sec:reductions}) that allow us to reduce the problem to a simplified main lemma (Lemma \ref{lem:vcr}).
After these setup steps, we provide a technical overview of the proof of our main lemma in Section \ref{sec:intuition}, intended to provide structure and intuition for some of the more involved parts of our argument.

We give the main algorithm in Section \ref{sec:alg}, and then quickly prove its correctness (that it returns a valid integral cut).
The majority of the proof is devoted to bounding the size of that cut.
This size analysis is organized in a top-down fashion: the first few sections contain one or two gray-boxed lemmas, which are essentially summaries of the rest of the argument to follow, which hide some important definitions or machinery.
On the first read, we encourage the reader to take the gray-boxed lemmas as a black box, and not read further into the proof until comfortable with the earlier parts of the argument.
The gray-boxed lemmas can then later be unpacked by proceeding into the next subsection.




\section{Proof of Main Results \label{sec:body}}

\subsection{Setup, Notation, and Formal Statement of Main Results \label{sec:formalisms}}

Let us first state our main theorems (Theorems \ref{thm:main} and \ref{thm:mainweight}) more formally, so that it is clear what we intend to prove, and introduce some relevant notation.

\paragraph{Statement of Edge Theorems.} We start with the edge versions of these theorems.
Let $G = (V, E, \cost, w)$ be any $n$-node directed graph, where $\cost, w$ hold edge costs and weights, respectively.
Then these theorems state that there is a subset of edges $X \subseteq E$ such that, for every pair of nodes $(s, t)$ with $\dist_w(s, t) \ge 1$,\footnote{Instead of explicitly considering a set of demand pairs $P$, in order to reduce notation, we simply require that $X$ is an integral cut for every pair that is fractionally cut by $w$ -- that is, $P$ is implicitly defined as the demand pairs for which $\dist_w(s, t) \ge 1$.  This clearly implies that $X$ is a cut for any set of demand pairs $P$ that may be of interest.}
every $s \leadsto t$ path in $G$ contains at least one edge in $X$.
Moreover, we have
$$\cost(X) \le \langle \cost, w \rangle \cdot \min\left\{ \Ohat\left(n^{1/3}\right), \Ohat\left(w(E)^{1/2}\right)\right\},$$
where:
$$\cost(X) := \sum \limits_{e \in X} \cost(e), \quad w(E) := \sum \limits_{e \in E} w(e), \quad \langle \cost, w \rangle := \sum \limits_{e \in E} \cost(e) \cdot w(e).$$
We will use similar notation throughout the paper.
We will not consider the edge version directly in our main arguments to follow.
Rather, by Theorem \ref{thm:edgetovertintro} (stated and proved more formally in Theorem \ref{thm:edgetovert}), this claim reduces to our theorems in the vertex setting, which we formally state next.

\paragraph{Statement of Vertex Theorems.} Let $G = (V, E, \cost, w)$ be any $n$-node directed graph, where $\cost, w$ hold vertex costs and weights, respectively.
For a pair of nodes $(s, t)$, we write the ``vertex distance'' between them as
$$\vdist_w(s, t) := \min \limits_{\pi \text{ is an $s \leadsto t$ path}} w\left(\pi \setminus \{s, t\}\right),$$
that is, vertex distance only counts weights of \emph{interior} vertices on $s \leadsto t$ paths and not $(s, t)$ themselves.
In some parts of the paper we will also consider vertex-unweighted distance, still written $\vdist$, which is equivalent to the setting where all vertices have weight $1$.

Our theorems state that we may compute a subset of nodes $X \subseteq V$ such that, for every pair of nodes $(s, t)$ with $\vdist_w(s, t) \ge 1$, every $s \leadsto t$ path $\pi$ in $G$ contains at least one \emph{internal} node in $X$.
That is, we may or may not also have $s, t \in X$, but these nodes do not suffice to cut the pair $(s, t)$.
Moreover, the theorems guarantee the size bound
$$\cost(X) \le \langle \cost, w \rangle \cdot \min \left\{ \Ohat\left(n^{1/3}\right), \Ohat\left(w(V)^{1/2}\right) \right\}.$$
Again, we will not directly consider this statement in our main arguments to follow.
Rather: the bound of $\Ohat(n^{1/3})$ is \emph{implied} by the bound of $\Ohat(w(V)^{1/2})$ via Theorem \ref{thm:introwton} (proved in Theorem \ref{thm:vfgself}), so we will only consider the latter bound.
Additionally, via Theorem \ref{thm:utwreductions} (proved via a sequence of reductions in Section \ref{sec:reductions}), it suffices to prove the bound of $\Ohat(w(V)^{1/2})$ in the special case of uncapacitated, uniform-weight input graphs.
It will be convenient to rephrase the theorem a bit in light of these simplifications; we do this next.

\paragraph{Rephrasing of Uncapacitated, Uniform-Weight Vertex Setting.}

The technical argument in the rest of this section will prove the following lemma.

\begin{mdframed}[backgroundcolor=gray!10]
\begin{lemma} \label{lem:vcr}
Let $G = (V, E)$ be an $n$-node directed graph and let $1 \le L \le n$.
Then in randomized polynomial time, one can construct a set of nodes $X \subseteq V$ that is an integral vertex cut for all node pairs $(a, b)$ at vertex-unweighted distance $\vdist_G(a, b) \ge L$,
and where with high probability $X$ has size
$$|X| \le \Ohat\left(\frac{n}{L}\right)^{3/2}.$$
\end{lemma}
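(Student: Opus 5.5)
Having reduced to the uncapacitated, uniform-weight vertex setting, I read Lemma \ref{lem:vcr} as follows: uniform weight $1/L$ on every node is a fractional cut for all pairs at $\vdist \ge L$, with $W = n/L$. The target $|X| \le \Ohat((n/L)^{3/2}) = \Ohat(W \cdot W^{1/2})$ is exactly the $W^{1/2}$ gap. The plan is a two-regime random-ball cutting procedure, in the style of Gupta's argument and the Chakrabarty--Khanna--Rao style arguments, with the threshold tuned so that both regimes cost $(n/L)^{3/2}$. The benchmarks are these. Gupta-style single-source ball growing costs $O(n/L)$ per source, because a random radius in $[0,L)$ hits each layer with probability $1/L$. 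Naive charging over all sources gives $O(n^2/L)$. We must beat the latter by charging balls to only about $(n/L)^{1/2}$ sources in an amortized way.

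\textbf{Step 1 (sampling hubs).} Sample a random set $R$ of about $(n/L)^{1/2}\cdot n^{o(1)}$ nodes, sampling each node independently with probability $p \approx (nL)^{-1/2}$. For each $r \in R$, grow out-balls and in-balls of a uniformly random radius $\rho_r \in [0, L/2)$ in vertex-unweighted distance. Cut the boundary layer, which costs $O(n/L)$ per ball in expectation and $O((n/L)^{3/2})$ in total. Any pair $(a,b)$ whose every $a \leadsto b$ path passes within distance about $L/4$ of a hub should then be separated by some hub's in-ball/out-ball boundary. The key subclaim is the following. If a node $v$ reaches or is reached by at least $\sqrt{nL}$ nodes within distance $L/4$ ("heavy" neighborhoods), then $R$ hits that neighborhood with high probability. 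Hence every path through heavy nodes gets cut, because it must exit the out-ball or enter the in-ball of some hub, and these boundaries lie at distance less than $L/2$ from the hub while the endpoints are at distance at least $L$.

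\textbf{Step 2 (light residual).} Delete $X_1$. In the remainder, consider the pairs still connected. Their connecting paths use only light nodes, whose $L/4$-neighborhoods have size below $\sqrt{nL}$. For these, run local ball-growing from every node of each remaining path. A ball around a light node has size below $\sqrt{nL}$, so its random boundary layer has expected size about $\sqrt{nL}/L$. Charging each cut node to the ball-centers that could produce it, and using that each node lies in at most $\sqrt{nL}$ light balls, should give a total of $\Ohat(n\cdot\sqrt{nL}/L^2)$. This is $\Ohat((n/L)^{3/2})$, which balances Step 1. I would organize this via a region-growing potential to avoid double counting, with $n^{o(1)}$ losses coming from a recursive or multi-scale version of the threshold.

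\textbf{Main obstacle.} The hardest part is Step 2's charging, because directed reachability is asymmetric. A light out-neighborhood does not bound in-neighborhoods. Moreover, the pairs left uncut need not lie near any single light center, so naive union-bounding over $n$ centers reintroduces the $n^2/L$ loss. My first attempt would be a multiscale decomposition. I would bucket nodes by out-ball and in-ball sizes at $\log n$ geometric scales and recurse on each bucket with adjusted $L$, accepting $n^{o(1)}$ loss from the recursion depth. If that fails, I would use the edge/vertex equivalence (Theorem \ref{thm:edgetovertintro}) to move to whichever setting makes the layered charging cleaner.
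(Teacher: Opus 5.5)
Your proposal has a genuine gap, and it starts with the key subclaim of Step~1. The cost accounting there is fine; the correctness claim is not.

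Cutting the boundary of a hub $r$'s in-ball only cuts paths running from outside that ball to strictly inside it. Likewise, cutting its out-ball boundary only cuts paths running from strictly inside to outside. So a path node lying within $L/4$ of a hub says nothing about the whole $a \leadsto b$ path. Your justification ("boundaries lie at distance less than $L/2$ from the hub while the endpoints are at distance at least $L$") conflates $\vdist(a,b) \ge L$ with the distance of $a$ or $b$ from $r$.

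Here is a concrete counterexample. Take an $a \leadsto b$ path $x_0, \dots, x_{L+1}$, a set $H$ of $\sqrt{nL}$ sinks with an edge from every $x_i$ to every node of $H$, and pad with isolated nodes. Every $x_i$ is heavy, and Step~1 does place hubs in $H$. But such a hub reaches nothing and has every $x_i$ at in-distance $0$. Its ball boundaries therefore touch the path only if $\rho_r$ falls in the first layer, which happens with probability $O(1/L)$ per hub. Meanwhile, a hub lying on the path itself appears only with probability about $Lp = \sqrt{L/n}\,n^{o(1)}$. So for $n^{\Omega(1)} \le L \le n^{1-\Omega(1)}$ this all-heavy path survives Step~1 with probability $1-o(1)$. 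Proximity of path nodes to hubs, which is all that $(n/L)^{1/2}$ random hubs can guarantee, does not force separation in directed graphs.

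Step~2 does not rescue this. Even granting that each node lies in at most $\sqrt{nL}$ of the balls (which, as you note, does not follow from out-lightness), each ball puts a fixed node on its random boundary with probability $\Theta(1/L)$. The expected cut is then $n \cdot \sqrt{nL} \cdot O(1/L) = O(n (n/L)^{1/2})$. That is a factor $L$ above your claimed $n\sqrt{nL}/L^2$, and worse than the trivial bound $|X| \le n$. Your ``main obstacle'' paragraph then leaves the argument open.

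What is missing is an idea that beats Gupta-style pairwise charging (Theorem~\ref{thm:guptalem}). The paper obtains it in three steps.
(i) It processes demand pairs in uniformly random order, with fractional cuts frozen within each of $O(\log n/\log\log n)$ epochs. Then w.h.p.\ every pair $(u,v)$ is disconnected within $O(|P|\log n/\val(u,v))$ rounds of an epoch (Lemma~\ref{lem:valub}), while each round adds only $\mass/|P|$ nodes in expectation (Lemma~\ref{lem:epochcost}).
(ii) It decomposes the fractional mass into $\Omegahat(\mass)$ long witness paths, vertex-disjoint within each demand pair and living in the transitive closure of $G \setminus X$ (Lemma~\ref{lem:pathwitness}).
(iii) It runs a charging scheme over a 2-hop shortcut set on a base path. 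This either charges much value to one of $\Oish(L)$ pairs, or exposes a triangle structure: a node $u$ in the prefixes of many witness paths whose suffixes meet the base path.

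Either way some still-connected pair has $\val(u,v) \ge \mass \cdot \Omegahat((L/n)^{3/2})$ (Lemma~\ref{lem:vallb}). This bounds the epoch length and yields $\Ohat((n/L)^{3/2})$.
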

\end{mdframed}

This lemma is morally a rephrasing of the $W$-parametrized vertex theorem in the uncapacitated, uniform-weight setting, as shown in the following proof:

\begin{proof} [Proof of Theorem \ref{thm:mainweight} ($W$-parametrized bound), assuming Lemma \ref{lem:vcr}]
Let $G = (V, E, \cost, w)$ be an input to the vertex flow-cut gap problem.
By Theorem \ref{thm:utwreductions}, we may assume without loss of generality that all nodes $v \in V$ have $\cost(v)=1$ and $w(v) = w^*$, for some value $w^* \le 1$.
Thus, every pair of nodes $(a, b)$ that has vertex-weighted distance $\vdist_{w}(a, b) \ge 1$ must have vertex-unweighted distance $\vdist(a, b) \ge \frac{1}{w^*}$.
By Lemma \ref{lem:vcr}, we may compute an integral cut $X$ for all such pairs, which has size
\begin{align*}
\cost(X) = |X| &\le \Ohat\left(n w^{*}\right)^{3/2}\\
&= \Ohat\left(w(V)^{3/2}\right) \tag*{since $w(V) = nw^*$}\\
&= \langle \cost, w \rangle \cdot \Ohat\left(w(V)^{1/2}\right) \tag*{since costs are $1$. \qedhere}
\end{align*}
\end{proof}

It now remains to prove Lemma \ref{lem:vcr}.
We note that we may assume without loss of generality that $L \ge n^{1/3}$, since otherwise Lemma \ref{lem:vcr} claims a bound of $|X| \le \Ohat(n)$, which holds trivially.
This parameter restriction will be useful at some much-later point in the proof.
From this point onwards, things start to get technical, so we will give some informal intuition before launching into the main argument.

\subsection{Intuition for the Proof of Lemma \ref{lem:vcr} \label{sec:intuition}}








We view our argument as an extension of Gupta's approach \cite{Gupta03}, so we will start by overviewing his argument.
Although his original paper focused on the edge setting and on general weights/costs, we will describe a tweaked version of his algorithm and analysis that removes a few technical details, and which applies only to vertex/uniform weight/uncosted setting:

\begin{theorem} [Implicit in \cite{Gupta03}] \label{thm:guptalem}
Lemma \ref{lem:vcr} holds with a weaker bound of
$|X| \le O\left(\frac{n}{L}\right)^2.$
\end{theorem}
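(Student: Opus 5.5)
We prove the weaker bound with a simple greedy ``layering'' algorithm in Gupta's style: find a long path, cut it cheaply with a BFS layer, and repeat. Maintain a current graph $G'$, initially $G$, and an integral cut $X$, initially empty. While $G'$ still contains a pair $(a,b)$ with $\vdist_{G'}(a,b) \ge L$ that is connected by some path, we do the following.

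\begin{itemize}
\item Fix such a pair and run a BFS from $a$ in $G'$, with distances measured in interior vertices. This gives layers $V_1, V_2, \dots$, where $V_i$ is the set of nodes at vertex distance exactly $i$ from $a$.
\item Since $b$ is reachable and $\vdist_{G'}(a,b) \ge L$, every $a \leadsto b$ path passes through each of the layers $V_1, \dots, V_{L}$ (up to off-by-one conventions). Moreover, every edge goes from a layer $V_i$ only to layers $V_j$ with $j \le i+1$.
\item Hence, for each $i \le L$, removing $V_i$ separates $\bigcup_{j<i} V_j \cup \{a\}$ (the nodes before layer $i$) from all nodes beyond layer $i$.
\item The layers $V_1, \dots, V_L$ are disjoint, so some index $i \le L$ has $|V_i| \le |V(G')|/L \le n/L$. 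Add that $V_i$ to $X$.
\end{itemize}

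We then recurse separately on two pieces. The first is the ``inside'' $B = \{a\} \cup \bigcup_{j<i} V_j$. The second is the ``outside'' $V(G') \setminus (B \cup V_i)$. No path in $G' \setminus V_i$ goes from $B$ to the outside. A path from the outside into $B$ is still possible, so correctness needs care. We handle this by using only the following invariant: every surviving path in the current graph lies entirely within one piece, or crosses pieces only in a direction where the distance in $G$ is already certified.

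The cleaner route, and the one I would actually use, is the region-growing charging argument. Count how many times a single node $v$ can be charged. Each iteration removes the inside ball $B$, which has size at least $i-1$ because every layer is nonempty. So the rounds on a fixed component never revisit the removed balls, and the recursion terminates.

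The key counting step is as follows. Each iteration adds at most $n/L$ nodes to $X$ and deletes at least about $\min(i, \cdot)$ nodes. We do not choose the smallest layer directly; instead we choose $i$ to be the first index with $|V_i| \le |B_i|/L \cdot O(1)$ in a region-growing sense. Then the cut is charged to the ball it deletes, at rate $O(1/L)$ per node, except for a possible first-layer cost. Summing over balls gives the following two bounds:
\begin{itemize}
\item a total of $O(n/L)$ from proportional charging;
\item a total of $O(n/L)\cdot O(n/L)$ from the number of iterations, since each iteration deletes at least $L/2$ nodes. The number of iterations is therefore at most $2n/L$, and each one costs at most $n/L$.
\end{itemize}
This gives $|X| \le O(n/L)^2$.

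Correctness needs the final $G \setminus X$ to have no $a \leadsto b$ path between pairs with $\vdist_G(a,b)\ge L$. Distances only grow when nodes are deleted. The removed balls $B$ are exactly the nodes within distance less than $L$ of a fixed source. So a surviving path that re-enters a removed ball from outside lies in the original $G$, and can be shortcut through the BFS structure to witness that its endpoints are close.

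This correctness argument is the step I expect to be the main obstacle. It requires showing that a pair at distance $\ge L$ cannot be connected through nodes assigned to different pieces. I would handle it by processing pieces as a sequence and deleting each finished ball $B$ together with $V_i$, arguing that the set of edges from the outside back into $B$ cannot create a path whose two endpoints are both far and both outside $B$, by a direct contradiction using the BFS layering.
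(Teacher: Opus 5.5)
\textbf{There is a genuine gap.} It is the step you flagged, and it is fatal: in a directed graph you cannot discard $B$, nor recurse on $B$ and on the outside separately. Cutting $V_i$ blocks only paths \emph{leaving} $B$. Paths from the outside \emph{into} $B$ survive. Knowing that every node of $B$ is within distance $<L$ of $a$ gives no control over $\vdist(x,y)$ for $x\notin B$, $y\in B$, or even for $x,y\in B$, because one-way distances from a common source do not bound the distance between the two nodes. Your proposed fix, ``shortcut through the BFS structure to witness that its endpoints are close,'' is therefore false.

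Here is a concrete failure. Take a path $a\to z_1\to\cdots\to z_{L+1}\to b$, a path $p_0\to\cdots\to p_m\to y_1\to\cdots\to y_m$ with $m=L/2+1$, and edges $a\to y_k$ for all $k$. Suppose the loop first picks $(a,b)$. Every $y_k$ is adjacent to $a$, so it lands in the first BFS layer and hence in $B$. The $p_k$ are unreachable from $a$ and fall in the outside piece. Both your smallest-layer rule and your region-growing rule cut a single $z$-layer. Afterwards neither piece contains a pair at distance $\ge L$, so nothing on the $p$--$y$ path is ever cut. Yet $\vdist_G(p_0,y_m)=L+1$, and this pair is still connected.

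If instead you keep $B$ and rerun your first-paragraph loop on all of $G\setminus X$, correctness is fine. But then the count ``each iteration deletes at least $L/2$ nodes, so at most $2n/L$ iterations'' has no basis: only $V_i$ leaves the graph, and picking a small layer guarantees no progress in any measure. As a sanity check, any scheme charging the cut at rate $O(1/L)$ to disjoint discarded balls would prove an $\Ohat(n/L)$ bound. Through the reductions of Section~\ref{sec:reductions}, that would give an $n^{o(1)}$ directed flow-cut gap, contradicting the $\Omegaish(n^{1/7})$ lower bound of Chuzhoy and Khanna. Region growing is exactly the undirected tool that breaks here.

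\textbf{The missing idea} is a progress measure that does not require decomposing the graph: Gupta's pairwise-reachability potential, which is what the paper's sketch uses. Keep working on all of $G\setminus X$. For the chosen far pair $(s,t)$, cut one level of the BFS from $s$; it suffices to cut only nodes lying on $s\leadsto t$ paths. Fix such a node $v$.
\begin{itemize}
\item If $\vdist(s,v)\ge L/2$, each level $i\in[L/4,L/2)$ separates $v$ from the at least $i$ nodes of a shortest $s\leadsto v$ path lying below level $i$.
\item Otherwise, each level $i\in(\vdist(s,v),3L/4]$ separates $v$ from the at least $L-i$ nodes of a shortest $v\leadsto t$ path lying above level $i$. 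This uses that levels rise by at most one per edge and $t$ sits at level $\ge L$.
\end{itemize}
Either way, $\Omega(L)$ of the levels each destroy $\Omega(L)$ reachable pairs containing $v$.

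The paper takes the level uniformly at random in $[0,L]$. Each time $v$ lies between the selected pair, $v$ is cut with probability at most $1/L$. With constant probability, $\Omega(L)$ of the at most $2n$ reachable ordered pairs containing $v$ are destroyed. So $v$ is at risk in only $O(n/L)$ rounds in expectation, and $\mathbb{E}|X|\le n\cdot\frac{1}{L}\cdot O(\frac{n}{L})$.

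Your deterministic loop can be repaired the same way. Choose the level $i\le L$ maximizing the ratio of reachable ordered pairs destroyed to nodes cut. Averaging over $i$ shows this ratio is $\Omega(L^2)$. The potential starts at most $n^2$ and never increases, so the total cut has size $O(n^2/L^2)$.
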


\SetKwInOut{Input}{Input}
\SetKwInOut{Output}{Output}
\SetKwBlock{Loop}{loop}{end}
\DontPrintSemicolon
\begin{mdframed}[backgroundcolor=blue!10]
\begin{algorithm}[H]
\caption{A tweaked version of Gupta's Algorithm \cite{Gupta03} \label{alg:gupta}}

\Input{directed graph $G=(V,E)$, parameter $L \ge 1$}
\Output{integral cut $X \subseteq V$ for all pairs at vertex-unweighted distance $\ge L$}

~

$X \gets \emptyset$\;
$P \gets \left\{ (s, t) \in V\times V \ \mid \ \vdist_G(s, t) \ge L\right\}$\;

~

\While{$P$ is nonempty}{

    $(s, t) \gets$ remove and return a random demand pair from $P$\;
    sample $d$ uniformly at random from interval $[0, L]$\;
    add all nodes $v$ to $X$ for which $d \in \big[\vdist_{G \setminus(X \setminus \{s, t\})}(s, v), \vdist_{G \setminus(X \setminus \{s, t\})}(s, v) + 1\big]$\;
}

\Return{$X$}

\end{algorithm}
\end{mdframed}

\begin{proof} [Proof Sketch for Theorem \ref{thm:guptalem}]
In each round of the main loop of Algorithm \ref{alg:gupta}, we select a demand pair $(s, t)$.
For each node $v$ that is between $(s, t)$ (meaning that an $s \leadsto v \leadsto t$ path exists in the current graph $G \setminus (X \setminus \{s, t\})$), the probability that we cut $v$ is at most $\frac{1}{L}$.
Thus, our goal is to prove that the typical node $v$ is only eligible to be cut in $O(n/L)$ rounds; that is, $v$ is only contained between $O(n/L)$ demand pairs at the time they are selected. 
If we can do so, the expected size of the cut $|X|$ will be
\begin{align*}
\mathbb{E}[|X|] &\le \underbrace{n}_{\text{number of nodes}} \cdot \underbrace{\frac{1}{L}}_{\substack{\text{prob of cutting node each}\\\text{time it is between selected $(s, t)$}}} \cdot \underbrace{O\left(\frac{n}{L}\right)}_{\substack{\text{number of rounds in which}\\\text{node is between selected $(s, t)$}}}\\
&\le O\left(\frac{n^2}{L^2}\right).
\end{align*}
Gupta's key insight is to analyze the \emph{pairwise} connectivity of the nodes in $V$ in order to bound the number of rounds in which each node $v$ lies between the selected demand pair $(s, t)$.
We will either have $\vdist(s, v) \ge \Omega(L)$ or $\vdist(v, t) \ge \Omega(L)$.
Thus, with constant probability, there will be $\Omega(L)$ nodes $u$ for which we \emph{disconnect} the pair $(u, v)$ in this round: that is, there are $\Omega(L)$ nodes $u$ for which there was previously a $u \leadsto v$ path (or $v \leadsto u$ path) in $G \setminus X$, but this is no longer true after adding new nodes to $X$ in this round.
Thus we expect $v$ to be between the selected demand pair $(s, t)$ for only $O(n/L)$ rounds.
\end{proof}

We now explain the key ways in which our argument departs from this one.

\paragraph{Triangle Analysis.}

The core idea behind our improvement is straightforward: instead of analyzing \emph{pairwise} connectivity among nodes in $V$, as in Gupta's analysis, we will instead focus on the \emph{triple-wise} connectivity among nodes.

Very informally, the idea is to show if the expected cut size in a round of the algorithm is large, then there exist special ``triangle structures."
As in the above Gupta-style analysis, these structures allow us to use the fact that the typical node $v$ between some demand pair $(s,t)$ has at least $\Omega(L)$ far-away nodes $u$, where $(u, v)$ lies between the selected demand pair.
However, they also give us an additional advantage: they will imply that some of these pairs $(u, v)$ lies together between \emph{many different} demand pairs.
Thus we are likely to select one such demand pair in an early round of the algorithm, and so $(u, v)$ will be disconnected in even fewer rounds.
The details regarding the new triangle structures that we use in our analysis can be found in Section \ref{sec:shortcharge}, in particular, Lemma \ref{lem:pathstructure}.



There are two serious technical barriers that we encounter along the way to setting up this triangle structure and analysis.
These induce some technical complexities in our algorithm and analysis that depart from Gupta's argument.
We overview these next.

\paragraph{Close-Pair Separation.} A pleasantly simple feature of Gupta's analysis is that it suffices to focus on ``far-apart'' node pairs.
That is, his argument sketched above is roughly based on the following observation: for a reachable pair of nodes $(u, v)$ that are $\Omega(L)$ nodes apart, after only a few demand pairs are considered that have $(u, v)$ between them, the cuts for those demand pairs will probably have separated $(u, v)$.

When focusing on triangles, it turns out that we need a more nuanced argument.
We will roughly argue that either (1) there are useful triangle structures, which force certain far-apart pairs $(u, v)$ to lie between \emph{many} demand pairs (relative to Gupta's argument), or (2) there is a reachable pair of nodes $(u, v)$ that are relatively close together, but which lie between \emph{even more} demand pairs.
Thus, in the second case, $(u, v)$ is less likely to be separated by the cut for any \emph{particular} demand pair, but this is offset by the fact that we are much more likely to select a demand pair that \emph{could} cut $(u, v)$.
As a technical consequence, unlike Gupta, our probabilistic separation guarantees are over both the random cut \emph{and} the random choice of demand pair in each round.
In fact, Gupta's original analysis allowed for an adversarial ordering of the demand pairs, whereas ours truly requires that the demand pairs are considered in random order.

Formalizing the second case requires two new technical details.
One is a function called ``$\val(u, v)$'' that roughly captures the total contribution, across all demand pairs, to the probability of separating $(u, v)$ within the next round(s) of the algorithm (defined in Section \ref{sec:val}).
We formally prove a sort of concentration bound in Lemma \ref{lem:valub}, showing that all pairs of high value will probably have been separated within a small number of rounds.
This ingredient also induces a new technical detail in the algorithm that departs from Gupta's approach.
That is: it is possible that, when we add nodes to $X$ in a round of the algorithm, the value of $\vdist_{G \setminus X}(u, v)$ increases significantly, inducing a sudden increase in $\val(u, v)$.
This could lead to a failure of our concentration bound, if $\val(u, v)$ is small for many rounds and so escapes separation, but then suddenly jumps to a high value in a later round.
To address this, we divide our algorithm into a small number of ``epochs.''
The cuts performed within each epoch depend only on a set of fractional weights $\{w_{s, t}\}$ computed at the \emph{beginning} of an epoch, regardless of which nodes get added to the integral cut $X$ \emph{during} the epoch.
Since these weights do not change throughout the epoch, $\val(u, v)$ does not jump as described previously.
Our concentration inequality then holds only within each epoch.

The second new technical detail is a ``charging scheme,'' which carefully modifies the input instance while charging these modifications to the $\val$ function.
We give details in Section \ref{sec:charging}.
The scheme either identifies a pair $(u,v)$ with high $\val(u, v)$, or allows us to reduce to the setting where our desired triangle structures exist.

\paragraph{Path Witnessing.} In order to argue that there are ``many triangles worth of paths'' among the demand pairs, we first need to argue that the nodes between each demand pair $(s, t)$ can be decomposed into many long internally-node-disjoint paths, which form our triangles.
We also want the number of such paths to scale with the total weights $\{w_{s, t}\}$ in our fractional cuts, since this is the expected cost of the random cut for $(s, t)$.
We call such a decomposition a \emph{path witness}.

A natural attempt to guarantee the existence of path witnesses is the following.
In each round, let $w'_{s, t}$ be a \emph{current} minimum fractional cut on $G \setminus X$.
If the cuts $\{w'_{s, t}\}$ are much cheaper in total than the cuts $\{w_{s, t}\}$ on the remaining demand pairs, this would trigger the end of an epoch, and we would switch to using fractional cut weights $\{w'_{s, t}\}$.
Otherwise, we can apply the MCMF Theorem to say that each demand pair $(s, t) \in P$ has internally-node-disjoint $s \leadsto t$ paths of size $w'_{s, t}(V \setminus X)$, which is approximately the same size as $w_{s, t}(V \setminus X)$, giving our desired path witness.
The reason this attempt fails is that, for technical reasons, in order for our triangles to be useful we will also need to avoid overly heavy nodes.
In particular, we will require that all nodes $v \in V \setminus X$ have weight $w_{s, t}(v) \le \Ohat(1/L)$.
We can no longer apply the MCMF Theorem for fractional min cuts $w'_{s, t}$ with upper-bounded weights, so we do not get our desired path witnesses.

Our solution, which is stated and applied in Section \ref{sec:charging} (Lemma \ref{lem:pathwitness}) and then proved in Section \ref{sec:pathwitness}, is a proof that this MCMF application can be recovered if: (1) we allow the paths in our witness to be in the \emph{transitive closure} of $G \setminus X$, rather than in $G \setminus X$ directly, and (2) we relax the weight upper bound on $w'_{s, t}$ by a $\cdot O(1)$ factor relative to the weight upper bound used for $w_{s, t}$.
These constraints lead to some delicate parameter balancing in our argument, but give us a workable approximate path witness.

\subsection{Main Algorithm \label{sec:alg}}

Our proof of Lemma \ref{lem:vcr} is based on an analysis of the following algorithm.

\SetKwInOut{Input}{Input}
\SetKwInOut{Output}{Output}
\SetKwBlock{Loop}{loop}{end}
\DontPrintSemicolon
\begin{mdframed}[backgroundcolor=blue!10]
\begin{algorithm}[H]
\caption{$\mvc(G, L)$}
\label{alg:main-alg}

\Input{directed graph $G=(V,E)$, parameter $L \ge 1$}
\Output{cut $X \subseteq V$ for pairs at vertex-unweighted distance $\ge L$}

\begin{mdframed}[backgroundcolor=blue!20, linecolor=blue!20, rightmargin=25pt]
\tcp{Setup}
\vskip3pt
$X \gets \emptyset$\;
$P \gets \left\{ (s, t) \in V\times V \ \mid \ \vdist_G(s, t) \ge L\right\}$\;
$\epoch \gets 1$

\ForEach{pair of nodes $(s, t) \in P$}{
    $w_{s, t} \gets $ function that assigns all nodes weight $\frac{1}{L}$\;
}
\end{mdframed}

\While{$P$ is nonempty}{

    \begin{mdframed}[backgroundcolor=blue!20, linecolor=blue!20, rightmargin=40pt]
    \tcp{Compute new candidate weights}
    \vskip3pt
    \ForEach{$(s, t) \in P$}{
        $w'_{s, t} \gets$ minimum fractional $(s, t)$ cut in which all nodes in $X$ have weight $1$, and all nodes in $G \setminus X$ have weight $\le \frac{4^{\epoch}}{L}$
    }
    \end{mdframed}

    \begin{mdframed}[backgroundcolor=blue!20, linecolor=blue!20, rightmargin=40pt]
    \tcp{If candidate weights are much cheaper, start new epoch}

    \vskip 5pt
    \If{$\mass\left( \{w_{s, t}\} \right) > \log n \cdot \mass\left( \{w'_{s, t}\}\right)$}{
    \vskip2pt
        $\epoch \gets \epoch + 1$\;
        $\{w_{s, t}\} \gets \{w'_{s, t}\}$\;
    }
    \end{mdframed}    
    
    \begin{mdframed}[backgroundcolor=blue!20, linecolor=blue!20, rightmargin=40pt]
    \tcp{Take a "random level cut" with respect to current weights}
    \vskip3pt
    $(s, t) \gets$ remove and return a random demand pair from $P$\;
    sample $d$ uniformly at random from interval $[0, 1]$\;
    add all nodes $v$ to $X$ for which $d \in \big[\vdist_{s, t}(s, v), \vdist_{s, t}(s, v) + w_{s, t}(v)\big]$\;
    \end{mdframed}

}

\Return{$X$}
\end{algorithm}
\end{mdframed}

This algorithm introduces two pieces of notation that we will continue to use through the rest of the analysis.
First, we write $\vdist_{s, t}$ as a shorthand for the vertex-weighted distance under the current weight function $w_{s, t}$.
Second, we define:
\begin{definition} [$\mass$ function]
At any point in the algorithm, we define
$$\mass\left(\{w_{s, t}\}\right) := \sum \limits_{(s, t) \in P} \sum \limits_{v \in V \setminus X} w_{s, t}(v).$$
\end{definition}

Our goal in the rest of this section is to prove correctness of the algorithm, i.e., that $X$ is a valid integral cut.


\begin{lemma} \label{lem:rlccorrect}
When we select a pair $(s, t) \in P$ in the main loop of the algorithm, if for two nodes $u, v \in V$ we sample $d$ in the interval $\left[\vdist_{s, t}(s, u), \vdist_{s, t}(s, v)\right),$
then afterwards we have either $u \in X$ or $X$ is an integral cut for $(u, v)$.
\end{lemma}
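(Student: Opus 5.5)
Suppose $d \in [\vdist_{s,t}(s,u), \vdist_{s,t}(s,v))$ and $u \notin X$ after the round. The plan is to show that every $u \leadsto v$ path in $G$ has an internal node in $X$ after the round. Here $\vdist_{s,t}$ is the vertex distance under the current weights $w_{s,t}$, computed at the start of the epoch. The set $X$ only grows, so it is enough to show that the nodes added in this round, together with earlier nodes of $X$, hit every such path internally. Since the claim concerns $X$ only after this round, there is no issue with $X$ changing within the epoch.

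The first step is a potential argument along a fixed $u \leadsto v$ path $\pi = (u = x_0, x_1, \dots, x_k = v)$. By the definition of vertex distance (interior vertices only), for each $i$ we have
\[
\vdist_{s,t}(s, x_{i+1}) \le \vdist_{s,t}(s, x_i) + w_{s,t}(x_i),
\]
since an $s \leadsto x_i$ path extended by the edge $(x_i, x_{i+1})$ has $x_i$ as a new interior vertex. (If $x_i = s$, the term $\vdist_{s,t}(s,s)$ is $0$ and the inequality still holds with $w(s)$ added, or even without it.) Define $f(i) = \vdist_{s,t}(s, x_i)$. Then $f(0) \le d < f(k)$.

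The second step picks out an interior node of $\pi$ that the round cuts. Let $i$ be the largest index with $f(i) \le d$. Then $0 \le i < k$, and
\[
f(i) \le d < f(i+1) \le f(i) + w_{s,t}(x_i),
\]
so $d \in [f(i), f(i)+w_{s,t}(x_i)]$ and the algorithm adds $x_i$ to $X$. If $i \ge 1$, then $x_i$ is internal to $\pi$, and we are done. If $i = 0$, then $x_0 = u$ is added to $X$, which is the first alternative of the lemma.

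The last step handles $u$ being chosen as the cut node for some paths but not others. If $u \in X$ at the end, the first alternative holds outright. Otherwise, no path ever selects $i = 0$, so every $u \leadsto v$ path has an internal node in $X$ and $X$ cuts $(u,v)$.

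The main obstacle is making sure the distance inequality matches the interior-only convention, and checking that weights equal to $1$ on earlier $X$-nodes (in later epochs) do not break it. They do not: the argument uses only the triangle-type inequality for the fixed weight function $w_{s,t}$, which holds for any nonnegative weights.
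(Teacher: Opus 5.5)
Your proposal is correct and follows the paper's proof. Both fix an arbitrary $u \leadsto v$ path, use an intermediate-value step to find consecutive nodes $x_i, x_{i+1}$ straddling $d$, and apply $\vdist_{s,t}(s,x_{i+1}) \le \vdist_{s,t}(s,x_i) + w_{s,t}(x_i)$ to conclude that $x_i$ is added to $X$; your largest-index choice and your explicit handling of the ``either $u \in X$ or every path is hit internally'' quantifier just make precise what the paper leaves implicit.
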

\begin{proof}
Let $\pi$ be an arbitrary $u \leadsto v$ path, and let its vertex sequence be
$\pi = \left(u=x_0, x_1, \dots, x_{k-1}, x_k=v\right).$
By assumption we have
$\vdist_{s, t}(s, u) \le d \le \vdist_{s, t}(s, v).$
Thus, by an intermediate value type argument, there is some index $0 \le i \le k-1$ for which we have
$\vdist_{s, t}(s, x_i) \le d \le \vdist_{s, t}(s, x_{i+1}).$
Additionally, by the triangle inequality over vertex distances, we have
$\vdist_{s, t}(s, x_{i+1}) \le \vdist_{s, t}(s, x_i) + w_{s, t}(x_i).$
Combining this with the previous inequality, we have
$\vdist_{s, t}(s, x_i) \le d \le \vdist_{s, t}(s, x_i) + w_{s, t}(x_i).$
Thus we add $x_i$ to $X$ in the random level cut.
So either $x_i=u$ or $X$ contains an internal vertex of $\pi$; in either case, the lemma holds.
\end{proof}

\begin{lemma}
Algorithm \ref{alg:main-alg} returns a correct integral cut $X$ (deterministically).
\end{lemma}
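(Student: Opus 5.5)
The plan is to show that every pair $(a,b)$ with $\vdist_G(a,b) \ge L$ is integrally cut by the final $X$. Two facts are used: the fractional weights $w_{s,t}$ in use when $(s,t)$ is selected satisfy $\vdist_{s,t}(s,t) \ge 1$, and Lemma~\ref{lem:rlccorrect} can be applied to the pair $(s,t)$ itself. First, termination. Each iteration of the main loop removes one pair from $P$ and nothing is ever added back, so the loop runs $|P|$ times. Every pair in the initial $P$ is therefore selected exactly once.

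Next, I will establish the invariant that at the moment $(s,t)$ is selected, the current weights satisfy $\vdist_{s,t}(s,t) \ge 1$, where distance counts only interior vertices.
\begin{itemize}
\item \emph{Initial weights.} All nodes have weight $\frac1L$. Since $\vdist_G(s,t) \ge L$, every $s \leadsto t$ path has at least $L$ interior vertices, so its interior weight is at least $1$.
\item \emph{After an epoch change.} The weights become $w'_{s,t}$, which are by definition fractional $(s,t)$ cuts, so the invariant holds at that moment.
\item \emph{Between epoch changes.} The weights do not change. Paths in $G$ are the same whether or not $X$ grows, since $\vdist_{s,t}$ is measured in $G$ with weights $w_{s,t}$.
\end{itemize}
One point needs care: the candidate $w'_{s,t}$ is required to be a cut only in combination with $X$, since nodes of $X$ have weight $1$. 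I will read $\vdist_{s,t}$ as computed in $G$ with $w_{s,t}$ assigning weight $1$ to nodes of $X$. This is consistent with the algorithm, because the level cut adds a node $v \in X$ to $X$ again harmlessly.

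Now fix the round in which $(s,t)$ is selected. We have $\vdist_{s,t}(s,s)=0$ and $\vdist_{s,t}(s,t) \ge 1$, and $d$ is sampled from $[0,1]$. Thus $d \in [\vdist_{s,t}(s,s), \vdist_{s,t}(s,t))$, except possibly at $d = 1 = \vdist_{s,t}(s,t)$. That boundary case has probability zero, but I want a deterministic guarantee. To handle it, I will rerun the intermediate-value argument of Lemma~\ref{lem:rlccorrect} with closed intervals, which its proof already supports: it finds $x_i$ with $\vdist(s,x_i) \le d \le \vdist(s,x_i) + w(x_i)$ for any $d$ in the closed range.

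Applying Lemma~\ref{lem:rlccorrect} with $u=s$ and $v=t$, every $s \leadsto t$ path contains a vertex $x_i \in X$ with $i<k$. The remaining obstacle is that $x_i$ may be $s$ itself, which is not an interior vertex. For this I will use the bound $\vdist_{s,t}(s,x_1) \le w(\text{interior before }x_1)=0$ together with $\vdist_{s,t}(s,s)=0$. The index $i$ can be chosen as the largest index with $\vdist(s,x_i) \le d$. If $d > 0$, then $x_1$ satisfies $\vdist(s,x_1)=0\le d$, so the largest such index is at least $1$ whenever the path has an interior vertex, and the chosen vertex is interior. The case $d=0$ needs separate attention, or can be excluded as measure zero. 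If $d=0$, $x_1$ still qualifies since $\vdist(s,x_1)=0$, so again $i \ge 1$. Finally, every path has interior vertices, because $\vdist_G(s,t) \ge L \ge 1$.

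So after this round every $s \leadsto t$ path has an interior vertex in $X$. Since $X$ only grows afterward, this persists to the end, and $X$ is a correct integral cut for all of $P$.
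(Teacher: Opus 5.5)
Your proof is correct and follows essentially the paper's route: both arguments use $\vdist_{s,t}(s,x_1)=0\le d\le 1\le \vdist_{s,t}(s,t)$ together with the level-cut argument of Lemma~\ref{lem:rlccorrect}. The paper avoids reopening that lemma's proof by applying it directly to the pair $(x_1,t)$, where $x_1$ is the second node of the path, so that the alternative ``$x_1\in X$'' already gives an interior vertex; your explicit checks of the invariant $\vdist_{s,t}(s,t)\ge 1$ and of the boundary value $d=1$ fill in points the paper passes over without comment.
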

\begin{proof}
Each pair $(s, t) \in P$ will be considered in some round of the algorithm.
When we do, consider any $s \leadsto t$ path $\pi$, and let $x$ be its second node.
We have $\vdist_{s, t}(s, x) = 0$ (since they are connected by a single edge), and we also have $\vdist_{s, t}(s, t) \ge 1$ (since $w_{s, t}$ is a fractional cut).
Thus we will necessarily sample $d$ in the interval $[\vdist_{s, t}(s, x), \vdist_{s, t}(s, t)]$.
By the previous lemma, this implies that we will have $x \in X$ or $X$ is an integral cut for $(x, t)$.
In either case, $X$ contains an internal vertex of $\pi$.
Since an arbitrary $s \leadsto t$ path $\pi$ has an internal vertex in $X$, this means $X$ forms a cut for $(s, t)$, completing the lemma.
\end{proof}

We now turn our attention to bounding $|X|$, the size of the cut returned by this algorithm.
Our goal will be to show that the algorithm succeeds with a large constant probability; naturally, one can boost this to a high probability guarantee by re-running the algorithm $O(\log n)$ times and taking the minimum $|X|$ among the cuts returned.

\subsection{Bounding Cut Size}

Let us say that an \emph{epoch} of the algorithm is a sequence of iterations of the main \emph{while} loop in which the $\epoch$ parameter is not increased (i.e., the final \emph{if} condition does not trigger).
Within an epoch, each iteration of the main loop in which one demand pair is removed and cut is called a \emph{round}.
We can control the total number of epochs with a simple counting argument:

\begin{lemma} [Bound on Number of Epochs] \label{lem:numepochs}
The number of epochs in the algorithm is at most $O\left(\frac{\log n}{\log \log n}\right)$.
\end{lemma}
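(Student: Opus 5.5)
We prove the bound by a potential argument on $\mass(\{w_{s,t}\})$: we bound its value at the start, show it drops by a factor of more than $\log n$ at every epoch change, and show it cannot fall below a fixed positive floor while $P$ is nonempty. Comparing the two endpoints then caps the number of drops.

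\textbf{Upper bound at the start.} Initially every $w_{s,t}$ assigns weight $\frac{1}{L}$ to every node. Since $|P| \le n^2$ and $|V| \le n$, we get
$$\mass(\{w_{s,t}\}) \le n^2 \cdot n \cdot \frac{1}{L} \le n^3.$$

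\textbf{Decrease across epoch changes.} When the \emph{if} condition triggers, the new mass satisfies $\mass(\{w'_{s,t}\}) < \mass(\{w_{s,t}\})/\log n$. Between epoch changes the mass never increases. Removing a pair from $P$ deletes nonnegative terms from the sum, and adding nodes to $X$ removes nodes from the inner sum $\sum_{v \in V\setminus X}$. The comparison in the \emph{if} test is made against the current mass at that moment, so these monotone decreases only help. Hence after $k$ epoch changes,
$$\mass(\{w_{s,t}\}) \le \frac{n^3}{(\log n)^k}.$$

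\textbf{Lower bound while $P$ is nonempty.} The new weights $w'_{s,t}$ are feasible fractional $(s,t)$ cuts in which nodes of $X$ have weight $1$. If $(s,t) \in P$ were already cut by $X$, the algorithm could still pick weights, so we need a floor that holds in general. I would argue this using only the weight cap and a single path.

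Every node of $G \setminus X$ has weight at most $\frac{4^{\epoch}}{L}$. Consider a pair $(s,t) \in P$ that is not yet integrally cut. It has an $s \leadsto t$ path in $G \setminus X$ of length at least $L$ interior vertices, since $\vdist_G(s,t) \ge L$ and removing nodes cannot shorten paths. To be a fractional cut, $w'_{s,t}$ must put total weight at least $1$ on that path, all on nodes in $V \setminus X$. So the mass is at least $1$ whenever some remaining pair is not already cut.

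If every remaining pair is already integrally cut, the remaining rounds are harmless. We may treat such pairs as removed, or note that a minimum fractional cut for them can place all its weight on $X$, contributing mass $0$. In that case no further cut nodes matter, and I would either argue that the algorithm effectively terminates or handle it as a degenerate final epoch.

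Combining the bounds, $1 \le n^3/(\log n)^k$, so
$$k \le \frac{3\log n}{\log\log n},$$
which gives $O\!\left(\frac{\log n}{\log\log n}\right)$ epochs.

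\textbf{Main obstacle.} The delicate step is the lower bound. I need to confirm that the weights chosen after an epoch change, and thereafter, still force mass at least about $1$, or at least $1/\poly(n)$, whenever $P$ contains a pair not cut by $X$. A floor of $1/\poly(n)$ changes only the constant in the bound.

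I also need the degenerate case, where all remaining pairs are cut by $X$ and the mass can be $0$, to add at most one extra epoch. This holds because once the mass is $0$ the test $\mass(\{w\}) > \log n \cdot 0$ triggers only if the current mass is positive, after which the mass stays $0$ and never triggers again.
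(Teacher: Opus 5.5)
Your proposal is correct and matches the paper's proof. In both, $\mass(\{w_{s,t}\})$ serves as a potential that starts at most $n^3$, drops by a factor of more than $\log n$ at each epoch change, is at least $1$ while some pair in $P$ is not yet cut (by the fractional-cut property applied to a path whose interior avoids $X$), and becomes $0$ once every pair is cut. Your explicit points about monotonicity within an epoch, and about the degenerate case adding at most one extra epoch, make precise what the paper leaves implicit; the weight cap $4^{\epoch}/L$ and the length-$L$ path observation are not needed for the floor.
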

\begin{proof}
We will use the quantity $\mass(\{w_{s, t}\})$ as a potential function.
Initially, we have the (somewhat loose) bound
$$\mass\left(\{w_{s, t}\}\right) \le \underbrace{n^2}_{\text{node pairs } |P|} \cdot \underbrace{n}_{\text{nodes}} \cdot \underbrace{\frac{1}{L}}_{\text{weight per node}} \le n^{3}.$$
If there is at least one demand pair $(s, t) \in P$ for which $X$ is not yet a cut, we have
$$\mass\left(\{w_{s, t}\}\right) \ge 1,$$
since the weights in $w_{s, t}$ on $V \setminus X$ must fractionally cut at least one path, so they must sum to at least $1$.
On the other hand, if $X$ is a cut for every demand pair in $P$, then we will trivially have
$$\mass\left(\{w_{s, t}\}\right) = 0$$
since the minimizing choice of fractional cuts $\{w'_{s, t}\}$ in the algorithm does not need to assign any weight to nodes in $V \setminus X$.
Finally, each time we start a new epoch, by construction $\mass(\{w_{s, t}\})$ decreases by at least a factor of $\log n$.
It follows that all demand pairs $P$ will be cut by $X$ within number of epochs
\begin{align*}
O\left(\log_{(\log n)} n^3\right) = O\left(\frac{\log n}{\log \log n}\right). \tag*{\qedhere}
\end{align*}
\end{proof}

In part, the significance of the previous lemma is that for any vertex $v \in V \setminus X$ and any weight function $w_{s, t}$, at any point in the algorithm we will have
$$w_{s, t}(v) \le \frac{4^{\epoch}}{L} \le \frac{4^{O(\log n / \log \log n)}}{L} = \frac{n^{o(1)}}{L} = \Ohat\left(\frac{1}{L}\right),$$
meaning that weights do not blow up too far beyond $\frac{1}{L}$ throughout the algorithm.
We remark that weight blowup is the only reason we lose $n^{o(1)}$ factors, rather than merely $\texttt{polylog} n$ factors, in our bounds.

\begin{lemma} [Cost of $i$-Round Epoch] \label{lem:epochcost}
For any epoch and any positive integer $i$, the expected number of new nodes added to $X$ during the first $i$ rounds of the epoch is at most
$$\mass\left(\{w_{s, t}\}\right) \cdot \left( \frac{i}{|P|}\right)$$
where $\mass, |P|$ are measured at the beginning of the epoch.
\end{lemma}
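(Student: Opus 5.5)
The bound comes from linearity of expectation, applied round by round. The key facts are that within an epoch the weights $\{w_{s,t}\}$ are frozen, and that the pair cut in each round is drawn uniformly from the remaining pairs.

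First fix round $j \le i$ of the epoch, and condition on the whole history before that round. Let $P_j \subseteq P$ be the demand pairs still remaining, and let $X_j \supseteq X$ be the current cut. The selected pair $(s,t)$ is uniform over $P_j$, and $d$ is uniform in $[0,1]$. A node $v \in V \setminus X_j$ is newly added only if $d$ lands in an interval of length $w_{s,t}(v)$, which happens with probability at most $w_{s,t}(v)$. Nodes already in $X_j$ contribute no new nodes. So the expected number of new nodes in round $j$, given the history, is at most
\[
\frac{1}{|P_j|}\sum_{(s,t)\in P_j}\ \sum_{v\in V\setminus X_j} w_{s,t}(v) \;\le\; \frac{1}{|P_j|}\,\mass\left(\{w_{s,t}\}\right),
\]
where $\mass$ is measured at the beginning of the epoch. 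This inequality uses that the weights do not change inside the epoch, that $P_j \subseteq P$, that $V\setminus X_j \subseteq V\setminus X$, and that weights are nonnegative.

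The main obstacle is that the denominator is $|P_j| = |P| - (j-1)$ rather than $|P|$. Bounding it crudely would give a factor $\sum_{j} 1/(|P|-j+1)$, a harmonic-type sum, instead of $i/|P|$. The fix is to use the random order more carefully. Conditioned on which pairs were removed, the sum over $P_j$ is a sum over a uniformly random subset of $P$ of size $|P_j|$. This holds because the pairs removed so far form a uniformly random subset: selection is uniform at each step, and the weights, hence the per-pair summands, are fixed at the start of the epoch.

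Concretely, set $m_{s,t} := \sum_{v\in V\setminus X} w_{s,t}(v)$, using the epoch-start $X$. Then the conditional bound becomes $\frac{1}{|P_j|}\sum_{(s,t)\in P_j} m_{s,t}$. Taking expectation over the uniformly random set $P_j$, which is a uniformly random $(|P|-j+1)$-subset of $P$, each pair appears with probability $|P_j|/|P|$. So the expectation equals exactly $\frac{1}{|P|}\sum_{(s,t)\in P} m_{s,t} = \mass(\{w_{s,t}\})/|P|$.

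The per-pair summands $m_{s,t}$ are fixed by the epoch-start $X$ and do not depend on the history; only the set $P_j$ is random. Replacing $V\setminus X_j$ by $V\setminus X$ only increases the bound, so the dependence of $X_j$ on the history does no harm. The selection of pairs is independent of the cut randomness beyond its uniformity, so the exchangeability argument applies. Summing over $j=1,\dots,i$ by linearity of expectation gives at most $\mass(\{w_{s,t}\})\cdot i/|P|$. If the epoch ends earlier, the remaining rounds contribute zero.
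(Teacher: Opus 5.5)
Your argument is correct and is essentially the paper's proof. Both rest on linearity of expectation, frozen weights within the epoch, and the fact that the selected pairs form a prefix of a uniformly random ordering of $P$, so a selected pair $(s,t)$ adds at most $m_{s,t}$ new nodes in expectation. The only difference is bookkeeping: the paper organizes the same double sum per pair rather than per round, noting that each pair is selected within the first $i$ rounds with probability at most $i/|P|$, which sidesteps the $|P_j|$ denominator you had to average away.

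One point of reading, not a gap: your expectation over a uniformly random $P_j$ is valid only in the coupled process with a full random permutation, and there you drop the nonnegative indicator that round $j$ still belongs to the epoch. Conditioned on the epoch surviving to round $j$, the set $P_j$ need not be uniform. The paper handles this point equally informally, in its footnote.
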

\begin{proof}
The probability that any given pair $(s, t) \in P$ is selected in the first $i$ rounds of the epoch is at most\footnote{The probability is generally a bit less than this, only because there is a chance that the epoch ends in fewer than $i$ rounds.} $i / |P|$.
If $(s, t)$ is selected, then the expected number of new nodes added to $X$ in that round of the \emph{while} loop is:
\begin{align*}
     & \ \sum_{v \in V \setminus X} \Pr\left[\text{sample } d \in \left[\vdist_{s, t}(s, v), \vdist_{s, t}(s, v) + w_{s, t}(v)\right]\right]\\
    \le& \ \sum_{v \in V \setminus X} w_{s, t}(v).
\end{align*}
The lemma now follows by applying linearity of expectation to sum this bound over all demand pairs.
\end{proof}

Finally, we will need the following lemma, bounding the number of rounds per epoch.
The proof is far more involved, and so we will begin it in the next subsection.

\begin{mdframed}[backgroundcolor=gray!10]
\begin{lemma} [Bound on Rounds per Epoch] \label{lem:roundbound}
With high probability, every epoch will terminate within at most
$$\Ohat\left(\frac{|P| \cdot n^{3/2}}{\mass\left(\{w_{s, t}\}\right) \cdot L^{3/2}}\right)$$
rounds of the main loop, where $\mass$ and $|P|$ are measured at the beginning of the epoch.
\end{lemma}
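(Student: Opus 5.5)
Write $R := \frac{|P| \cdot n^{3/2}}{\mass(\{w_{s,t}\}) \cdot L^{3/2}}$, with all quantities measured at the start of the epoch. The plan is to show that after $\Ohat(R)$ rounds, the mass of the current optimal fractional cuts $\{w'_{s,t}\}$ has dropped by more than a $\log n$ factor below $\mass(\{w_{s,t}\})$. Since the weights $\{w_{s,t}\}$ are frozen during the epoch, this triggers the \emph{if} condition and ends the epoch (or else $P$ empties). We argue by contradiction: suppose that throughout the first $\Ohat(R)$ rounds we still have $\mass(\{w'_{s,t}\}) \ge \mass(\{w_{s,t}\})/\log n$.

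\textbf{Step 1: path witnesses.} Under this assumption, the remaining demand pairs still carry a lot of fractional mass under cuts whose weights are capped at $4^{\epoch}/L$. Since the epoch did not end, $w_{s,t}$ itself satisfies the cap $4^{\epoch-1}/L$ on $V\setminus X$. We apply the approximate MCMF statement (the path-witness lemma, Lemma~\ref{lem:pathwitness}) to each remaining pair $(s,t)$. This yields roughly $w_{s,t}(V\setminus X)\cdot L/n^{o(1)}$ internally-disjoint $s \leadsto t$ paths in the transitive closure of $G\setminus X$, each of length about $\Omegahat(L)$. Summed over pairs, these witnesses have total size $\Omegahat(\mass \cdot L)$.

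\textbf{Step 2: $\val$ and concentration.} For each reachable pair $(u,v)$, define $\val(u,v)$ as the sum, over demand pairs $(s,t)$ having $(u,v)$ between them, of the probability that a random level cut for $(s,t)$ separates $(u,v)$. This probability is roughly $\vdist_{s,t}(u,v)$, normalized by $1/|P|$ for the random choice of pair. Because the weights are fixed within the epoch, $\val$ can only decrease as pairs are removed. A Chernoff/martingale argument (Lemma~\ref{lem:valub}) then shows that, with high probability, every pair with $\val(u,v) \ge \Ohat(1/r)$ is separated within $r$ rounds.

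\textbf{Step 3: charging and triangles.} This is the main obstacle. Either many witness-path pairs $(u,v)$ have high $\val$, in which case they are separated quickly and the witnesses collapse; or the charging scheme lets us prune the instance until the triangle structure of Lemma~\ref{lem:pathstructure} appears. In the triangle case, the witness paths of different demand pairs overlap in a triple-wise pattern. A counting argument, via Cauchy--Schwarz over triples (the source of the $3/2$ exponent), then shows that some far-apart pairs $(u,v)$ at distance $\Omega(L)$ lie between at least $\Omegahat(|P| L^{3/2}/(\mass\, n^{1/2}))$ demand pairs with substantial weight. These pairs therefore have $\val \ge \Omegahat(1/R)$.

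\textbf{Conclusion.} Within $\Ohat(R)$ rounds, Step 2 separates every such pair. This destroys a constant fraction of the witness paths. Iterating this $O(\log n)$ times drives the capped min-cut mass below $\mass/\log n$, which contradicts the assumption and ends the epoch.

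The parameter balancing in Step 3 is where I expect the difficulty to lie. The factor-$4$ relaxation of the cap and the requirement $L \ge n^{1/3}$ must be used there to ensure the triangles have enough length.
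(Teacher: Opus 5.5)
\textbf{There is a gap in your concluding step.} Your ingredients are the right ones: path witness, $\val$ with its concentration bound, charging, and the prefix/suffix triangle structure. The failure is in how you close. You claim that separating the high-$\val$ pairs ``destroys a constant fraction of the witness paths,'' and that iterating this $O(\log n)$ times pushes $\mass(\{w'_{s,t}\})$ below the epoch threshold. Nothing supports this, for three reasons.
\begin{itemize}
\item Each case of the charging argument yields a single pair, $e\in S$ or $e=(u,v_1)$. In the triangle case its $\val$ comes from only about $\mass\cdot(L/n)^{3/2}$ demand pairs, out of roughly $\mass$ witness paths.
\item The witness is rebuilt from scratch every round anyway.
\item You give no mechanism that turns lost witness paths into a cheaper capped fractional cut.
\end{itemize}

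\textbf{What you are missing} is that no mass drop needs to be engineered. Your contradiction hypothesis $\mass(\{w'_{s,t}\})\ge\mass(\{w_{s,t}\})/\log n$ is exactly ``the epoch has not ended.'' That is what gives $|\Pi|\ge\Omegahat(\mass)$ in Lemma~\ref{lem:pathwitness}. The pair your Steps 1 and 3 produce at round $r$ is \emph{still connected}: there is a $u\leadsto v$ path in $G\setminus X$. This holds because either the pair lies in the shortcut set $S$, or $u$ and $v$ are a prefix node and a suffix node of one witness path.

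This is Lemma~\ref{lem:vallb}, and it holds at every round of a live epoch. Lemma~\ref{lem:valub}, union-bounded over all pairs, says that w.h.p.\ no pair with $\val(u,v)\ge\mass\cdot\Omegahat(L/n)^{3/2}$ is still connected after $\Ohat(|P|/\val(u,v))=\Ohat(R)$ rounds. So an epoch still alive at round $r$ forces $r\le\Ohat(R)$ at once. That one-shot contradiction is the paper's entire proof; no iteration is involved. Your remark that $\val$ only decreases within an epoch is what justifies comparing the current-round $\val$ with the start-of-epoch $\val$ used in Lemma~\ref{lem:valub}.

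\textbf{Quantitative corrections.}
\begin{itemize}
\item \emph{Step 1.} Lemma~\ref{lem:pathwitness} gives $|\Pi|\ge\Omegahat(\mass)$ paths in aggregate, not per pair, since each path carries only $\Ohat(1)$ weight. It does not give about $w_{s,t}(V\setminus X)\cdot L$ paths per pair; $\mass\cdot L$ is the total number of path vertices.
\item \emph{Step 3.} The number of demand pairs contributing $\Omegahat(1)$ to $\val(e)$ is $\Omegahat(L^2d/(\sigma n))=\Omegahat(\mass\,(L/n)^{3/2})$. It is this count, divided by $|P|$, that equals $1/R$. Your expression $|P|L^{3/2}/(\mass\,n^{1/2})$ does not.
\item \emph{Source of the exponent.} The $3/2$ comes from balancing $\sigma d/L$ (Lemma~\ref{lem:longcharge}) against $L^2d/(\sigma n)$ (Lemma~\ref{lem:shortcharge}) at $\sigma=L^{3/2}n^{-1/2}$. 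It does not come from Cauchy--Schwarz over triples; Cauchy--Schwarz is used only, over suffix degrees, to choose the base path.
\item \emph{Where the hypotheses are used.} The condition $L\ge n^{1/3}$ is needed only to ensure $\sigma\ge 1$. The factor-$4$ cap relaxation is used in the path-witness lemma, so that $4w_{s,t}$ restricted to witness vertices is an admissible candidate in the epoch test. Neither is used in the triangle step.
\end{itemize}
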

\end{mdframed}

We next put the pieces together, and see how these lemmas imply Lemma \ref{lem:vcr}.

\begin{proof} [Proof of Lemma \ref{lem:vcr}, assuming Lemma \ref{lem:roundbound}]
We assume that the high probability event in Lemma \ref{lem:roundbound} holds in every epoch.
We have:
\begin{align*}
\mathbb{E}[|X|] &\le
\underbrace{\Ohat(1)}_{\text{number of epochs (Lemma \ref{lem:numepochs})}} \cdot \underbrace{\frac{\mass\left(\{w_{s, t}\}\right)}{|P|}}_{\text{Lemma \ref{lem:epochcost}}} \cdot \underbrace{\Ohat\left(\frac{|P| \cdot n^{3/2}}{\mass\left(\{w_{s, t}\}\right) \cdot L^{3/2}}\right)}_{\text{Lemma \ref{lem:roundbound}}}\\
&\le \Ohat\left(\frac{n^{3/2}}{L^{3/2}}\right). \tag*{\qedhere}
\end{align*}
\end{proof}

It now remains to prove Lemma \ref{lem:roundbound}.

\subsection{Proof of Lemma \ref{lem:roundbound} and the $\val$ Function \label{sec:val}}

The following $\val$ function will be used in the proof of Lemma \ref{lem:roundbound}:

\begin{definition} [$\val$ function]
    In an epoch, for a pair of nodes $(u, v)$ and a demand pair $(s, t) \in P$, we define
    $$\val_{s, t}(u, v) := \Pr_{d \sim [0, 1]}\bigg[ d \in \left[\vdist_{s, t}(s, u), \vdist_{s, t}(s, v)\right]\bigg]$$
    and
    $$\val(u, v) := \sum \limits_{(s, t) \in P} \val_{s, t}(u, v).$$
\end{definition}

The motivation for this definition is from Lemma \ref{lem:rlccorrect}: $\val_{s, t}(u, v)$ is precisely the probability that a random level cut for $(s, t)$ falls in the proper interval to either add $u$ to $X$, or to cut the pair $(u, v)$.
Thus, intuitively, a pair with high $\val(u, v)$ is one where it is relatively likely that, when we next choose a random $(s, t)$ and then take a random level cut for $(s, t)$, we will add $u$ to $X$ or cut the pair $(u, v)$.
This interpretation is the motivation behind the following lemma:

\begin{lemma} \label{lem:valub}
With high probability,\footnote{As usual, ``with high probability'' means probability at least $1 - \frac{1}{n^C}$, where the constant $C$ may be pushed arbitrarily high by choice of high enough implicit constants.} for all pairs of nodes $(u, v)$ with $\val(u,v) > 0$, if 
$$\Omega\left( \frac{|P|}{\val(u, v)} \cdot \log n \right)$$
rounds of the epoch pass then there will no longer be a $u \leadsto v$ path in $G \setminus X$.
(This is either because $X$ is a cut for $(u, v)$, or because $u \in X$.
The value of $|P|$ is measured at the start of the epoch.)
\end{lemma}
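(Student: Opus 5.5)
The plan is to fix a pair $(u,v)$ with $\val(u,v)>0$ and analyze a single round in isolation. Within an epoch the weights $\{w_{s,t}\}$ are frozen, so $\vdist_{s,t}$ and hence each $\val_{s,t}(u,v)$ is a fixed number for the whole epoch; only the set $P$ of remaining demand pairs and the set $X$ evolve. In a round, a uniformly random pair is removed from the current $P$, which has size at most $|P_0|$ (the size at the start of the epoch). Suppose the chosen pair is $(s,t)$ and $d$ lands in $[\vdist_{s,t}(s,u),\vdist_{s,t}(s,v))$. Then Lemma \ref{lem:rlccorrect} gives that afterwards $u\in X$ or $X$ cuts $(u,v)$. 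This still holds mid-epoch, because the proof of Lemma \ref{lem:rlccorrect} uses only the triangle inequality for $\vdist_{s,t}$ under the fixed weights $w_{s,t}$, together with the level-cut rule that uses those same weights. Once $u\in X$ or $X$ cuts $(u,v)$, no $u\leadsto v$ path survives in $G\setminus X$, and since $X$ only grows, this persists.

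The subtle point is that the success probability of a round depends on which demand pairs remain, and high-value pairs might be removed early. I would handle this with a charging argument.

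\begin{itemize}
\item Call a round \emph{successful for $(u,v)$} if it selects some $(s,t)$ and the sampled $d$ lands in the interval above.
\item Each pair $(s,t)$ is removed exactly once, and whenever it is the selected pair, $d$ is fresh and independent of the past. Conditioned on $(s,t)$ being selected, success therefore has probability exactly $\val_{s,t}(u,v)$, up to the measure-zero endpoint difference.
\item Over the first $R$ rounds, the total ``success mass'' available is $\sum_{(s,t)\in S_R}\val_{s,t}(u,v)$, where $S_R$ is the set of selected pairs.
\end{itemize}

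Because the selection is a uniformly random ordering of $P_0$ (within the epoch the order is just a random permutation prefix), the expected value of this sum is $\frac{R}{|P_0|}\val(u,v)$. To get a high-probability statement, I would bound the probability that no round succeeds directly:
$$\Pr[\text{no success in first } R \text{ rounds}] = \mathbb{E}\Big[\prod_{(s,t)\in S_R}\big(1-\val_{s,t}(u,v)\big)\Big] \le \mathbb{E}\Big[\exp\Big(-\sum_{(s,t)\in S_R}\val_{s,t}(u,v)\Big)\Big].$$
The terms $\val_{s,t}\in[0,1]$ are sampled without replacement, and sampling without replacement is dominated in convex order by sampling with replacement (Hoeffding). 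With replacement, each draw contributes a factor $\mathbb{E}[e^{-\val_{s,t}}]\le 1-(1-e^{-1})\val(u,v)/|P_0|$. So the probability above is at most
$$\exp\big(-c R\,\val(u,v)/|P_0|\big).$$

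A few details remain. The epoch may end before $R$ rounds, but then the statement is vacuous. If $P$ is exhausted first, every pair has been selected, the sum equals $\val(u,v)$, and I would check that this case is fine, since $R\le|P_0|$ forces $\val(u,v)\gtrsim\log n$. Choosing $R=C\frac{|P_0|}{\val(u,v)}\log n$ makes the failure probability $n^{-cC}$. A union bound over the at most $n^2$ pairs $(u,v)$, and over the $\Ohat(1)$ epochs from Lemma \ref{lem:numepochs}, then finishes the proof.

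The main obstacle is the dependence between the random order of demand pairs and the $d$-samples, which the convex-order (without-replacement) comparison handles. I would also double-check that frozen weights really make $\val$ and the conclusion of Lemma \ref{lem:rlccorrect} valid throughout the epoch, even as nodes are added to $X$.
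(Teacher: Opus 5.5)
Your proposal is correct, and its structure matches the paper's proof. You fix $(u,v)$, observe that frozen epoch weights make each $\val_{s,t}(u,v)$ a fixed per-round success probability via Lemma \ref{lem:rlccorrect}, and reduce to bounding the average of $\prod_{(s,t)\in P'}(1-\val_{s,t}(u,v))$ over a uniformly random $i$-subset $P'\subseteq P$.

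The only real difference is the inequality used to control this without-replacement average.

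\begin{itemize}
\item The paper applies Maclaurin's inequality directly to the nonnegative numbers $1-\val_{s,t}(u,v)$. This gives $\left(1-\val(u,v)/|P|\right)^i \le e^{-i\val(u,v)/|P|}$ in one line, with no constant loss.
\item You instead pass to $\exp\bigl(-\sum \val_{s,t}(u,v)\bigr)$, invoke Hoeffding's comparison between sampling without and with replacement, and finish with the chord bound $e^{-y}\le 1-(1-e^{-1})y$. This is valid, but it uses a heavier tool and costs a factor $1-e^{-1}$ in the exponent, which is harmless here.
\end{itemize}

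Maclaurin's inequality is itself the product-specific form of the same ``without replacement is dominated by with replacement'' principle. That is why it fits the quantity at hand without the detour through sums.

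Your concern about $P$ being exhausted is unnecessary. If $R\le|P_0|$, the first $R$ selections always form a without-replacement sample. If $R>|P_0|$, the epoch cannot last $R$ rounds, so the claim is vacuous.

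Your explicit union bound over pairs and epochs is a detail the paper leaves implicit.
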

\begin{proof}
Let $(u,v)$ be a pair of nodes with positive $\val(u, v)$.
In any round where we consider a demand pair $(s,t)$, if we sample
$$d \in \left[\vdist_{s, t}(s, u), \vdist_{s, t}(s, v)\right]$$
then by Lemma \ref{lem:rlccorrect} there will be no more $u \leadsto v$ paths in $G \setminus X$.
By definition, the probability that this sampling occurs is $\val_{s, t}(u, v)$.

For any positive integer $i$, assuming that the epoch lasts at least $i$ rounds, the subset of pairs $P' \subseteq P$ selected in the first $i$ rounds is a uniform $i$-element subset of $P$.
The set of all such subsets is denoted $P \choose i$.
Using this, we may bound:
\begin{align*}
\Pr \left[ \text{exists } u \leadsto v \text{ path in } G \setminus X \text{ after $i$ rounds} \right] &\le \frac{1}{\binom{|P|}{i}}\sum \limits_{P' \in \binom{P}{i}} \prod \limits_{(s, t) \in P'} (1 - \val_{s, t}(u, v))\\
&\le  \left( \frac{1}{|P|} \cdot \sum \limits_{(s, t) \in P} 1 - \val_{s, t}(u, v) \right)^i \tag*{Maclaurin's inequality}\\
&= \left(1 - \frac{\val(u, v)}{|P|}\right)^i\\
&\le e^{-i \cdot \val(u, v) / |P|} \tag*{since $1-x \le e^{-x}$ for all $x$.}
\end{align*}

Hence, when $i \ge C \ln n |P| / \val(u, v)$, the probability of a $u \leadsto v$ path remaining in $G \setminus X$ is at most $n^{-C}$, completing the proof.
\end{proof}

The following lemma is more involved; we will, in turn, begin its proof in the following section.
\begin{mdframed}[backgroundcolor=gray!10]
\begin{lemma} [Lower Bound on $\val$] \label{lem:vallb}
In any round of an epoch, there exists a pair of nodes $(u, v)$ with
$$\val(u, v) \ge \mass\left(\{w_{s, t}\}\right) \cdot \Omegahat\left(\frac{L}{n}\right)^{3/2}.$$
and where there exists a $u \leadsto v$ path in $G \setminus X$.
\end{lemma}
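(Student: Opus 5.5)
We argue by contradiction: assume every pair $(u,v)$ still connected in $G \setminus X$ has $\val(u,v) < \tau := \mass(\{w_{s,t}\}) \cdot \Omegahat(L/n)^{3/2}$, and derive a contradiction by double counting. We combine a path witness for each demand pair with a charging scheme, and then find many ``triangle'' configurations whose contribution to $\sum \val$ forces some pair above $\tau$.

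\textbf{Step 1 (path witnesses).} Since the epoch has not ended, $\mass(\{w'_{s,t}\}) \ge \mass(\{w_{s,t}\})/\log n$. Using the path witness lemma (Lemma \ref{lem:pathwitness}), we extract for each $(s,t) \in P$ a family of internally-disjoint $s \leadsto t$ paths in the transitive closure of $G\setminus X$. The number of these paths is about $w'_{s,t}(V\setminus X)$, up to $\Ohat(1)$ factors. Each path has $\vdist_{s,t}$ increasing from $0$ to $\ge 1$ along it, with per-node weight $\Ohat(1/L)$. Summing over $(s,t)$, the total number of path-pieces is $\Omegahat(\mass)$.

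\textbf{Step 2 (charging).} For each witness path, we cut it into $\Theta(1)$ segments, each spanning $\Omega(1)$ of $\vdist_{s,t}$. For $u$ early and $v$ late on a segment, $\val_{s,t}(u,v) = \Omega(1)$. Each segment contains $\Omegahat(L)$ nodes, since weights are $\Ohat(1/L)$. So every segment contributes $\Omega(1)$ to $\val(u,v)$ for $\Omegahat(L^2)$ ordered pairs $(u,v)$ that are connected in $G\setminus X$. Summing gives
$$\sum_{(u,v)} \val(u,v) \ge \Omegahat(\mass \cdot L^2).$$
There are at most $n^2$ pairs, so a plain averaging yields only $\mass \cdot L^2/n^2$, which is Gupta-strength.

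\textbf{Step 3 (triangles, the main obstacle).} We must gain the extra $(n/L)^{1/2}$. The plan is to count triples $(u,v,x)$ with $u \leadsto v \leadsto x$ appearing in order on some witness segment. Here the key structural input is the triangle lemma (Lemma \ref{lem:pathstructure}). Since the paths for a fixed $(s,t)$ are disjoint, each $(s,t)$ contributes about $w'_{s,t}(V)\cdot L^3$ triples. We then apply Cauchy--Schwarz between pair-incidences and triple-incidences: if every pair had $\val < \tau$, then each pair $(u,v)$ would lie on few segments. In that case the number of triples would be at most roughly $n \cdot \sqrt{(\text{number of pair incidences})\cdot \tau}$, which is too small once $\tau$ is below the claimed threshold. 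Taking $L \ge n^{1/3}$ (justified earlier) ensures that the degenerate regime, where $n/L$ is smaller than $L$, does not break the counting.

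Close pairs must be handled separately. These are pairs at small $\vdist$ that lie on many segments, and they would skew the Cauchy--Schwarz step. For them we invoke the charging scheme directly: such a pair already carries enough $\val$ to be our witness $(u,v)$. Otherwise we delete its contribution and lose only a constant fraction of the mass.

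I expect the hardest part to be making the Cauchy--Schwarz exponent land exactly on $3/2$. This requires the triangle structure to give that a typical pair of a triple lies on $\Omegahat(n/L)^{1/2}$ times more segments than average. To get this, I would first try to balance the two cases with the threshold $\Ohat(\sqrt{n/L})$.
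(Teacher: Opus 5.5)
There is a genuine gap, and it sits in Step~3, which is where all of the lemma's content lies. Steps~1--2 only reproduce Gupta's bound $\mass(\{w_{s,t}\})\cdot(L/n)^2$. The inequality ``number of triples $\lesssim n\sqrt{(\text{pair incidences})\cdot\tau}$'' is asserted without derivation.

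More fundamentally, no Cauchy--Schwarz among incidence counts of pairs or triples lying together on witness segments can produce the missing $(n/L)^{1/2}$. Take $N$ random length-$L$ vertex sequences. Every ordered pair occurs in order on only about $NL^2/n^2$ of them (up to lower-order terms), which is far below $N(L/n)^{3/2}$ when $L\ll n$. Meanwhile the triple count is $\Theta(NL^3)$ for every such family. So the value a pair receives from witness paths containing it can genuinely be Gupta-small, and the assumption $\val<\tau$, used only through co-path incidences, yields no contradiction.

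The missing idea is that $\val_{s,t}(x,y)$ is defined through $\vdist_{s,t}(s,\cdot)$ for \emph{every} pair. This includes reachable pairs $(x,y)$ that lie on no common path of $\Pi_{s,t}$. The proof must move value onto such off-path pairs, using the triangle inequality (Lemma~\ref{lem:valtrineq}) together with reachability in $G\setminus X$.

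Two further problems. First, Lemma~\ref{lem:pathstructure} is not a statement about triples on a segment. Its hypothesis is that a specific charging scheme halts early, and you never set that scheme up; its conclusion is a common prefix node $u$. Second, your treatment of ``close pairs'' gives no mechanism for why discarding them costs only a constant fraction of the mass.

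For comparison, the paper proceeds as follows.
\begin{itemize}
\item It uses Cauchy--Schwarz only to choose a base path $\pi_b$ with $\sum_{v\in\pi_b}\texttt{suffdeg}_{\Pi}(v)\ge Ld/\lambda$ (Lemma~\ref{lem:basepath}).
\item It compresses reachability among nodes of $\pi_b$ into a 2-hop shortcut set $S$ with $|S|\le\Oish(L)$ and $O(\log n)$ midpoints per node (Lemma~\ref{lem:supershortcut}).
\item It charges $\val_{s,t}(e)$ to an edge $e\in S$ at a suffix node $x\in\pi_b$ of a path in $\Pi_{s,t}$ whenever this is at least $\sigma/(\lambda^2L)$, and then deletes $x$ from that suffix.
\end{itemize}
If at least $Ld/(2\lambda)$ charges occur, they concentrate on the few edges of $S$, each charged at most twice per demand pair. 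This gives $\val(e)\ge\Omegahat(\sigma d/L)$, and it is the correct version of your close-pair case.

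Otherwise, every surviving suffix meets $\pi_b$ in at most $\sigma$ nodes. Hence $\Omegahat(Ld/\sigma)$ paths have suffix meeting $\pi_b$, and some node $u$ lies in the prefix of $\Omegahat(L^2d/(\sigma n))$ of them. Take each such $q_i$ hitting $\pi_b$ at $v_i$, and apply the triangle inequality along $(u,v_1,m,v_i)$. Because charging halted, $\val_{s_i,t_i}(m,v_i)$ is small. This forces $\Omegahat(1)$ of the value of $(s_i,t_i)$ onto one of $O(\log n)$ pairs $(u,v_1)$ or $(v_1,m)$, which in general do not lie on $q_i$.

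Balancing $\sigma d/L=L^2d/(\sigma n)$ gives $\sigma=L^{3/2}n^{-1/2}$; this is where $L\ge n^{1/3}$ is needed, to have $\sigma\ge 1$. Combining with $d\ge|\Pi|L/(\lambda n)$ and $|\Pi|\ge\Omegahat(\mass(\{w_{s,t}\}))$ gives exactly the exponent $3/2$.
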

\end{mdframed}

Using these lemmas, we can now prove Lemma \ref{lem:roundbound}:
\begin{proof} [Proof of Lemma \ref{lem:roundbound}, assuming Lemma \ref{lem:vallb}]
Assume that the high-probability event in Lemma \ref{lem:valub} holds, and let $(u, v)$ be a pair of nodes satisfying Lemma \ref{lem:vallb}.
Since Lemma \ref{lem:vallb} guarantees that there is a $u \leadsto v$ path in $G \setminus X$, it follows from Lemma \ref{lem:valub} that the number of rounds that has passed is at most
\begin{align*}
&\Ohat\left(\frac{|P|}{\val(u, v)}\right)\\
\le& \  \Ohat\left(\frac{|P|}{\mass\left(\{w_{s, t}\}\right) \cdot \frac{L^{3/2}}{n^{3/2}}}\right) \tag*{Lemma \ref{lem:vallb}}\\
=& \  \Ohat\left(\frac{|P| \cdot n^{3/2}}{\mass\left(\{w_{s, t}\}\right) \cdot L^{3/2}}\right). \tag*{\qedhere}
\end{align*}
\end{proof}

\subsection{Proof of Lemma \ref{lem:vallb} and Charging Scheme \label{sec:charging}}

The following lemma allows us to find a path system $R = (V, \Pi)$ in each round of the algorithm, which ``witnesses'' the current fractional cuts $\{w_{s, t}\}$, in a few useful ways.

\begin{mdframed}[backgroundcolor=gray!10]
\begin{lemma} [Path System Witness Lemma]\label{lem:pathwitness}
At the beginning of each round, there exists a path system $R = (V, \Pi)$ (over the same vertex set as $G$), as well as a partition of the paths $\Pi$ into (possibly empty) parts
$$\Pi = \bigcup \limits_{(s, t) \in P} \Pi_{s, t},$$
with all of the following properties:

\begin{enumerate}
\item For each part $\Pi_{s, t}$, the paths in $\Pi_{s, t}$ are pairwise vertex-disjoint.

\item For any path $\pi$ in a part $\Pi_{s, t}$, $\pi$ is a not-necessarily-contiguous vertex subsequence of some $s \leadsto t$ path in $G \setminus X$.
Moreover, letting the vertex sequence be $\pi = (v_0, \dots, v_k)$, we have $\frac{L}{\lambda} \le k \le L$ for some $\lambda = n^{o(1)}$, and for each index $i$ we have
$$\vdist_{s, t}(s, v_i) + \frac{1}{L} \le \vdist_{s, t}(s, v_{i+1}),$$
and also $\vdist_{s, t}(s, v_k) \le 1$.

\item $|\Pi| \ge \Omegahat\left(\mass\left(\{w_{s, t}\}\right)\right)$
\end{enumerate}
\end{lemma}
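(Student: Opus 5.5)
We will build the witness separately for each demand pair $(s,t)\in P$ and take the disjoint union of the resulting parts. The aim is $|\Pi_{s,t}| \ge \Omegahat(\sum_{v \in V\setminus X} w_{s,t}(v))$, up to an additive loss that sums to a negligible fraction of $\mass$. Summing over pairs then gives property 3.

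\textbf{Epoch invariant.} Because the epoch did not end at this round, $\mass(\{w_{s,t}\}) \le \log n \cdot \mass(\{w'_{s,t}\})$. Here $w'_{s,t}$ is the current minimum fractional cut in which nodes of $X$ get weight $1$ and nodes of $V\setminus X$ are capped at $4^{\epoch}/L$. So it suffices to lower bound $|\Pi_{s,t}|$ by $\Omegahat$ of the $V\setminus X$-mass of $w'_{s,t}$, with the cap in the definition of $w'$ four times larger than the cap on $w_{s,t}$.

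\textbf{Capped LP duality.} Write the capped cut LP for $(s,t)$ on $G\setminus X$: minimize $\sum_v y_v$ subject to $y$ fractionally cutting every $s\leadsto t$ path in $G \setminus X$ and $0\le y_v\le c:=4^{\epoch}/L$. Its dual is a flow-like packing with penalty variables for the caps. The dual value is at most the amount of path flow (congestion $1$ per vertex) plus $c$ times the overflow mass. Equivalently, one can pack fractional paths $f$ with vertex congestion $\le 1$, counted with credit only for the vertices where the caps bind.

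I would instead argue combinatorially. Contract or shortcut through the transitive closure of $G\setminus X$, keeping only vertices that carry substantial $w_{s,t}$-weight. The target is a family of vertex-disjoint sequences $v_0,\dots,v_k$ with the following properties:
\begin{itemize}
\item consecutive elements are reachable;
\item the $\vdist_{s,t}(s,\cdot)$ values increase by at least $1/L$ at each step;
\item all values lie in $[0,1]$.
\end{itemize}
Since each $w_{s,t}(v)\le \Ohat(1/L)$ and the total distance budget is $1$, any $s\leadsto t$ path passes through $\Omegahat(L)$ ``levels'' of width $1/L$. Picking one vertex per level along a path gives a subsequence with $L/\lambda\le k\le L$. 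Truncating at level $1$ gives $\vdist\le 1$, and taking at most $L$ levels gives $k\le L$.

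\textbf{Disjointness and counting, which I expect to be the main obstacle.} Take a maximal greedy collection of vertex-disjoint such subsequences. Let $Y$ be the union of their vertices, so $|Y|\le |\Pi_{s,t}|\cdot L$. By maximality, every remaining $s\leadsto t$ path in the closure $G\setminus X$ with $Y$ removed covers fewer than $L/\lambda$ levels. I then build a feasible point for the capped LP:
\begin{itemize}
\item put $y = c$ on $Y$, giving cost $\le |\Pi_{s,t}| L c = \Ohat(|\Pi_{s,t}|)$;
\item put $y = \frac{\lambda}{\text{const}}\cdot w_{s,t}$ restricted to the ``sparse'' part, to cut what remains, using that the remaining paths accumulate $w$-distance in few levels.
\end{itemize}
This second part is exactly where the factor-$4$ cap relaxation and the transitive-closure paths are needed. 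A path avoiding $Y$ that crosses fewer than $L/\lambda$ levels must gain its distance $\ge 1$ through heavy jumps. Scaling $w_{s,t}$ by $4$ on vertices that realize jumps of at least $1/L$ keeps the cap at $4\cdot 4^{\epoch-1}/L$. Getting this accounting to close is the delicate step; if it fails I would change the level width from $1/L$ to $1/(\lambda L)$.

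\textbf{Conclusion.} The feasible point shows $\mass_{s,t}(w')\le \Ohat(|\Pi_{s,t}|) + \frac{1}{\log^2 n}\,\mass_{s,t}(w)$. Summing over pairs and using the epoch invariant, the second term is absorbed, which gives $|\Pi|\ge\Omegahat(\mass(\{w_{s,t}\}))$.
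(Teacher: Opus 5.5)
\textbf{There is a genuine gap, at exactly the step you flag as delicate.} You never show that your candidate point is a feasible capped cut whose cost is $\Ohat(|\Pi_{s,t}|)$ plus an error small enough to absorb.

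The second bullet does not work as written:
\begin{itemize}
\item The candidate LP allows weight at most $4^{\epoch}/L$ on $V\setminus X$. That is only $4$ times the cap under which $w_{s,t}$ was computed. So $\frac{\lambda}{\mathrm{const}}\cdot w_{s,t}$, with $\lambda=n^{o(1)}$ superconstant, is in general infeasible.
\item Even ignoring the cap, it costs $\Theta(\lambda)$ times the $w_{s,t}$-mass of the ``sparse part.'' Absorbing an error through the epoch invariant $\mass(\{w_{s,t}\})\le \log n\cdot \mass(\{w'_{s,t}\})$ requires the total error to be at most about $\mass(\{w_{s,t}\})/(2\log n)$. The $\frac{1}{\log^2 n}\mass_{s,t}(w)$ term in your conclusion is not derived from anything.
\item Shrinking the level width to $1/(\lambda L)$ is not available either, since property 2 requires increments of at least $1/L$.
\end{itemize}
Also note that an $s\leadsto t$ path in $G\setminus X$ that avoids $Y$ entirely can never cover few levels, because every vertex of $V\setminus X$ has weight $\Ohat(1/L)$. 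So the ``heavy jumps'' you describe are precisely weight carried by vertices of $Y$. That points to the fix.

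\textbf{The missing idea is that no second term is needed.} Weight $c$ on $Y$ (with $1$ on $X$) is already feasible once you prove the following length lemma, which is the heart of the paper's argument.
\begin{itemize}
\item Take any $s\leadsto t$ path $\pi$ in $G\setminus X$ and truncate it at the last vertex $y$ with $\vdist_{s,t}(s,y)\le 1$.
\item Extract $v_0,\dots,v_k$ from $\pi\setminus Y$ greedily: each time take the next vertex whose $\vdist_{s,t}(s,\cdot)$ exceeds that of the last chosen vertex by at least $1/L$. This replaces ``one vertex per level,'' which neither guarantees the $1/L$ increments nor copes with $\vdist_{s,t}(s,\cdot)$ being non-monotone along $\pi$.
\item For each $i$, let $z$ be the last vertex of $\pi\setminus Y$ before $v_{i+1}$. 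Either $z=v_i$ or $\vdist_{s,t}(s,z)<\vdist_{s,t}(s,v_i)+\frac{1}{L}$, and $w_{s,t}(z)\le \Ohat(1/L)$.
\item By the triangle inequality, the vertices strictly between $z$ and $v_{i+1}$, all of which lie in $Y$, have total weight at least $\vdist_{s,t}(s,v_{i+1})-\vdist_{s,t}(s,v_i)-\Ohat(1/L)$.
\item Add the analogous bounds before $v_0$ and after $v_k$ (using $\vdist_{s,t}(s,y)\ge 1-w_{s,t}(y)$). Telescoping gives $w_{s,t}(\pi\cap Y)\ge 1-\Ohat(k/L)$.
\end{itemize}
So if $\lambda$ dominates the hidden $n^{o(1)}$ factor, your maximality condition forces $w_{s,t}(\pi\cap Y)\ge \frac{1}{4}$ for every such path. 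Hence $4w_{s,t}\cdot\mathbf{1}_Y\le c\cdot \mathbf{1}_Y$ is a feasible capped cut, and $\mass_{s,t}(w')\le c\,|Y|\le \Ohat(|\Pi_{s,t}|)$ with no error term at all.

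The paper runs the same computation in the other direction. Its greedy accepts a path only if the path takes less than $\frac{1}{4}$ of its internal weight from $V(\Pi_{s,t})$. Feasibility of $4w_{s,t}$ on $V(\Pi_{s,t})$ is then immediate from the halting rule, and the telescoping bound shows that every accepted subsequence has $k\ge \Omegahat(L)$.
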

\end{mdframed}

The proof of this lemma is technical, and is deferred to Section \ref{sec:pathwitness}.
We note the role of the new parameter $\lambda = n^{o(1)}$, since we will track some $\lambda$ dependencies in the following argument.
We do not care about optimizing these factors, and will often hide them in $\Ohat(\cdot)$ notation.
This is just our means of navigating some dependencies in the various $n^{o(1)}$ factors that arise in the following lemmas, and emphasizing that the various $n^{o(1)}$ factors can be selected in a non-circular fashion. 
In the following lemmas we will write $R = (V, \Pi)$ for a path system from this lemma.

Our next step is to set up a \emph{charging scheme}, which in each round of the algorithm, will either identify a pair of nodes that carries high value or reduce to a path witness $R$ with a more convenient structural form.
At a high level: we will define a relatively small set of node pairs $S$, and then we will carefully \emph{charge} value to the pairs in $S$.
One case of our proof will work by arguing that if the charging scheme runs for many rounds, then one of the pairs $(u, v) \in S$ has been charged for a lot of value, implying that $\val(u, v)$ is high enough to satisfy Lemma \ref{lem:vallb}.
The other case will argue that if the charging scheme terminates early, then there exists a useful combinatorial structure in $R$.

\begin{definition} [Path Prefixes, Suffixes, Degrees]
The prefix, suffix of a path $\pi \in \Pi$ are the contiguous subpaths respectively consisting of the first, last $\lfloor |\pi|/4 \rfloor$ nodes of $\pi$.
For a node $v$, we write $\texttt{deg}_{\Pi}(v)$ for the number of paths in $\Pi$ that contain $v$, and we write $\texttt{suffdeg}_{\Pi}(v)$ for the number of paths in $\Pi$ that contain $v$ in their suffix.
\end{definition}

\begin{lemma} [Base Path] \label{lem:basepath}
Let $R$ be a path system as in Lemma \ref{lem:pathwitness}, and let $d$ be the average of $\texttt{deg}_{\Pi}(v)$ over its nodes.
Then there exists a path $\pi_b \in \Pi$ with the property that
$$\sum \limits_{v \in \pi_b} \texttt{suffdeg}_{\Pi}(v) \ge \frac{L d}{\lambda}.$$
\end{lemma}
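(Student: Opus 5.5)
The plan is a double-counting argument over all paths in $\Pi$, followed by Cauchy--Schwarz and a comparison between suffix mass and total mass. The step I have not closed is the final one: the chain as planned loses an extra factor of $\lambda$, and I cannot yet rule that loss out.

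\textbf{Step 1: reduce to an average.} Since the maximum over paths is at least the average, it suffices to show
$$\frac{1}{|\Pi|}\sum_{\pi\in\Pi}\sum_{v\in\pi}\texttt{suffdeg}_{\Pi}(v)\ \ge\ \frac{Ld}{\lambda}.$$
Exchanging the order of summation, each node $v$ is counted once for every path containing it. So the double sum equals $\sum_{v\in V}\texttt{deg}_{\Pi}(v)\,\texttt{suffdeg}_{\Pi}(v)$, and the goal becomes
$$\sum_{v}\texttt{deg}_{\Pi}(v)\,\texttt{suffdeg}_{\Pi}(v)\ \ge\ \frac{L}{\lambda}\,|\Pi|\,d, \qquad d=\frac{1}{n}\sum_v\texttt{deg}_{\Pi}(v).$$

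\textbf{Step 2: Cauchy--Schwarz on suffix degrees.} Since $\texttt{suffdeg}_{\Pi}\le\texttt{deg}_{\Pi}$ pointwise, the left side is at least $\sum_v\texttt{suffdeg}_{\Pi}(v)^2$. By Cauchy--Schwarz this is at least $\frac1n\bigl(\sum_v\texttt{suffdeg}_{\Pi}(v)\bigr)^2$. Write $\Sigma_{\rm suf}=\sum_v\texttt{suffdeg}_{\Pi}(v)$, which equals $\sum_{\pi}\lfloor|\pi|/4\rfloor$.

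\textbf{Step 3: compare suffix mass with total mass.} Write $\Sigma_{\rm deg}=\sum_v\texttt{deg}_{\Pi}(v)=\sum_\pi|\pi|$, so that $d=\Sigma_{\rm deg}/n$. Property 2 of Lemma \ref{lem:pathwitness} gives every path between $L/\lambda+1$ and $L+1$ nodes. Hence each suffix has at least roughly $L/(4\lambda)$ nodes, and $\Sigma_{\rm suf}\gtrsim |\Pi|L/(4\lambda)$. Combining with Step 2,
$$\sum_v\texttt{deg}_{\Pi}(v)\,\texttt{suffdeg}_{\Pi}(v)\ \ge\ \frac{\Sigma_{\rm suf}}{n}\cdot\frac{|\Pi|L}{4\lambda}.$$
The target is $\frac{L}{\lambda}|\Pi|\frac{\Sigma_{\rm deg}}{n}$, so Steps 1--3 prove the lemma exactly when $\Sigma_{\rm suf}\ge 4\,\Sigma_{\rm deg}$.

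\textbf{The gap.} That inequality is false, since each suffix is only a quarter of its path, so $\Sigma_{\rm suf}\le\Sigma_{\rm deg}/4$. Using only the lengths in $[L/\lambda,L]$, the best comparison is $\Sigma_{\rm deg}\le 4\lambda\,\Sigma_{\rm suf}$. With that, Steps 1--3 yield only about $Ld/(16\lambda^2)$, not $Ld/\lambda$ for the $\lambda$ fixed by Lemma \ref{lem:pathwitness}. Closing this gap is the main obstacle, and the argument above does not yet prove the stated bound.

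\textbf{What I would try to close it.} The weak point is Step 2, which discards $\texttt{deg}_{\Pi}$ in favour of $\texttt{suffdeg}_{\Pi}$. I would instead aim for a Chebyshev-type bound $\sum_v\texttt{deg}_{\Pi}\,\texttt{suffdeg}_{\Pi}\gtrsim\frac1n\,\Sigma_{\rm deg}\,\Sigma_{\rm suf}$. Such a bound would give about $Ld/(4\lambda)$, still a constant $1/4$ short of the stated $Ld/\lambda$.
\begin{itemize}
\item One route is to exploit structure in how $\texttt{deg}_{\Pi}$ and $\texttt{suffdeg}_{\Pi}$ correlate along the paths of Lemma \ref{lem:pathwitness}.
\item Another is to reweight paths by their length, which avoids comparing $\max_\pi|\pi|$ with $\min_\pi|\pi|$ and so removes the $\lambda$ that enters Step 3.
\end{itemize}
Neither route is checked. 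Even if one works, an extra idea is needed to recover the missing constant before the bound matches the statement.
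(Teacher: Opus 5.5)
Your Steps 1--3 are the paper's argument: average over a uniformly random $\pi_b$, keep only suffix occurrences to reach $\sum_v \texttt{suffdeg}_{\Pi}(v)^2$, apply Cauchy--Schwarz, and use $|\pi| \ge L/\lambda$ on one factor. The conclusion you draw from them is wrong, though: there is no extra factor of $\lambda$. Your comparison $\Sigma_{\mathrm{deg}} \le 4\lambda\,\Sigma_{\mathrm{suf}}$ pits the longest possible path against the shortest possible suffix. That is unnecessary, because both sums run over the same paths and can be compared term by term.

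Concretely, $\Sigma_{\mathrm{suf}} = \sum_{\pi} \lfloor |\pi|/4 \rfloor$, and $\lfloor x/4 \rfloor \ge x/7$ for every integer $x \ge 4$. Also $|\pi| \ge L/\lambda + 1 \ge 4$, since $L \ge n^{1/3}$ and $\lambda = n^{o(1)}$. This gives $\Sigma_{\mathrm{deg}} \le 7\,\Sigma_{\mathrm{suf}}$ and $\Sigma_{\mathrm{suf}} \ge |\Pi| L/(7\lambda)$, so your own chain yields
\[
\frac{1}{|\Pi|}\sum_{v} \texttt{deg}_{\Pi}(v)\,\texttt{suffdeg}_{\Pi}(v) \;\ge\; \frac{\Sigma_{\mathrm{suf}}^2}{|\Pi|\, n} \;\ge\; \frac{1}{|\Pi|\, n}\cdot \frac{\Sigma_{\mathrm{deg}}}{7}\cdot\frac{|\Pi|\, L}{7\lambda} \;=\; \frac{L d}{49\lambda}.
\]
This is exactly the paper's step ``average $\texttt{deg}$ within $\Theta(1)$ of average $\texttt{suffdeg}$.'' The same term-by-term bound also gives your hoped-for Chebyshev-type inequality up to the factor $7$, so neither repair route is needed.

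You cannot recover the constant $1$, and you should not look for an extra idea to do so. The paper's proof also only establishes $\Omega(Ld/\lambda)$; the constant is absorbed in the step $\bigl(\sum_v \texttt{suffdeg}_{\Pi}(v)\bigr)^2 = \Theta(nd)^2$. With constant exactly $1$ the inequality fails in general. Suppose $\Pi$ consists of pairwise disjoint paths, each with exactly $L/\lambda + 1$ nodes, that together cover at least half of $V$. Nothing in Lemma \ref{lem:pathwitness} rules this out: take a single demand pair whose $s \leadsto t$ paths are parallel chains of $L$ nodes of weight $1/L$, each chopped into consecutive segments. Then $d \ge 1/2$, while every path has $\sum_{v \in \pi} \texttt{suffdeg}_{\Pi}(v) = \lfloor (L/\lambda+1)/4 \rfloor < L/(2\lambda) \le Ld/\lambda$.

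So the lemma must be read up to an absolute constant. Equivalently, scale the round threshold $\frac{Ld}{2\lambda}$ in Lemmas \ref{lem:longcharge}, \ref{lem:shortcharge}, and \ref{lem:pathstructure} down by the same constant; that is all the later argument uses. Under this reading, your Steps 1--3 plus the term-by-term comparison are a complete proof.
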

\begin{proof}
Sample a path $\pi_b \in \Pi$ uniformly at random.
Note that each node $v$ contributes to the left-hand sum in the lemma statement iff $v \in \pi_b$, and if so, it contributes $\texttt{suffdeg}_{\Pi}(v)$ to the sum.
We therefore have
\begin{align*}
\mathbb{E}\left[ \sum \limits_{v \in \pi_b} \texttt{suffdeg}_{\Pi}(v)\right] &= \sum \limits_{\pi \in \Pi} \Pr\left[\pi \text{ selected as } \pi_b\right] \cdot \sum \limits_{v \in \pi} \texttt{suffdeg}_{\Pi}(v)\\
&\ge \frac{1}{|\Pi|} \cdot \sum \limits_{\pi \in \Pi} \sum \limits_{v \in \texttt{suff}(\pi)} \texttt{suffdeg}_{\Pi}(v)\\
&= \frac{1}{|\Pi|} \cdot \sum \limits_{v \in V} \texttt{suffdeg}_{\Pi}(v)^2\\
&\ge \frac{1}{|\Pi|n} \cdot \left( \sum \limits_{v \in V} \texttt{suffdeg}_{\Pi}(v) \right)^2 \tag*{Cauchy-Schwarz}\\
&= \frac{1}{|\Pi|n} \cdot \Theta(nd)^2 \tag*{average \texttt{deg} within $\cdot \Theta(1)$ of average \texttt{suffdeg}}\\
&= \Theta\left(\frac{nd^2}{|\Pi|}\right).
\end{align*}
By Lemma \ref{lem:pathwitness}, all paths in $\Pi$ have $\ge \frac{L}{\lambda}$ nodes, so the average degree is at least $d \ge |\Pi|L / (\lambda n).$
Applying this substitution for one factor of $d$ in the numerator, we conclude that
\begin{align*}
\mathbb{E}\left[ \sum \limits_{v \in \pi_b} \texttt{suffdeg}_{\Pi}(v)\right] \ge \frac{Ld}{\lambda}.\tag*{\qedhere}
\end{align*}
\end{proof}

In the following we fix a particular path $\pi_b \in \Pi$ that satisfies the previous lemma, which we will call the \emph{base path}.

\begin{lemma} \label{lem:supershortcut}
There exists an edge set $S \subseteq V \times V$ of size $|S| \le \Oish(L)$ with the following properties:
\begin{itemize}
\item For every edge $(u, v) \in S$, there is a $u \leadsto v$ path in $G \setminus X$.

\item For any two nodes $x, y \in \pi_b$ for which there exists an $x \leadsto y$ path in $G \setminus X$, we have $\dist_S(x, y) \le 2$.

\item For each node $x \in \pi_b$, there is a set $M_x$ of $|M_x| \le O(\log n)$ nodes such that, for all $y$ with $\dist_S(x, y) = 2$, we specifically have $(x, m), (m, y) \in S$ for some $m \in M_x$.
\end{itemize}
\end{lemma}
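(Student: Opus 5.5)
The plan is a divide-and-conquer ``midpoint shortcutting'' of the base path, in the style of standard 2-hop reachability covers for a line. By Lemma \ref{lem:pathwitness}, $\pi_b = (v_0, \dots, v_k)$ has $k \le L$ and is a vertex subsequence of some $s \leadsto t$ path $\rho$ in $G \setminus X$. In particular, for $i < j$ the subpath of $\rho$ from $v_i$ to $v_j$ is a $v_i \leadsto v_j$ path in $G \setminus X$. So forward reachability along $\pi_b$ is automatic, and the only nontrivial pairs are ``backward'' ones.

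\textbf{Construction.} Recursively process index intervals $I = [a, b] \subseteq [0, k]$, starting with $[0,k]$. For the current interval, let $m_I = v_{\lfloor (a+b)/2 \rfloor}$ be its midpoint. For every $x = v_i$ with $i \in I$:
\begin{itemize}
\item add $(x, m_I)$ to $S$ whenever there is an $x \leadsto m_I$ path in $G \setminus X$;
\item add $(m_I, x)$ to $S$ whenever there is an $m_I \leadsto x$ path in $G \setminus X$.
\end{itemize}
Then recurse on the two halves $[a, \lfloor (a+b)/2\rfloor - 1]$ and $[\lfloor (a+b)/2\rfloor + 1, b]$.

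\textbf{Size and $M_x$.} The recursion has depth $O(\log L)$. At each depth the intervals are disjoint, so that depth contributes at most $2(k+1) = O(L)$ edges, giving $|S| = O(L \log L) = \Oish(L)$. The first property holds by construction. Set $M_x$ to be the set of midpoints $m_I$ over all intervals $I$ containing the index of $x$. This is one interval per depth, so $|M_x| = O(\log L) \le O(\log n)$.

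\textbf{Two-hop property.} Take $x = v_i$ and $y = v_j$ with an $x \leadsto y$ path in $G \setminus X$, and let $I$ be the smallest recursive interval containing both $i$ and $j$. By minimality, the midpoint $m = m_I = v_\ell$ has index $\ell$ weakly between $i$ and $j$; otherwise both indices would lie in the same half. We claim $x \leadsto m$ and $m \leadsto y$ in $G \setminus X$.
\begin{itemize}
\item If $i \le \ell \le j$, both claims follow from forward reachability along $\rho$.
\item If $j \le \ell \le i$ (a backward pair), then $m \leadsto x$ forward along $\rho$ and $x \leadsto y$ by hypothesis, so $m \leadsto y$. Also $x \leadsto y$ by hypothesis and $y \leadsto m$ forward along $\rho$, so $x \leadsto m$.
\end{itemize}
Hence $(x, m), (m, y) \in S$ with $m \in M_x$. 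If $m \in \{x, y\}$, one of these edges is a self-loop and already gives $\dist_S(x,y) \le 1$. So $\dist_S(x,y) \le 2$, and whenever the distance is exactly $2$ it is realized through some $m \in M_x$.

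\textbf{Main obstacle.} The delicate step is the backward case. Reachability among nodes of $\pi_b$ in $G \setminus X$ is not just the path order, and a general reachability relation on $k$ nodes does not admit $\Oish(k)$-size 2-hop covers. What saves us is that $\pi_b$ sits inside a single path $\rho$ of $G \setminus X$. Any backward reachability $v_i \leadsto v_j$ with $j < i$ therefore makes the whole stretch from $v_j$ to $v_i$ strongly connected, and in particular routes through every recursion midpoint between them.
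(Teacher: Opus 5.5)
Your proposal is correct and follows essentially the same route as the paper: the same recursive median shortcutting of $\pi_b$, adding edges to and from each median whenever reachability holds in $G \setminus X$, the same $O(L \log L)$ size count, $M_x$ taken as the $O(\log n)$ medians of the intervals containing $x$, and the same forward/backward case analysis at the first recursion level where $x$ and $y$ are separated. Two details are slightly more careful than in the paper but are the same argument: you derive forward reachability explicitly from the underlying path $\rho$ (since $\pi_b$ is only a subsequence), where the paper says ``due to the edges in $\pi'$'', and you handle the case $m \in \{x, y\}$ with self-loops, where the paper says ``unless $x = m$''.
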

\begin{proof}
This is essentially a mild expansion of a standard lemma in the literature on shortcut sets and hopsets (see, e.g., Lemma 1.3 of \cite{Raskhodnikova10}).
We construct $S$ as follows.
Let $m$ be the median node along $\pi_b$ (rounding arbitrarily, if needed).
For each node $v \in \pi_b$, add the edge $(v, m)$ to $S$ if there is a $v \leadsto m$ path in $G \setminus X$, and add the edge $(m, v)$ to $S$ if there is a $m \leadsto v$ path in $G \setminus X$.
Then, recurse on the prefix of $\pi_b$ up to $m$, and also recurse on the suffix of $\pi_b$ following $m$ (neither the prefix nor the suffix used in the recursion include $m$ itself).
The recursion halts when there is $0$ or $1$ node remaining in the path.

The size $|S| =: s$, as a function of the path length $|\pi_b| =: \ell \le O(L)$, is governed by the recurrence
$$s(\ell) \le 2\ell + 2 s(\ell/2),$$
which solves to $s(\ell) \le O(\ell \log \ell) = \Oish(L)$, as claimed.
For the $\dist_S(x, y) \le 2$ guarantee, let us consider two distinct nodes $x, y \in \pi_b$ with $(x, y)$ reachable.
Let $\pi' \subseteq \pi_b$ be the contiguous subpath considered at some level of recursion such that $x, y \in \pi'$, but they lie (weakly) on opposite sides of the median node $m \in \pi'$.
We consider cases:
\begin{itemize}
\item Suppose that the nodes have the order $x \le m \le y$ along $\pi'$.
Then we have $(x, m), (m, y)$ reachability due to the edges in $\pi'$, so we will add $(x, m) \in S$ (unless $x=m$), and also $(m, y) \in S$ (unless $y=m$), witnessing a $\le 2$-edge path from $x$ to $y$.

\item Suppose that the nodes have the order $y \le m \le x$ along $\pi'$.
Since we assume $(x, y)$ reachability, and we have $(y, m)$ reachability due to the edges in $\pi'$, we therefore have $(x, m)$ reachability and so we will add $(x, m) \in S$ (unless $x=m$).
Similarly, we have $(m, x)$ reachability due to the edges in $\pi'$, and we have $(x, y)$ reachability by assumption, and so we have $(m, y)$ reachability and we will add $(m, y) \in S$ (unless $y=m$).
This witnesses a $\le 2$-edge path from $x$ to $y$.
\end{itemize}
In the previous cases, when $x \ne m$ and $y \ne m$ and so we have $\dist_S(x, y)=2$, we have $(x, m), (m, y) \in S$ for a median node $m$ that is in the same subpath as $x$ at some recursion depth.
Since the recursion depth is $O(\log n)$, the third property follows by taking $M_x$ to be the set of all such median nodes.
\end{proof}

In the following we write $S$ for a fixed edge set from this lemma.
We use it as follows:

\begin{mdframed}[backgroundcolor=blue!10]
\textbf{Charging Scheme:}
With respect to a parameter $1 \le \sigma \le L$ that we will choose later, we repeat the following process until no longer possible:
\begin{itemize}
\item Find a path $\pi \in \Pi$ with the following property: there is a node $x \in \texttt{suffix}(\pi) \cap \pi_b$, and an edge $e \in S$ of the form $e=(x, y)$ or $e=(y, x)$, such that
$$\val_{\texttt{endpoints}(\pi)}(e) \ge \frac{\sigma}{\lambda^2 L}.$$

\item Charge $\val_{s, t}(e)$ points of value to the edge $e$.
Then delete $x$ from $\texttt{suffix}(\pi)$.
\end{itemize}
\end{mdframed}

To clarify the scheme: when we delete $x$ from $\texttt{suffix}(\pi)$, we do not recompute $\texttt{suffix}(\pi)$; rather, $\texttt{suffix}(\pi)$ is defined as the surviving subset of the original $\lfloor |\pi|/4 \rfloor$ suffix nodes.
Note that deleting $x$ does not harm any of the properties of Lemma \ref{lem:pathwitness}, in part because only a constant fraction of the nodes in a path lie in the suffix, so path lengths will decrease by only a constant fraction.
In each round of Algorithm \ref{alg:main-alg}, we recompute the path witness $R$ from scratch, and then run this charging scheme on the new system $R$.
This all occurs in the analysis, not the algorithm, so we do not need to track runtime.

The following lemma relates the charging scheme to our lower bound on $\val$:
\begin{lemma} \label{lem:valsplit}
If an edge $e \in S$ is charged for $\alpha$ total points of value throughout the charging scheme, then $\val(e) \ge \frac{\alpha}{2}$.
\end{lemma}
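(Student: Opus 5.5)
The plan is a direct double-counting argument. Every charge to $e$ has the form $\val_{s, t}(e)$, where $(s, t)$ is the demand pair whose part $\Pi_{s, t}$ contains the path $\pi$ selected in that step. I read the subscript $\texttt{endpoints}(\pi)$ in the charging scheme as naming exactly this pair, since the charged amount is written as $\val_{s, t}(e)$. Because $\val(e) = \sum_{(s, t) \in P} \val_{s, t}(e)$ and every $\val_{s, t}(e) \ge 0$ (it is a probability), it suffices to show one thing: for each fixed demand pair $(s, t)$, the edge $e$ receives a charge of $\val_{s, t}(e)$ at most twice. Then $\alpha \le 2 \sum_{(s, t)} \val_{s, t}(e) = 2\val(e)$, which gives the claim.

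Fix $e = (a, b) \in S$ and a demand pair $(s, t)$. Each charge to $e$ on behalf of $(s, t)$ is a triple consisting of a path $\pi \in \Pi_{s, t}$, a node $x \in \texttt{suffix}(\pi) \cap \pi_b$, and the edge $e$, where $x$ is an endpoint of $e$, so $x \in \{a, b\}$. There are two observations to make.
\begin{enumerate}
\item For a fixed choice of $x \in \{a, b\}$, at most one path of $\Pi_{s, t}$ contains $x$. This is property 1 of Lemma \ref{lem:pathwitness}: the paths in a single part are pairwise vertex-disjoint.
\item For a fixed path $\pi$ and node $x$, the pair $(\pi, x)$ can trigger at most one charge overall. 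Each charge deletes $x$ from $\texttt{suffix}(\pi)$, and the suffix is never recomputed, so $x$ can never again be selected as a suffix node of $\pi$.
\end{enumerate}
Together these show that the charges to $e$ attributable to $(s, t)$ number at most one with $x = a$ and at most one with $x = b$, hence at most two.

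One subtle point is that the same pair $(\pi, x)$ might be eligible to charge several different edges of $S$ incident to $x$. This does not matter here, because we are counting charges to a single fixed $e$, and each charging step charges exactly one edge. I do not expect a real obstacle. The only care needed is to make the $\texttt{endpoints}(\pi)$ convention explicit, and to note that the path system $R$ is fixed within a run of the charging scheme; the lemma concerns a single run, so the suffix deletions persist for the whole run.
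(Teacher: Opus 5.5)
Your proposal is correct and matches the paper's proof. For a fixed demand pair $(s, t)$, vertex-disjointness of the paths in $\Pi_{s, t}$, together with deleting the charging endpoint from the suffix, means $e$ is charged via $\Pi_{s, t}$ at most twice, so $\alpha \le \sum_{(s,t)} 2\val_{s, t}(e) = 2\val(e)$. Your split into two observations (at most one path per endpoint of $e$, and at most one charge per path--endpoint pair) is just a slightly more explicit unpacking of the paper's argument.
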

\begin{proof}
Consider any node pair $(x, y) \in S$.
For any fixed demand pair $(s, t) \in P$, in each round where we select a path $\pi \in \Pi_{s, t}$ and then choose to charge $(x, y)$, we will then either delete $x$ from $\pi$ or we will delete $y$ from $\pi$.
By Lemma \ref{lem:pathwitness} the paths in $\Pi_{s, t}$ are pairwise node-disjoint.
Thus we will charge $(x, y)$ at most twice via paths $\pi \in \Pi_{s, t}$, after which neither $x$ nor $y$ participate in paths in $\Pi_{s, t}$, so it will not be charged again.
Hence $\val_{s, t}(x, y)$ is at least half as large as the total value charged to $(x, y)$ via paths in $\Pi_{s, t}$.
The lemma then follows by summing over the demand pairs.
\end{proof}

From this, we can get an easy bound in the case where the charging scheme runs for enough rounds:

\begin{lemma} \label{lem:longcharge}
If the charging scheme runs for $\ge \frac{Ld}{2\lambda}$ rounds, then there is an edge $e \in S$ with
$$\val(e) \ge \Omegahat\left(\frac{\sigma d}{L}\right).$$
\end{lemma}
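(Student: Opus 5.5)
This follows by pigeonhole over the edges of $S$, combined with Lemma \ref{lem:valsplit}. Each round of the charging scheme charges some edge $e \in S$ an amount $\val_{s,t}(e)$, where $(s,t)$ are the endpoints of the chosen path $\pi$. By the selection rule, this amount is at least $\frac{\sigma}{\lambda^2 L}$. So if the scheme runs for at least $\frac{Ld}{2\lambda}$ rounds, the total value charged to all edges of $S$ is at least
$$\frac{Ld}{2\lambda}\cdot \frac{\sigma}{\lambda^2 L} = \frac{\sigma d}{2\lambda^3}.$$

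By Lemma \ref{lem:supershortcut}, $|S| \le \Oish(L)$. By averaging, some edge $e \in S$ is therefore charged at least
$$\alpha \ge \frac{\sigma d}{2\lambda^3 \cdot \Oish(L)}$$
total value. Lemma \ref{lem:valsplit} then gives $\val(e) \ge \alpha/2$. Since $\lambda = n^{o(1)}$ and the $\Oish$ factor is polylogarithmic, both are absorbed into $\Omegahat$, and we get $\val(e) \ge \Omegahat\left(\frac{\sigma d}{L}\right)$.

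The one point needing care is that the $\val$ values charged are measured consistently with $\val(e)$ in the lemma. They are: within the single round of Algorithm \ref{alg:main-alg} under consideration, the weights $w_{s,t}$ are fixed and the charging scheme is purely an analysis device. So Lemma \ref{lem:valsplit} applies directly to the accumulated charges.

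Two further observations. First, each charged round is applied to an edge of $S$ itself, so the averaging over $|S|$ is legitimate. Second, the number of rounds is finite, since each round deletes a suffix node and there are only finitely many. Beyond this bookkeeping there is no real obstacle; the statement is a straightforward counting consequence of the preceding lemmas.
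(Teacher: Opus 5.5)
Your proposal is correct and follows the same route as the paper. Each of the at least $\frac{Ld}{2\lambda}$ rounds charges at least $\frac{\sigma}{\lambda^2 L}$ to some edge of $S$; averaging over the $|S| \le \Oish(L)$ edges and then applying Lemma \ref{lem:valsplit} gives the bound. The only difference is that you track the $\lambda$ factors explicitly (total charge $\frac{\sigma d}{2\lambda^3}$), whereas the paper absorbs them into $\Omegahat$ immediately.
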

\begin{proof}
In each round we charge $\Omegahat(\sigma/L)$ value to some edge $e \in S$.
Thus the average edge $e \in S$ is charged for
\begin{align*}
\frac{1}{|S|} \cdot \underbrace{\Omegahat(Ld)}_{\text{number of rounds}} \cdot \underbrace{\Omegahat(\sigma/L)}_{\text{charge per round}} &\ge \Omegahat\left(\frac{\sigma d}{L}\right)
\end{align*}
total value.
The claim then follows by applying Lemma \ref{lem:valsplit} on this edge.
\end{proof}

We now need a different argument in the case where the charging scheme runs for fewer rounds than the bound in the previous lemma.
We will show the following bound:

\begin{mdframed}[backgroundcolor=gray!10]
\begin{lemma} \label{lem:shortcharge}
If the charging scheme halts within $\le \frac{Ld}{2\lambda}$ rounds, then there is an edge $e \in V \times V$ with
$$\val(e) \ge \Omegahat\left(\frac{L^2 d}{\sigma n}\right).$$
\end{lemma}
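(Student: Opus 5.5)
We plan to exploit the fact that a short charging run leaves a lot of ``uncharged'' suffix incidence on the base path, and to convert that into a triangle structure.

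\textbf{Step 1: surviving incidences.} By Lemma \ref{lem:basepath}, $\sum_{v \in \pi_b} \texttt{suffdeg}_{\Pi}(v) \ge Ld/\lambda$. Each charging round deletes one (path, node) suffix incidence. So if the scheme halts within $Ld/(2\lambda)$ rounds, at least $Ld/(2\lambda)$ incidences survive. Each is a pair $(\pi, x)$ with $x \in \texttt{suffix}(\pi) \cap \pi_b$, where $\pi \in \Pi_{s,t}$. Because the scheme halted, every $S$-edge $e$ incident to such an $x$ has $\val_{s,t}(e) < \sigma/(\lambda^2 L)$.

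\textbf{Step 2: build triangles.} Fix a surviving incidence $(\pi, x)$ and let $y$ be any node of $\texttt{prefix}(\pi)$. By Lemma \ref{lem:pathwitness}(2), $y$ reaches $x$ in $G \setminus X$, and $\vdist_{s,t}(s,x) - \vdist_{s,t}(s,y) \ge \Omega(k/L) \ge \Omegahat(1)$, so $\val_{s,t}(y,x) \ge \Omegahat(1)$. Now take any other node $z \in \pi_b$ reachable from $x$ (say later on $\pi_b$). By Lemma \ref{lem:supershortcut}, $x$ reaches $z$ through $(x,m),(m,z) \in S$ for some $m \in M_x$. Since $x \in \texttt{suffix}(\pi)$, the edge $(x,m)$ has small $\val_{s,t}$. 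Hence $\vdist_{s,t}(s,m) \ge \vdist_{s,t}(s,x) - \sigma/(\lambda^2 L)$, up to the ordering subtleties that $\val$ handles.

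The goal is to show that $\val_{s,t}(y, \cdot)$ is still $\Omegahat(1)$ at every pair $(y,m)$ with $m \in M_x$. Then the pair $(y,m)$ is ``far'' for every demand pair whose witness path passes through $x$ in its suffix and through $y$ in its prefix.

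\textbf{Step 3: counting and pigeonhole.} Summing over surviving incidences gives $\Omegahat(Ld)$ incidences, and each yields $\Omegahat(L)$ prefix nodes $y$. That is about $\Omegahat(L^2 d)$ triples (demand pair, $y$, $m$), each contributing $\Omegahat(1)$ to $\val(y,m)$.

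The candidate pairs $(y,m)$ range over $y \in V$ and $m$ among the $\Oish(1)$ medians $M_x$, grouped by the $O(L)$ nodes $x \in \pi_b$. This bounds the number of distinct pairs by roughly $n \cdot \Oish(1)$ per base-path position. The $\sigma$ factor must enter through the slack lost in Step 2. Where $\val_{s,t}(x,m)$ is below $\sigma/(\lambda^2L)$ but the gap needed is on the order of $\sigma/L$ steps, only a $1/\sigma$ fraction of the $L$ positions along $\pi_b$ can be used without double counting.

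Balancing gives an average $\val$ over at most about $\sigma n$ pairs of $\Omegahat(L^2 d/(\sigma n))$. Pigeonhole then produces an edge $e = (y,m)$ with $\val(e) \ge \Omegahat(L^2 d/(\sigma n))$. Reachability of $y \leadsto m$ in $G\setminus X$ comes from $y \leadsto x \leadsto m$.

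\textbf{Main obstacle.} I expect Step 2/3 to be the crux: getting the bookkeeping of $\sigma$ right. This means showing that the low-$\val$ condition on the $S$-edges at $x$ transfers the prefix-to-suffix distance gap of $\pi$ to the median $m$, while ensuring the number of distinct target pairs is only about $\sigma n$ and not $Ln$. My first attempt would be to group the base-path positions into blocks of $\sigma$ consecutive nodes, which share medians at the appropriate recursion depth. I would then charge each triple to the block's median rather than to $x$ itself.
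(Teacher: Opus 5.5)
Step 1 is fine, but Steps 2 and 3 do not go through. In Step 2 the inference runs the wrong way. A small $\val_{s,t}(x,m)$ for the $S$-edge $(x,m)$ \emph{leaving} $x$ only bounds $\vdist_{s,t}(s,m)$ from above in terms of $\vdist_{s,t}(s,x)$, so $\val_{s,t}(y,m)$ may be $0$. To transfer the prefix-to-suffix gap you need $\val_{s,t}(y,x) \le \val_{s,t}(y,m) + \val_{s,t}(m,x)$ with $(m,x) \in S$ \emph{entering} the surviving suffix node $x$. Such an $m$ comes from $M_z$ for an earlier $z \in \pi_b$ with $z \leadsto x$, not from $M_x$.

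More seriously, the count in Step 3 cannot reach the target. With $y$ ranging over $V$ and $m$ over the medians of all base-path nodes (essentially all of $\pi_b$), there are about $nL$ candidate pairs. Your roughly $L^2 d$ contributions also overcount, since a path with several surviving suffix nodes sharing a median contributes only once per demand pair. So pigeonhole over these pairs only certifies about $Ld/n$. That is a Gupta-type bound, short by exactly the factor $L/\sigma$ the lemma is meant to gain ($(n/L)^{1/2}$ at the eventual choice of $\sigma$). Nothing in the plan justifies the reduction to ``about $\sigma n$ pairs''.

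The missing idea is to funnel many paths through a common anchor with only $O(\log n)$ candidate edges. First, at halting every path has $|\texttt{suffix}(\pi) \cap \pi_b| \le \sigma$. Otherwise its first and last such nodes are at least $\sigma$ steps apart along $\pi$, giving a reachable pair in $\pi_b$ with $\val_{s,t}$ at least $\sigma/L$. Via $S$ and the $\val$ triangle inequality, some $S$-edge at one of them would then still be chargeable.

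This is where $\sigma$ enters: the $\Omegahat(Ld)$ surviving incidences come from $\Omegahat(Ld/\sigma)$ distinct paths. Averaging over a random node then gives a single $u$ lying in the prefixes of $\Omegahat(L^2 d/(\sigma n))$ of them, necessarily from distinct demand pairs.

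Now pick $v_i \in \texttt{suffix}(q_i) \cap \pi_b$ for each such path $q_i$, and let $v_1$ be the earliest along $\pi_b$. Route $u \to v_1 \to m \to v_i$ with $m \in M_{v_1}$ (or $u \to v_1 \to v_i$ when $(v_1,v_i) \in S$). The last leg is an $S$-edge into $v_i$, so by halting its $\val_{s_i,t_i}$ is below $\sigma/(\lambda^2 L)$. This forces $(u,v_1)$ or $(v_1,m)$ to carry $\Omegahat(1)$ for $(s_i,t_i)$. Since these $O(\log n)$ edges are shared by all $i$, pigeonhole gives an edge $e$ with $\val(e) \ge \Omegahat(L^2 d/(\sigma n))$.
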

\end{mdframed}

The proof of this lemma requires a new suite of ideas, exploiting certain combinatorial structures that are guaranteed to exist if and when the charging scheme halts early.
We defer it to Section \ref{sec:shortcharge}.
We now see how it implies Lemma \ref{lem:vallb}:

\begin{proof} [Proof of Lemma \ref{lem:vallb}, assuming Lemmas \ref{lem:pathwitness} and \ref{lem:shortcharge}]
We calculate the choice of $\sigma$ that balances the expressions in Lemmas \ref{lem:longcharge} and \ref{lem:shortcharge} as follows:
\begin{align*}
\frac{d\sigma}{L} &= \frac{L^2 d}{\sigma n}\\
\sigma^2 &= L^3 n^{-1}\\
\sigma &= L^{3/2} n^{-1/2}.
\end{align*}
Recall that we have $n^{1/3} \le L \le n$, and through this parameter restriction we have $1 \le \sigma \le L$, as required by the charging scheme.
We may therefore apply the bound in Lemmas \ref{lem:longcharge} and \ref{lem:shortcharge} (which are equal, under the previous choice of $\sigma$).
These imply that there is a pair of nodes $(u, v)$, with a $u \leadsto v$ path in $G \setminus X$, and where
\begin{align*}
\val(u, v) &\ge \Omegahat\left(\frac{\sigma d}{L}\right) \tag*{Lemma \ref{lem:longcharge}}\\
&= \Omegahat\left(\frac{dL^{1/2}}{n^{1/2}}\right) \tag*{substitute $\sigma$}\\
&\ge |\Pi| \cdot \Omegahat\left(\frac{L}{n}\right)^{3/2} \tag*{since $d \ge \frac{|\Pi|\cdot (L/\lambda)}{n}$}\\
&= \mass\left(\{w_{s, t}\}\right) \cdot \Omegahat\left(\frac{L^{3/2}}{n^{3/2}}\right) \tag*{Lemma \ref{lem:pathwitness}. \qedhere}
\end{align*}
\end{proof}

\subsection{Proof of Lemma \ref{lem:shortcharge} \label{sec:shortcharge}}

The following technical lemma shows that our $\val$ functions satisfy the triangle inequality:
\begin{lemma} [$\val$ Triangle Inequality] \label{lem:valtrineq}
For all $u, v, x, (s, t)$, we have
$\val_{s, t}(u, x) \le \val_{s, t}(u, v) + \val_{s, t}(v, x).$
\end{lemma}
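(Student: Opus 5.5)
The plan is to reduce the statement to a one-dimensional fact about intervals on the real line. Fix $(s,t)$ and write $a = \vdist_{s,t}(s,u)$, $b = \vdist_{s,t}(s,v)$, $c = \vdist_{s,t}(s,x)$. By the definition of the $\val$ function, $\val_{s,t}(u,x)$ is the probability that $d \sim [0,1]$ lands in the interval $[a,c]$. We must read this with the convention that the interval is empty (probability $0$) when $a > c$. The statement then becomes
$$\Pr[d \in [a,c]] \le \Pr[d \in [a,b]] + \Pr[d \in [b,c]],$$
with each interval empty when its left endpoint exceeds its right endpoint.

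First, I dispose of the trivial case. If $a > c$, the left side is $0$ and the inequality holds, since probabilities are nonnegative.

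Second, assume $a \le c$. The key step is the set containment
$$[a,c] \subseteq [a,b] \cup [b,c].$$
I verify it by cases on where $b$ falls.
\begin{itemize}
\item If $a \le b \le c$, the two intervals are nonempty and their union is exactly $[a,c]$.
\item If $b < a$, then $[b,c] \supseteq [a,c]$.
\item If $b > c$, then $[a,b] \supseteq [a,c]$.
\end{itemize}

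Third, I apply monotonicity of probability and the union bound for the uniform measure on $[0,1]$:
$$\Pr[d \in [a,c]] \le \Pr[d \in [a,b]] + \Pr[d \in [b,c]].$$
This is exactly $\val_{s,t}(u,x) \le \val_{s,t}(u,v) + \val_{s,t}(v,x)$.

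I do not expect a real obstacle here. The only delicate point is the interpretation of $\val_{s,t}$ when $\vdist_{s,t}(s,u) > \vdist_{s,t}(s,v)$, so I will state the empty-interval convention explicitly. This convention is the one consistent with Lemma \ref{lem:rlccorrect}. I will also note that the argument uses nothing about $\vdist$ beyond the three distances being real numbers. In particular, no graph triangle inequality is needed, and reachability or the value of $\vdist$ at infinity causes no trouble, since intervals extending beyond $[0,1]$ are simply truncated by the measure.
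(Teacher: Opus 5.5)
Your proof is correct and follows essentially the same route as the paper: both rest on the containment $[\vdist_{s,t}(s,u),\vdist_{s,t}(s,x)] \subseteq [\vdist_{s,t}(s,u),\vdist_{s,t}(s,v)] \cup [\vdist_{s,t}(s,v),\vdist_{s,t}(s,x)]$, followed by monotonicity and subadditivity of length on $[0,1]$. Your case analysis on where $\vdist_{s,t}(s,v)$ falls, together with the explicit empty-interval convention, justifies the containment, a step the paper uses without comment.
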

\begin{proof}
In the following calculations we will write $\len$ for interval length, and $\vdist$ as a shorthand for $\vdist_{s, t}$ (suppressing the subscript).
We have:
\begin{align*}
\val_{s, t}(u, x) &\le \len\left([0, 1] \ \bigcap \ [\vdist(s, u), \vdist(s, x)] \right)\\
&\le \len\left([0, 1] \ \bigcap \ \left([\vdist(s, u), \vdist(s, v)] \cup [\vdist(s, v), \vdist(s, x)]\right)\right)\\
&= \len\left(\left([0, 1] \cap [\vdist(s, u), \vdist(s, v)]\right) \ \bigcup \ \left([0, 1] \cap [\vdist(s, v), \vdist(s, x)]\right)\right) \tag*{DeMorgan}\\
&\le \len\left([0, 1] \cap [\vdist(s, u), \vdist(s, v)]\right) \ \mathlarger{+} \ \len\left([0, 1] \cap [\vdist(s, v), \vdist(s, x)]\right)\\
&= \val_{s, t}(u, v) + \val_{s, t}(v, x). \tag*{\qedhere}
\end{align*}
\end{proof}

Using this, we can identify a useful combinatorial structure in our path system:

\begin{lemma} [Path System Structure] \label{lem:pathstructure}
If the charging scheme halts within $\le \frac{Ld}{2\lambda}$ rounds, then once it halts, there exists a node $u \in V$ and a subset of paths $Q \subseteq \Pi$ of size
$$|Q| \ge \Omegahat\left(\frac{L^2d}{\sigma n}\right)$$
where all paths $q \in Q$ have $u \in \texttt{prefix}(q)$, and also $\texttt{suffix}(q)$ intersects $\pi_b$.
\end{lemma}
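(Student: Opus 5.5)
The plan is to count, once the charging scheme has halted, the surviving incidences between suffixes of paths and the base path $\pi_b$. Then I show that no single path can account for many of them, which yields many distinct paths whose suffix meets $\pi_b$. Finally I pigeonhole their prefixes onto a single node $u$.

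\textbf{Step 1 (many surviving incidences).} By Lemma \ref{lem:basepath}, initially $\sum_{x \in \pi_b} \texttt{suffdeg}_{\Pi}(x) \ge \frac{Ld}{\lambda}$. This sum counts pairs $(\pi, x)$ with $\pi \in \Pi$ and $x \in \texttt{suffix}(\pi) \cap \pi_b$. Each round of the charging scheme deletes exactly one such incidence. Since the scheme halts within $\le \frac{Ld}{2\lambda}$ rounds, at least $\frac{Ld}{2\lambda}$ incidences $(\pi, x)$ survive. For each survivor, and every edge $e \in S$ incident to $x$, the halting condition gives $\val_{\texttt{endpoints}(\pi)}(e) < \frac{\sigma}{\lambda^2 L}$. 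Prefixes are never touched by the scheme, so every path still has a prefix of $\lfloor |\pi|/4 \rfloor \ge \Omega(L/\lambda)$ nodes.

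\textbf{Step 2 (each path contributes few incidences; the main obstacle).} Fix $\pi \in \Pi_{s,t}$ with vertex sequence $(v_0, \dots, v_k)$. Take two surviving suffix nodes $x = v_i$ and $x' = v_j$ of $\pi$ lying on $\pi_b$, with $i < j$.
\begin{itemize}
\item Reachability: by Lemma \ref{lem:pathwitness}, $\pi$ is a subsequence of an $s \leadsto t$ path in $G \setminus X$, so there is an $x \leadsto x'$ path in $G \setminus X$.
\item Shortcut via $S$: by Lemma \ref{lem:supershortcut}, either $(x, x') \in S$ or there is an $m \in M_x$ with $(x, m), (m, x') \in S$. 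The edge $(x, m)$ is incident to $x$ and $(m, x')$ is incident to $x'$. Both $x$ and $x'$ are surviving suffix nodes of $\pi$ on $\pi_b$, so each of these edges has $\val_{s,t} < \frac{\sigma}{\lambda^2 L}$.
\item Upper bound: by Lemma \ref{lem:valtrineq}, $\val_{s,t}(x, x') < \frac{2\sigma}{\lambda^2 L}$.
\item Lower bound: Property 2 of Lemma \ref{lem:pathwitness} gives $\vdist_{s,t}(s, v_j) - \vdist_{s,t}(s, v_i) \ge \frac{j-i}{L}$, and both values lie in $[0,1]$. So the interval defining $\val_{s,t}(x, x')$ sits inside $[0,1]$ and has length $\ge \frac{j-i}{L}$. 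Hence $\val_{s,t}(x, x') \ge \frac{j-i}{L}$.
\end{itemize}
Comparing the two bounds gives $j - i < 2\sigma/\lambda^2$. Therefore all surviving incidences of $\pi$ lie in a window of $O(\sigma)$ consecutive positions of $\pi$, and each path contributes at most $O(\sigma)$ incidences. The delicate part is orienting the pair correctly, which is why I order $x, x'$ by position along $\pi$. It is also where the choice of $M_x$ in Lemma \ref{lem:supershortcut} is needed, so that both shortcut edges are incident to surviving nodes.

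\textbf{Step 3 (pigeonhole).} Let $Q_0$ be the set of paths with at least one surviving incidence. From Steps 1 and 2, $|Q_0| \ge \frac{Ld/(2\lambda)}{O(\sigma)} = \Omegahat(Ld/\sigma)$. Every $q \in Q_0$ has a suffix intersecting $\pi_b$ and a prefix of $\Omega(L/\lambda)$ nodes. Counting pairs $(q, u)$ with $q \in Q_0$ and $u \in \texttt{prefix}(q)$ gives $\Omegahat(L^2 d/\sigma)$ pairs spread over $n$ nodes. So some $u \in V$ lies in the prefix of at least $\Omegahat\left(\frac{L^2 d}{\sigma n}\right)$ paths of $Q_0$. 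Taking $Q$ to be these paths proves the lemma.
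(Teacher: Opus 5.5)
Your proposal is correct and follows essentially the same route as the paper. Like the paper, you count the at least $\frac{Ld}{2\lambda}$ surviving suffix--$\pi_b$ incidences, and you use Lemma~\ref{lem:supershortcut}, the triangle inequality of Lemma~\ref{lem:valtrineq} and the halting condition to cap each path's surviving incidences at $O(\sigma)$. The paper phrases this cap by contradiction, using the first and last such nodes on a path, while you argue directly for every pair. You then average over a node $u$ to find $Q$, exactly as the paper does.

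One small correction to your commentary: the $M_x$ property is not needed here. Any two-hop shortcut $(x,m),(m,x')$ already has one edge incident to $x$ and one incident to $x'$. That property is only used later, in the proof of Lemma~\ref{lem:shortcharge}.
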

\begin{proof}
In each round of the charging scheme, we delete one node from one suffix.
Thus, after these deletions, we still have the inequality
$$\sum \limits_{v \in \pi_b} \texttt{suffdeg}_{\Pi}(v) \ge \underbrace{\frac{Ld}{\lambda}}_{\text{initial sum}} - \underbrace{\frac{Ld}{2\lambda}}_{\text{number of deletions}} \ge \Omegahat(Ld).$$
Additionally, we claim that once the charging scheme halts, every path $\pi \in \Pi$ satisfies $|\texttt{suffix}(\pi) \cap \pi_b| \le \sigma$.
To see this, let $\pi \in \Pi$ with $|\texttt{suffix}(\pi) \cap \pi_b| > \sigma$, and let $u, v$ respectively be the first, last nodes along $\texttt{suffix}(\pi)$ that are contained in $\texttt{suffix}(\pi) \cap \pi_b$.
Since $u, v$ are at least $\sigma$ steps apart along $\pi$, by Lemma \ref{lem:pathwitness}, letting $(s, t)$ be the endpoints of $\pi$, we have
$$\val_{s, t}(u, v) \ge \vdist_{s, t}(s, v) - \vdist_{s, t}(s, u) \ge \frac{\sigma}{L}.$$
Then, by Lemma \ref{lem:supershortcut}, we either have $(u, v) \in S$ or we have $(u, m), (m, v) \in S$ for some node $m$.
In the former case, we can successfully charge $\frac{\sigma}{L} \gg \frac{\sigma}{\lambda^2}$ value to $(u, v)$, and then delete $u$ or $v$.
In the latter case, by Lemma \ref{lem:valtrineq}, we have
$$\val_{s, t}(u, m) + \val_{s, t}(m, v) \ge \val_{s, t}(u, v) \ge \frac{\sigma}{L},$$
and so we can charge at least $\frac{\sigma}{2L} \gg \frac{\sigma}{\lambda^2 L}$ value to either $\val_{s, t}(u, m)$ or $\val_{s, t}(m, v)$, and then delete $u$ or $v$.
So the charging scheme will not halt until no more paths with $|\texttt{suffix}(\pi) \cap \pi_b| > \sigma$ exist.

We will use this bound to obtain a bound on the size of the subset of paths $\Pi' \subseteq \Pi$ whose suffix intersects $\pi_b$.
We have
$$\Omegahat\left(Ld\right) \le \sum \limits_{v \in \pi_b} \texttt{suffdeg}_{\Pi}(v) = \sum \limits_{v \in \pi_b} \texttt{suffdeg}_{\Pi'}(v) < \left|\Pi'\right|\cdot \sigma,$$
and therefore
$$\left|\Pi'\right| \ge \Omegahat\left(\frac{Ld}{\sigma} \right).$$
Now, choose a node $u \in V$ uniformly at random, and let $Q \subseteq \Pi'$ be the subset of these paths that contain $u$ in their prefix.
We have
\begin{align*}
\mathbb{E}[|Q|] &=   \sum \limits_{\pi \in \Pi'} \Pr\left[ u \in \texttt{prefix}(\pi) \right]\\
                &=   \sum \limits_{\pi \in \Pi'} \frac{\lceil |\pi|/4 \rceil}{n}\\
                &\ge \sum \limits_{\pi \in \Pi'} \Omegahat\left(\frac{L}{n}\right) \tag*{$|\pi| \ge \frac{L}{\lambda}$ from Lemma \ref{lem:pathwitness}}\\
                &= \left|\Pi'\right| \cdot \Omegahat\left(\frac{L}{n}\right)\\
                &= \Omegahat\left(\frac{L^2d}{\sigma n}\right).
\end{align*}
Hence there exists a choice of node $u$ that makes $|Q|$ match or exceed this expectation, which satisfies the lemma.
\end{proof}

We now convert this combinatorial structure into our desired lower bound for $\val$.

\begin{proof} [Proof of Lemma \ref{lem:shortcharge}]
Let $u$ be a node and $Q = \{q_1, \dots, q_k\}$ a set of paths as in Lemma \ref{lem:pathstructure}.
Select corresponding nodes $\{v_1, \dots, v_k\}$ such that each $v_i \in \texttt{suffix}(q_i) \cap \pi_b$, and assume without loss of generality that $v_1$ is the first of these nodes along $\pi_b$.
For each node $v_i$, define the \emph{canonical sequence} to be the node sequence of the form:
$$\begin{cases}
(u, v_1, m, v_i) & \text{if } \dist_S(v_1, v_i) = 2, \text{where $m \in M_{v_1}$ is selected such that $(v_1, m), (m, v_i) \in S$,}\\
(u, v_1, v_i) & \text{if } (v_1, v_i) \in S,\\
(u, v_1=v_i) & \text{if } v_1=v_i.
\end{cases}$$

\begin{center}
\begin{tikzpicture}[
  x=1cm,y=1cm,
  dot/.style={circle,fill=black,inner sep=1.6pt},
  >=Latex
]

\node[dot,label=below right:$u$] (u) at (0,0.7) {};

\draw[thick,->] (4,0) -- (4,4.5) node[above] {$\pi_b$};

\draw[thick,->] (-1.0,0.7) to node [pos=0.6, below] {$q_1$} (4,0.7) -- (5.3,0.7);
\node[dot,label=below right:$v_1$] (v1) at (4,0.7) {};

\draw[thick,->] (-0.8,0.34) -- (0,0.7) to node [pos=0.6, below] {$q_2$} (4,1.8) -- (5.3,2.15);
\node[dot,label=below right:$v_2$] at (4,1.8) {};

\draw[thick,->] (-0.6,0.02) -- (0,0.7) to node [pos=0.6, below] {$q_3$} (4,2.9) -- (5.3,3.6);
\node[dot,label=below right:$v_3$] at (4,2.9) {};

\draw[thick,->] (-0.45,-0.18) -- (0,0.7) to node [pos=0.6, below] {$q_4$} (4,4.0) -- (5.2,5.0);
\node[dot,label=below right:$v_4$] (v4) at (4,4.0) {};

\node[dot, red, label=left:$\color{red} m$] (m) at (4,2.4) {};

\draw[red, ->] (u) to [bend right=10] (v1) to [bend right=15] (m) to [bend right=15] (v4);

\node [red, align=center] at (7, 2.7) {canonical sequence for $v_4$\\ is $(u, v_1, m, v_4)$};
\end{tikzpicture}
\end{center}
Let $(s_i, t_i)$ denote the demand pair associated to each path $q_i$, and note that these are pairwise distinct, since all paths in $Q$ contain the node $u$ but paths associated to the same demand pair must be node-disjoint.
For all $i$, since $u \in \texttt{prefix}(q_i)$ and $v_i \in \texttt{suffix}(q_i)$, we have
\begin{align*}
\val_{s_i, t_i}(u, v_i) &\ge \vdist_{s_i, t_i}(s_i, v_i) - \vdist_{s_i, t_i}(s_i, u)\\
&\ge \frac{1}{L} \cdot \left(\frac{3|q_i|}{4} - \frac{|q_i|}{4}\right)\\
&\ge \Omega\left(\frac{1}{\lambda}\right) \tag*{since $|q_i| \ge \frac{L}{\lambda}$ by Lemma \ref{lem:pathwitness}.}
\end{align*}
Applying Lemma \ref{lem:valtrineq} over the canonical sequence for $v_i$, we therefore have
$$\Omega\left(\frac{1}{\lambda}\right) \le \val_{s_i, t_i}(u, v_i) \le \val_{s_i, t_i}(u, v_1) + \val_{s_i, t_i}(v_1, m) + \val_{s_i, t_i}(m, v_i)$$
(this assumes the first case, where the canonical sequence has the form $(u, v_1, m, v_i)$; the other two cases for the canonical sequence are very similar).
Moreover, we have
$$\val_{s_i, t_i}(m, v_i) \le \frac{\sigma}{\lambda^2 L} \le \frac{1}{\lambda^2},$$
since otherwise the charging scheme would remove $v_i$.
Thus we have
$$\max\left\{\val_{s_i, t_i}(u, v_1), \val_{s_i, t_i}(v_1, m)\right\} \ge \Omega\left(\frac{1}{\lambda}\right) - \frac{1}{\lambda^2} \ge \Omegahat(1).$$
This holds for every choice of $i$, and although the midpoint node $m$ may differ between choices of $i$, we always have $m \in M_{v_1}$ with $|M_{v_1}| \le O(\log n)$.
Hence there is a particular edge $e$, either of the form $e=(u, v_1)$ or $e=(v_1, m), m \in M_v$, with the property that $\val_{s_i, t_i}(e) \ge \Omegahat(1)$ for
$$\ge \frac{|Q|}{|M_{v_1}|+1} \ge \Omegahat\left(\frac{L^2 d}{\sigma n}\right)$$
distinct demand pairs $\{(s_i, t_i)\}$.
Summing over these demand pairs, we get
$$\val(e) \ge \Omegahat\left(\frac{L^2 d}{\sigma n}\right),$$
completing the lemma.
\end{proof}

\subsection{Proof of Lemma \ref{lem:pathwitness} \label{sec:pathwitness}}

Here, we construct a path system $R = (V, \Pi)$ satisfying the various technical properties listed in Lemma \ref{lem:pathwitness}.
Although we describe the construction algorithmically, we note that this part of the argument occurs purely in the analysis, so we do not need to consider issues of runtime.

\begin{mdframed}[backgroundcolor=blue!10]
\textbf{Construction of $\Pi_{s, t}$ Sets.} For each demand pair $(s, t) \in P$ we will describe a construction of a corresponding path set $\Pi_{s, t}$; the final path set $\Pi$ is the union of all these sets.
Initially, $\Pi_{s, t} \gets \emptyset$.
We will write $V(\Pi_{s, t})$ to denote the subset of vertices that participate in any path currently in $\Pi_{s, t}$.
Repeat the following process until no more starting paths $\pi$ may be found:
\begin{itemize}
\item Find an $s \leadsto t$ path $\pi$ in $G \setminus X$ that takes $<\frac{1}{4}$ of its total internal weight from vertices already in $\Pi_{s, t}$.
That is,
$$w_{s, t}\bigg(\left(\pi \setminus \{s, t\}\right) \ \cap \ V(\Pi_{s, t})\bigg) < \frac{1}{4}.$$

\item Let $y$ be the last node along $\pi$ with $\vdist_{s, t}(s, y) \le 1$.
Let $\pi'$ be the prefix $\pi[s \leadsto y]$.

\item Let $\pi'' \subseteq \pi'$ be the vertex subsequence obtained by removing all vertices $v \in \pi'$ with $v \in V(\Pi_{s, t})$.

\item Let $\pi''' \subseteq \pi''$ be the vertex subsequence generated by the following process.
Take the first node in $\pi''$ as the first node in $\pi'''$.
Then, repeat while possible: letting $u$ be the most recent node added to $\pi'''$, find the next node $v \in \pi''$ with the property that
$$\vdist_{s, t}(s, u) + \frac{1}{L} \le \vdist_{s, t}(s, v).$$
Then add $v$ as the next node in $\pi'''$.

\item Add $\pi'''$ to $\Pi_{s, t}$ as a new path.
\end{itemize}
\end{mdframed}

Many of the properties claimed in Lemma \ref{lem:pathwitness} are immediate from this construction.
There are two nontrivial claims which are not so immediate, which we prove in the following two lemmas.

\begin{lemma} [Lower Bound on Path Lengths]
For any path $\pi''' = (v_0, \dots, v_k)$ added to a part $\Pi_{s, t}$, we have $k \ge \Omegahat(L)$.
\end{lemma}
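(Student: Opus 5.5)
We need to show that $\pi'''$ has $\Omegahat(L)$ nodes. The plan is to lower-bound the total $w_{s,t}$-weight that $\pi''$ carries between $s$ and $y$. Then we argue that each greedy step of $\pi'''$ can only "skip over" about $\Ohat(1/L)$ of that weight, so the number of steps must be $\Omegahat(L)$.

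\textbf{Step 1: weight on $\pi''$.} Let $y'$ be the successor of $y$ on $\pi$. Since $y$ is the last node with $\vdist_{s,t}(s,y)\le 1$, we have $\vdist_{s,t}(s,y')>1$. This holds even when $y'=t$, because $w_{s,t}$ is a fractional cut for $(s,t)$ on $G\setminus X$; nodes of $X$ have weight $1$ and $\pi$ avoids $X$. By the triangle inequality, $\vdist_{s,t}(s,y') \le w_{s,t}((\pi'\setminus\{s\}))$, where this sum counts the interior nodes of $\pi'$ together with $y$. So the interior weight of $\pi'$, including $y$, is greater than $1$. By the selection rule, less than $1/4$ of this weight lies in $V(\Pi_{s,t})$. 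We also lose at most $w_{s,t}(y)\le \frac{4^{\epoch}}{L}=\Ohat(1/L)$ by the epoch weight cap. Hence $w_{s,t}(\pi''\setminus\{s\}) \ge 3/4-\Ohat(1/L)\ge 1/2$ for $L$ large enough. Here we may assume $L\ge n^{1/3}$, as arranged earlier.

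\textbf{Step 2: from weight to a lower bound on $k$.} Write $\pi''=(x_0,x_1,\dots)$ in path order. Along $\pi''$ the quantity $\vdist_{s,t}(s,\cdot)$ need not be monotone, because the removed nodes break contiguity. I would therefore work with the subsequence relation on $\pi'$ directly.

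The key claim: if $v_j$ and $v_{j+1}$ are consecutive in $\pi'''$, then the nodes of $\pi''$ strictly between them, together with $v_j$, have total weight at most $\Ohat(1/L)$. To prove this, let $z$ be the last node of $\pi''$ before $v_{j+1}$. Greedy did not select $z$, so
\[
\vdist_{s,t}(s,z)<\vdist_{s,t}(s,v_j)+\tfrac1L .
\]
On the other hand, the subpath of $\pi'$ from $v_j$ to $z$ gives an $s\leadsto z$ walk. This alone does not bound the $\pi''$-weight in between from above, because $\vdist$ is a minimum over paths.

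Bounding this in-between weight is the main obstacle. The fix I would try first is to choose $\pi$ in the construction as a \emph{shortest} path with respect to a modified weight that zeroes out the nodes of $V(\Pi_{s,t})$. Then for nodes $a$ before $b$ on $\pi$, $w(\pi[a,b])$ over nodes outside $V(\Pi_{s,t})$ is controlled by $\vdist_{s,t}(s,b)-\vdist_{s,t}(s,a)$, up to the weight of nodes in $V(\Pi_{s,t})$. The weight in $V(\Pi_{s,t})$ is bounded by at most $1/4$ total, and those nodes are excluded from $\pi''$ anyway.

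With this in place, the gap between consecutive greedy picks contributes at most $\frac1L+2\cdot\frac{4^{\epoch}}{L}=\Ohat(1/L)$ of $\pi''$-weight, plus the correction terms. Summing over the $k$ gaps and the tail after $v_k$ then gives
\[
\tfrac12 \le (k+1)\cdot \Ohat(1/L) + \tfrac14\cdot(\text{slack}).
\]
It remains to ensure the slack term stays at most a constant below $1/2$, which is where the thresholds $1/4$ and the bound $3/4$ from Step 1 are used. This yields $k\ge \Omegahat(L)$, with the $n^{o(1)}$ loss coming only from the $4^{\epoch}$ weight cap. That loss defines $\lambda$.
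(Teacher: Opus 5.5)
\textbf{There is a genuine gap: the key claim of Step 2 is false for the construction as stated.} In the construction, $\pi$ is \emph{any} $s \leadsto t$ path in $G\setminus X$ satisfying the $<\frac14$ condition. For example, in the first epoch (all weights $\frac1L$), let $\pi$ pass through $a_1,\dots,a_m$, each also an out-neighbor of $s$, and then continue along a long chain to $t$. Then $\vdist_{s,t}(s,a_i)=0$ for every $i$, so the greedy picks at most one $a_i$. The other $a_i$ all lie between two consecutive picks, with total weight about $m/L$, which is far more than $\Ohat(1/L)$ when $m$ is polynomial in $n$.

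Your repair has three problems. First, choosing $\pi$ as a shortest path for a weight $\tilde w$ that zeroes out $V(\Pi_{s,t})$ changes the construction, whereas the lemma must hold for every admissible choice of $\pi$. Second, the repair is not even compatible with the construction. A $\tilde w$-shortest path tends to run through $V(\Pi_{s,t})$ and may put $\ge\frac14$ of its weight there while some other path still qualifies. The construction, however, must continue until no qualifying path remains, and the System Size lemma relies on exactly this halting condition. Third, the final accounting with an unspecified ``slack'' term is left open, so the argument does not close.

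\textbf{The missing idea: bound the growth of $\vdist_{s,t}$, not the weight of $\pi''$.} You do not need any upper bound on the weight of $\pi''$, so Step 1 can be dropped. The greedy rule is driven by $\vdist_{s,t}$, so bound how fast $\vdist_{s,t}(s,\cdot)$ can grow between consecutive picks. Your own inequality $\vdist_{s,t}(s,z)<\vdist_{s,t}(s,v_j)+\frac1L$ is exactly what gives this.

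Let $z_j$ be the $\pi''$-predecessor of $v_{j+1}$ (possibly $z_j=v_j$). Let $D_j$ be the weight of the $\pi'$-nodes strictly between $z_j$ and $v_{j+1}$; all of these lie in $V(\Pi_{s,t})$. Then
\[
\vdist_{s,t}(s,v_{j+1}) \le \vdist_{s,t}(s,z_j)+w_{s,t}(z_j)+D_j \le \vdist_{s,t}(s,v_j)+\frac{1}{L}+\frac{4^{\epoch}}{L}+D_j .
\]
Similarly, $\vdist_{s,t}(s,v_0)\le D_{\mathrm{pre}}$. Comparing the last node of $\pi''$ with $y$ and using the definition of $y$ gives $\vdist_{s,t}(s,v_k)\ge 1-\Ohat(1/L)-D_{\mathrm{post}}$. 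Here $D_{\mathrm{pre}}$ and $D_{\mathrm{post}}$ are the weights of the deleted nodes of $\pi'$ before $v_0$ and after the last node of $\pi''$, respectively.

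All these $D$'s are weights of disjoint sets of interior nodes of $\pi$ lying in $V(\Pi_{s,t})$, so they sum to less than $\frac14$. Telescoping gives $\frac34-\Ohat(1/L)<k\cdot\Ohat(1/L)$, i.e.\ $k\ge\Omegahat(L)$, for every admissible $\pi$.

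This is the telescoping in the paper's proof. Note, however, that it must be run against $z_j$ rather than $v_j$. The nodes the greedy skips are not in $V(\Pi_{s,t})$, so they are not covered by the $<\frac14$ hypothesis. The paper's write-up treats all interior nodes of $\pi'[v_i\leadsto v_{i+1}]$ as deleted in passing to $\pi''$, and so glosses over precisely the issue you ran into.
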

\begin{proof}
For brevity of notation, in the following proof we will write $\vdist$ to stand for $\vdist_{s, t}$, and $w$ to stand for $w_{s, t}$, suppressing the subscripts.
Consider a round of the algorithm in which we construct $\pi \supseteq \pi' \supseteq \pi'' \supseteq \pi'''$.
The first step in our proof is to lower bound the total node weight that was deleted when moving from $\pi'$ to $\pi''$.
We consider two places where these deleted nodes can occur:
\begin{itemize}
\item When moving from $\pi'$ to $\pi''$, we delete the entire prefix $\pi'[s \leadsto v_0]$, except for $v_0$.
The total internal weight of this prefix (not counting $s$, and not counting $v_0$) is at least $\vdist(s, v_0)$.

\item For each index $0 \le i \le k-1$ and each node $v_i$, we delete the internal nodes on the subpath $\pi'[v_i \leadsto v_{i+1}]$.
These have weight
\begin{align*}
w\bigg( \pi'[v_i \leadsto v_{i+1}] \setminus \{v_i, v_{i+1}\}\bigg) &\ge \vdist(s, v_{i+1}) - \vdist(s, v_i) - w(v_i)\\
&\ge \vdist(s, v_{i+1}) - \vdist(s, v_i) - \frac{4^{\epoch}}{L} \tag*{since $v_i \notin X$, so $w(v_i) \le \frac{4^{\epoch}}{L}$}\\
&\ge \vdist(s, v_{i+1}) - \vdist(s, v_i) - \Ohat\left(\frac{1}{L}\right) \tag*{since $\epoch \le O\left( \frac{\log n}{\log \log n}\right)$.}
\end{align*}

\item We also delete all internal vertices along the suffix $\pi'[v_k \leadsto y]$.
Arguing identically to the previous case, the total weight of these internal vertices deleted while moving from $\pi'$ to $\pi''$ is at least 
\begin{align*}
w\bigg( \pi'[v_k \leadsto y] \setminus \{v_k, y\}\bigg) &\ge \vdist(s, y) - \vdist(s, v_k) - \Ohat\left(\frac{1}{L}\right).
\end{align*}
Moreover, by definition of $y$ in the construction, we have
\begin{align*}
\vdist(s, y) &\ge 1 - w(y)\\
&\ge 1 - \Ohat\left(\frac{1}{L}\right).
\end{align*}
Substituting this into the previous inequality, we have
\begin{align*}
w\bigg( \pi'[v_k \leadsto y] \setminus \{v_k, y\}\bigg) &\ge 1 - \vdist(s, v_k) - \Ohat\left(\frac{1}{L}\right).
\end{align*}
\end{itemize}

Now, note that the hypotheses on $\pi$ imply that the \emph{total} weight deleted when we move from $\pi'$ to $\pi''$ is less than $1/4$.
So, adding up our lower bounds on the deleted weight from the previous three cases, we have the inequality:
\begin{align*}
\frac{1}{4} &> \vdist(s, v_0) + \left(\sum \limits_{i=0}^{k-1} \vdist(s, v_{i+1}) - \vdist(s, v_{i}) - \Ohat\left(\frac{1}{L}\right)\right) + \left( 1 - \vdist(s, v_k) - \Ohat\left(\frac{1}{L}\right)\right)\\
&= 1 - \Ohat\left(\frac{k}{L}\right).
\end{align*}
Rearranging this inequality gives $k \ge \Omegahat(L)$, as desired.
\end{proof}

\begin{lemma} [System Size]
The final size of the path system satisfies $|\Pi| \ge \Omegahat\left(\sum \limits_{(s, t) \in P} w_{s, t}(V \setminus X)\right)$.
\end{lemma}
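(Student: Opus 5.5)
The plan is to use the stopping condition of the construction to turn $4 w_{s,t}$, restricted to $V(\Pi_{s,t})$, into a feasible candidate for the fractional cuts $w'_{s,t}$ in Algorithm \ref{alg:main-alg}. Then the fact that the current epoch did not end in this round bounds $\mass(\{w_{s,t}\})$ by the total weight on path vertices, and that weight is at most $\Ohat(1)$ per path.

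\textbf{Step 1 (the stopped sets induce fractional cuts).} Fix $(s,t) \in P$ and consider $\Pi_{s,t}$ when the construction halts. By the stopping rule, every $s \leadsto t$ path $\pi$ in $G \setminus X$ satisfies $w_{s,t}\big((\pi\setminus\{s,t\}) \cap V(\Pi_{s,t})\big) \ge \frac14$. Define $\tilde w_{s,t}$ as follows:
\begin{itemize}
\item $\tilde w_{s,t}(v) = 1$ for $v \in X$;
\item $\tilde w_{s,t}(v) = 4 w_{s,t}(v)$ for $v \in V(\Pi_{s,t}) \setminus X$;
\item $\tilde w_{s,t}(v) = 0$ otherwise.
\end{itemize}
Every $s \leadsto t$ path in $G$ either has an internal vertex in $X$, giving internal weight $\ge 1$, or lies in $G\setminus X$, giving internal weight $\ge 4\cdot\frac14 = 1$. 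So $\tilde w_{s,t}$ is a fractional $(s,t)$ cut with weight $1$ on $X$.

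\textbf{Step 2 (feasibility and the epoch test).} Next I check the weight cap on $V \setminus X$. The current $w_{s,t}$ was produced either at initialization, with value $\frac1L$ while $\epoch = 1$, or at the most recent epoch transition, where it was computed under the cap $\frac{4^{\epoch-1}}{L}$ before $\epoch$ was incremented. In both cases $4 w_{s,t}(v) \le \frac{4^{\epoch}}{L}$ on $V\setminus X$, so $\tilde w_{s,t}$ is feasible for the minimization defining $w'_{s,t}$ in this round. Hence $\mass(\{w'_{s,t}\}) \le \sum_{(s,t)} 4\, w_{s,t}(V(\Pi_{s,t})\setminus X)$.

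Because we are in a round of the current epoch, the \emph{if} test did not fire. (Here I use that the witness is built at the beginning of the round, with the same $X$ as the candidate computation.) Therefore
$$\mass(\{w_{s,t}\}) \le \log n \cdot \mass(\{w'_{s,t}\}) \le 4\log n \sum_{(s,t)\in P} w_{s,t}\big(V(\Pi_{s,t})\setminus X\big).$$

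\textbf{Step 3 (weight per path).} Each path $\pi''' = (v_0,\dots,v_k)$ has consecutive $\vdist_{s,t}$ increments of at least $\frac1L$, and $\vdist_{s,t}(s,v_k) \le 1$, so $k \le L$. Its vertices lie in $G\setminus X$, each of weight at most $\frac{4^{\epoch}}{L} = \Ohat(1/L)$ by Lemma \ref{lem:numepochs}. Thus each path carries weight $\Ohat(1)$. Summing over paths gives $w_{s,t}(V(\Pi_{s,t})\setminus X) \le \Ohat(|\Pi_{s,t}|)$, and combining with Step 2 yields $|\Pi| \ge \Omegahat(\mass(\{w_{s,t}\}))$, which is the claim.

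\textbf{Main obstacle.} The delicate point is Step 2: checking that the scaled weights $4w_{s,t}$ respect the cap $\frac{4^{\epoch}}{L}$ used by the candidate cuts. This is exactly why the cap grows by a factor of $4$ per epoch, so the main work is confirming the epoch indexing lines up as described. The remaining steps are bookkeeping.
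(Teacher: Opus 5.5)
Your proposal is correct and is essentially the paper's own proof. The paper uses the same weights ($1$ on $X$, $4w_{s,t}$ on $V(\Pi_{s,t})\setminus X$, $0$ elsewhere), the same appeal to the halting rule and to the epoch test not having fired, and the same $\Ohat(1)$ weight bound per path; your explicit epoch-indexing check for the cap, and your use of the upper bound $k \le L$ where the paper's text cites the length lower bound, are slightly more careful. One caveat you share with the paper: in the first round of a new epoch the test \emph{did} fire, and that round's candidate was computed under the old cap $4^{\epoch-1}/L$, so your Step~2 (like the paper's ``the epoch is still going'') is justified only from the second round of each epoch on, or after replacing the \emph{if} with a \emph{while}, which leaves the epoch-count bound intact.
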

\begin{proof}
For a demand pair $(s, t) \in P$, after completing the construction of $\Pi_{s, t}$, consider the weight function
$$w'_{s, t}(v) := \begin{cases}
1 & \text{if } v \in X\\
4w_{s, t}(v) & \text{if } v \in V(\Pi_{s, t}) \setminus X\\
0 & \text{otherwise.}
\end{cases}$$
We make two observations about this weight function.
First, we claim that it is a valid $(s, t)$ fractional cut.
Since every $s \leadsto t$ path $\pi$ in $G$ either contains an internal vertex in $X$ (so $w(\pi) \ge 1$), or by the halting condition on the construction of $\Pi_{s, t}$, it has
$$w\bigg(\pi \cap \left(V(\Pi_{s, t}) \setminus X\right)\bigg) \ge \frac{1}{4},$$
and thus
\begin{align*}
w'(\pi) &= w'\bigg(\pi \cap (V(\Pi_{s, t})\setminus X)\bigg)\\
        &\ge 4w\bigg(\pi \cap (V(\Pi_{s, t})\setminus X)\bigg)\\
        &\ge 1.
\end{align*}
Second, we note that by our construction, every node in $V \setminus X$ with nonzero weight under $w'_{s, t}$ is in exactly one path in $\Pi_{s, t}$.
Thus we have
\begin{align*}
w'_{s, t}(V \setminus X) &= \sum \limits_{\pi \in \Pi_{s, t}} w'(\pi)\\
&\le \sum \limits_{\pi \in \Pi_{s, t}} |\pi| \cdot \Ohat\left(\frac{1}{L}\right)\\
&\le \sum \limits_{\pi \in \Pi_{s, t}} \Ohat(1) \tag*{since $|\pi| \ge \Omegahat(L)$ by previous lemma}\\
&= \Ohat\left(\left|\Pi_{s, t} \right|\right).
\end{align*}
Using this, we can bound
\begin{align*}
|\Pi| &= \sum \limits_{(s, t) \in P} \left| \Pi_{s, t} \right|\\
&\ge \Omegahat\left(\mass\left(\{w'_{s, t}\}\right)\right)\\
&\ge \Omegahat\left(\mass\left(\{w_{s, t}\}\right)\right).
\end{align*}
To explain the last inequality in this chain a bit further, observe that $\{w'_{s, t}\}$ satisfy the conditions of the candidate weights computed by our main algorithm when checking for the end of an epoch; in particular, they form valid fractional cuts in $G \setminus X$, and their maximum weight on nodes in $V \setminus X$ is at most $4$ times that of $\{w_{s, t}\}$ and hence $\le \frac{4^{\epoch}}{L}$.
Thus, if their mass were smaller than that of $\{w_{s, t}\}$ by more than a factor of $\log n$, we would have triggered the epoch end condition at the end of the previous round of the algorithm.
Since we assume that the epoch is still going, the inequality holds.
\end{proof}

\section{Reductions Among Cut Problems \label{sec:reductions}}

Our goal in this section is to show a sequence of reductions among versions of the flow-cut gap problem; in addition to intrinsic interest, these will be useful to simplify the problem before our main proof.
The network of reductions is summarized in Figure \ref{fig:reductions}.
In order to state our reductions, we define the following functions:
\begin{itemize}
\item The \emph{edge flow-cut gap (FCG)}, denoted $\efg(n, W)$, is the least integer such that for every $n$-node edge-weighted and edge-costed graph $G = (V, E, \cost, w)$ with $W = w(E)$, there is an integral cut $X \subseteq E$ for the pairs at edge-weighted distance $\ge 1$ with
$$\cost(X) \le \langle \cost, w \rangle \cdot \efg(n, W).$$
In other words, the edge part of Theorems \ref{thm:main} and \ref{thm:mainweight} state that $\efg(n, W) \le \min\left\{\Ohat\left(W^{1/2}\right), \Ohat\left(n^{1/3}\right)\right\}$. 

\item The \emph{vertex FCG}, denoted $\vfg(n, W)$, is the least integer such that for every $n$-node vertex-weighted and vertex-costed graph $G = (V, E, \cost, w)$ with $W = w(V)$, there is an integral cut $X \subseteq V$ for the pairs at vertex-weighted distance $\ge 1$ with
$$\cost(X) \le \langle \cost, w \rangle \cdot \vfg(n, W).$$
In other words, the vertex part of Theorems \ref{thm:main} and \ref{thm:mainweight} state that $\vfg(n, W) \le \min\left\{\Ohat\left(W^{1/2}\right), \Ohat\left(n^{1/3}\right)\right\}$.

\item The \emph{vertex FCG with unit costs}, denoted $\vufg(n, W)$, is defined as above with the additional constraint that every vertex $v \in V$ has $\cost(v) = 1$.

\item The \emph{vertex FCG with unit costs and uniform weights}, denoted $\vuufg(n, W)$, is defined as above with the additional constraint that all vertices $v \in V$ have the same weight $w(v)=k$ for some value $k$.
\end{itemize}

\begin{figure}[h]
\begin{center}
\begin{tikzpicture}[
  box/.style={
    draw,
    thick,
    rectangle,
    align=center,
    text width=2cm,
    minimum height=1.6cm
  },
  x=1cm, y=1cm
]

\def\xgap{4.3}

\node[box] (vcr) at (0,0) {vertex FCG\\unit costs\\uniform wts\\$\vuufg(n,W)$};

\node[box] (vuf) at (\xgap,0) {vertex FCG\\unit costs\\$\vufg(n,W)$};

\node[box] (vfg) at (2*\xgap,0) {vertex FCG\\$\vfg(n,W)$};

\node[box] (efg) at (3*\xgap,0) {edge FCG\\$\efg(n, W)$};

\draw [{Latex[length=3mm,width=2mm]}-{Latex[length=3mm,width=2mm]}]
(vcr) to node[midway,above,font=\small] {Section \ref{sec:vertreductions}} (vuf);

\draw [{Latex[length=3mm,width=2mm]}-{Latex[length=3mm,width=2mm]}]
(vuf) to node[midway,above,font=\small] {Section \ref{sec:vufgtovfg}} node[midway, below,align=center,font=\small] {$n \leftrightarrow \texttt{poly}(n)$} (vfg);

\draw [{Latex[length=3mm,width=2mm]}-{Latex[length=3mm,width=2mm]}]
(vfg) to node[midway,above,font=\small] {Section \ref{sec:fcgreductions}}
node[midway,below,align=center,font=\small] {$n \leftrightarrow n \log n$}
(efg);

\draw[-{Latex[length=3mm,width=2mm]}]
(vfg) to[out=120,in=60,looseness=3]
node[above,font=\small] {Section \ref{sec:vfgself}, $W$ to $n$ Parametrization}
(vfg);

\end{tikzpicture}
\end{center}
\caption{\label{fig:reductions} The network of reductions proved in this section.  The parameter $W$ represents the total fractional cut weight, $W:=w(V)$ or $W:=w(E)$.}
\end{figure}
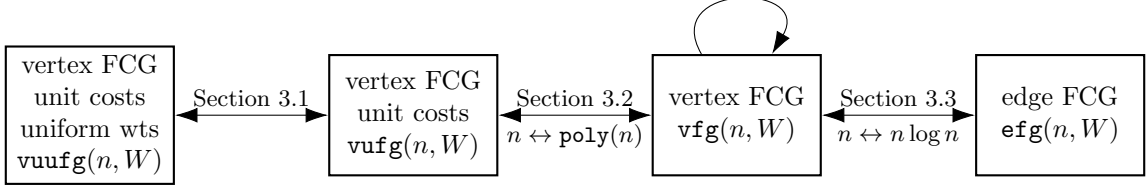

\subsection{\label{sec:vertreductions} Reducing to Uniform Weights}

We first show that, in the setting of unit vertex costs, we can reduce to the setting where we also have uniform vertex weights:

\begin{theorem}
$\vufg(n, W) \le O\left(\vuufg\left(O(n), O(W)\right)\right)$
\end{theorem}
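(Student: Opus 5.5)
We reduce a unit-cost instance $G=(V,E,w)$ with $W=w(V)$ to a unit-cost, uniform-weight instance with $O(n)$ nodes and weight $O(W)$. Set the uniform weight $k := W/n$.

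First we dispose of heavy and light nodes. Nodes with $w(v) \ge 1$ (or any constant threshold, say $w(v)\ge 1/2$ after doubling $w$) can be put directly into $X$. There are at most $O(W)$ of them, and since costs are unit this costs $O(W) \le O(\langle \cost, w\rangle)$, which is absorbed. Next, set every node with $w(v) < k/2$ to weight $0$. This removes total weight at most $n \cdot k/2 = W/2$ along any path. After we double the remaining weights, every pair with $\vdist_w \ge 1$ still has distance at least $1$. We may therefore assume that each surviving weight is $0$ or lies in $[k/2, 1]$.

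The main step is to replace each node $v$ by a \emph{clique-chain gadget} of uniform-weight copies. Let $r(v) := \lceil w(v)/k \rceil$. Replace $v$ by a directed path $v^{(1)} \to \cdots \to v^{(r(v))}$ of copies, each of weight $k$. Incoming edges go to $v^{(1)}$ and outgoing edges leave $v^{(r(v))}$. The $s\leadsto t$ paths then correspond, and the weight of each internal gadget is $r(v)k \ge w(v)$, so fractional cuts are preserved. The new node count is $\sum_v r(v) \le n + W/k = 2n$, and the new total weight is at most $\sum_v (w(v)+k) \le 2W$.

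The problem is that an integral cut of the new instance may cut a single copy, whereas we need to cut $v$ itself. A path gadget handles this: cutting any copy $v^{(j)}$ disconnects the whole chain, so we map it back to cutting $v$. The cost does not increase, since several cut copies map to one node. Endpoints also need care. If $v$ is a terminal, we take $v^{(1)}$ as the source copy and $v^{(r(v))}$ as the sink copy, which forces internal copies to be counted. A cleaner choice is to attach fresh weight-$0$ terminal nodes, but weight $0$ breaks uniformity. Instead we give terminals weight $k$, which adds at most $2nk = O(W)$ weight and only helps distances.

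Zero-weight nodes are the main obstacle, since uniformity forbids them. Each such node must carry weight $k$. That can make many zero-weight nodes chained together add spurious distance, but only in the direction of increasing distances, so fractional cuts remain valid. The real difficulty is the integral side: the cut from the $\vuufg$ instance may now cut formerly-free nodes, and every cut counts at unit cost. Since costs are unit and we charge cut copies one-to-one, an arbitrary cut of cost $C$ in the new instance yields a cut of cost at most $C$ in the old one. Hence $\cost(X) \le \vuufg(O(n),O(W))\cdot k\cdot N'$, where $N' = O(n)$ is the new node count, giving a bound of $O(W)\cdot\vuufg$. Because $\langle \cost, w\rangle = W$ with unit costs, this is exactly $\vufg(n,W)\le O(\vuufg(O(n),O(W)))$. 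The thing to verify carefully is that the inflation of zero-weight nodes does not blow up $N'$, which holds since there are at most $n$ of them.
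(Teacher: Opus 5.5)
Your core construction is the paper's: a directed path of $\max\{1,\lceil w(v)/k\rceil\}$ copies of weight $k=W/n$, with in-edges to the first copy, out-edges from the last, and cut copies mapped back to their vertex. Your counts $N'\le 2n$ (new node count), total weight $\le 2W$, and $|X|\le |X'|\le N'k\cdot\vuufg(N',N'k)$ also match. However, two auxiliary steps are wrong as written.

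\textbf{The light-node step.} Zeroing every node with $w(v)<k/2$ can remove up to $nk/2=W/2$ from a path. Doubling restores distance $1$ only if that loss is at most $1/2$, and your bound guarantees this only when $W\le 1$. For large $W$, an $a\leadsto b$ path whose interior weight $1$ is spread over many nodes of weight below $k/2$ drops to distance $0$. Your construction is rescued only because you later give each zeroed node weight $k>w(v)$. The correct argument is the paper's direct pointwise comparison with the original $w$: every gadget has total weight $\max\{1,\lceil w(v)/k\rceil\}\cdot k\ge w(v)$. Once you argue this way, both the light-node zeroing and the heavy-node removal are unnecessary.

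\textbf{The endpoint handling is backwards.} Suppose you use $a^{(1)}$ as the source copy and $b^{(r(b))}$ as the sink copy. Then every path between them passes through $a^{(2)},\dots,a^{(r(a))}$ and $b^{(1)},\dots,b^{(r(b)-1)}$ as interior vertices. Whenever $r(a)\ge 2$ or $r(b)\ge 2$, $X'$ may therefore separate this pair using only a copy of $a$ or of $b$. Such a copy maps back to an endpoint, which does not cut $(a,b)$ in $G$.

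You should instead use the pair $(a^{(r(a))},b^{(1)})$, as the paper does. Its paths in $G'$ correspond bijectively to $a\leadsto b$ paths in $G$, and their interior vertices are exactly the copies of the original interior vertices. Hence its distance is at least $\vdist_w(a,b)\ge 1$. Since $X'$ cuts every far pair of $G'$, it cuts this one, and the copy that is hit maps to an interior vertex of the original path. No fresh terminals or extra weight are needed. With these two repairs your argument becomes the paper's proof.
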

\begin{proof}
Let $G = (V, E, w)$ be an $n$-node input to the vertex unit-cost flow-cut gap problem.
Our goal is to construct a small vertex cut $X \subseteq V$ for the pairs at vertex-weighted distance $\ge 1$.

\paragraph{The Construction.}
Create a new unweighted graph $G' = (V', E')$ by considering each vertex $v \in V$ and replacing it with a path $(v_1, \dots, v_k)$ in $V'$ (with each directed edge $(v_i, v_{i+1}) \in E'$), where
$$k := \left\lceil \frac{w(v)}{w(V)/n} \right\rceil.$$
In other words, if $v$ has the average vertex weight or less then it remains as a single vertex; otherwise we copy it by the number of times it exceeds the average vertex weight (rounded up).
All nodes $v \in V'$ are assigned weight $w'(v) := w(V)/n$.
For each edge $(u, v) \in E$ entering $v$, replace it with $(u, v_1) \in E'$.
For each edge $(v, x) \in E$ leaving $v$, replace it with $(v_k, x) \in E'$.

\begin{center}
\begin{tikzpicture}[
  x=1cm,y=1cm,
  node/.style={circle,draw,thick,minimum size=7mm,inner sep=0pt},
  tiny/.style={circle,draw,thick,minimum size=5mm,inner sep=0pt,font=\scriptsize},
  lab/.style={font=\small},
  edge/.style={->,thick},
  fadededge/.style={->,thin,draw=black!30}
]


\def\xA{0.0}
\def\xV{1.5}
\def\xB{3.0}

\node[node] (a1) at (\xA,  1.4) {$a_1$};
\node[node] (a2) at (\xA,  0.0) {$a_2$};
\node[node] (a3) at (\xA, -1.4) {$a_3$};

\node[node,draw=red,text=red] (v) at (\xV,0.0) {$v$};
\node[above=4mm of v, red, align=center] {\small $w(v) = \frac{3w(V)}{n}$};

\node[node] (b1) at (\xB,  1.4) {$b_1$};
\node[node] (b2) at (\xB,  0.0) {$b_2$};
\node[node] (b3) at (\xB, -1.4) {$b_3$};

\draw[edge] (a1) -- (v);
\draw[edge] (a2) -- (v);
\draw[edge] (a3) -- (v);

\draw[edge] (v) -- (b1);
\draw[edge] (v) -- (b2);
\draw[edge] (v) -- (b3);

\draw[->,very thick] (3.8,0) to node[midway,above] {replace $v$} (5.3,0);

\def\xA{6.5}
\def\xVone{8.0}
\def\xVtwo{9.1}
\def\xVthree{10.2}
\def\xB{11.7}

\node[node] (a1r) at (\xA,  1.4) {$a_1$};
\node[node] (a2r) at (\xA,  0.0) {$a_2$};
\node[node] (a3r) at (\xA, -1.4) {$a_3$};

\node[node,draw=red,text=red] (v1r) at (\xVone,0.0) {$v_1$};
\node[node,draw=red,text=red] (v2r) at (\xVtwo,0.0) {$v_2$};
\node[node,draw=red,text=red] (v3r) at (\xVthree,0.0) {$v_3$};
\node[above=3mm of v2r, red, align=center] {$w(v_1)=w(v_2)=$\\$w(v_3)=\frac{w(V)}{n}$};

\node[node] (b1r) at (\xB,  1.4) {$b_1$};
\node[node] (b2r) at (\xB,  0.0) {$b_2$};
\node[node] (b3r) at (\xB, -1.4) {$b_3$};

\draw[edge] (a1r) -- (v1r);
\draw[edge] (a2r) -- (v1r);
\draw[edge] (a3r) -- (v1r);

\draw[edge] (v1r) -- (v2r);
\draw[edge] (v2r) -- (v3r);

\draw[edge] (v3r) -- (b1r);
\draw[edge] (v3r) -- (b2r);
\draw[edge] (v3r) -- (b3r);
\end{tikzpicture}
\end{center}

Letting $n'$ be the number of nodes in $G'$, there is a vertex cut $X'$ for all pairs in $G'$ at weighted distance $\ge 1$, which has size
$$|X'| \le w'(V) \cdot \vuufg\left(n', n' \cdot \frac{w(V)}{n}\right).$$
Now define a vertex cut $X \subseteq V$ by mapping each vertex $x' \in X'$ back to its corresponding vertex in $V$ that created the path containing $x'$.

\paragraph{Correctness.}
We next argue that $X$ is a correct vertex cut.
Consider some pair of nodes $(a, b)$ in $G$ with $\dist_{G \mid w}(a, b) \ge 1$.
Let $(a_k, b_1)$ be the respective last, first nodes of the paths corresponding to $a$ and $b$ in $G'$.
Notice that there is a natural bijection between $a \leadsto b$ paths $\pi$ in $G$ and $a_k \leadsto b_1$ paths $\pi'$ in $G'$, formed by replacing each internal vertex of $\pi$ with its associated path in $G'$.
So, consider one such pair of corresponding paths $\pi, \pi'$.
Let the vertex sequence of $\pi$ be
$$\pi' := (a_k, v_1, \dots, v_j, b_1).$$
Thus we have:
\begin{align*}
w'\left(\pi' \setminus \{a, b\}\right) &\ge \underbrace{\left(\frac{w(V)}{n}\right)}_{\text{weight of each vertex}} \cdot \underbrace{\sum \limits_{z=1}^j \left\lceil \frac{w(v_i) \cdot n}{w(V)} \right\rceil}_{\text{number of internal vertices}}\\
&\ge \sum \limits_{z=1}^j w(v_i)\\
&\ge 1,
\end{align*}
where the last inequality follows since $w$ is a fractional cut for $(a, b)$, and so the sum of vertex weights along $\pi$ is at least $1$.
Thus $X'$ is a cut in $G'$ for $(a_k, b_1)$.
It therefore contains a vertex $v_i \in \pi'$, which is a copy of one of the internal vertices of $\pi$.
Thus $X$ contains a vertex from $\pi$.

\paragraph{Bound on Size.} First, we have
\begin{align*}
n' &= \sum \limits_{v \in V} \left\lceil \frac{w(v) \cdot n}{w(V)} \right\rceil\\
&\le n + \sum \limits_{v \in V} \frac{w(v)\cdot n}{w(V)}\\
&\le 2n.
\end{align*}
Thus, the size of the cut is bounded by
\begin{align*}
|X| &\le |X'|\\
&\le w'(V) \cdot \vuufg\left(n', w'(V)\right)\\
&\le n' \cdot \frac{w(V)}{n} \cdot \vuufg\left(n', n' \cdot \frac{w(V)}{n}\right)\\
&= O\left(w(V)\right) \cdot \vuufg\left(O(n), O(w(V))\right),
\end{align*}
completing the proof.
\end{proof}

\subsection{Reducing to Unit Costs \label{sec:vufgtovfg}}

We next show that we can reduce into the setting of unit vertex costs, at the price of polynomial blowup in the number of vertices:

\begin{theorem}
$\vfg(n, W) \le O(\vufg(O(n^2), O(W)))$
\end{theorem}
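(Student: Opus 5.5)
The plan is to replace each vertex by a set of parallel copies, one per unit of (rescaled) cost, so that a unit-cost cut in the new graph pays for the costs of the original vertices. Let $G = (V, E, \cost, w)$ be an $n$-node instance. The reduction should not blow up $n$ beyond $O(n^2)$, so the first step is to discretize costs. Scale so that $\langle \cost, w\rangle$ is fixed. Every vertex $v$ with $\cost(v) > n \cdot \langle \cost, w \rangle$ must have $w(v) < 1/n$; I will zero out these weights and rescale the remaining weights by a factor $2$ to keep a fractional cut. Since the total removed weight along any path is $<1$, this doubling step needs care; I would instead argue with $w \cdot 2$ on the heavy-free part plus the observation that removed weight on a path is at most $n \cdot (1/n)$-ish, and adjust constants. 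Symmetrically, vertices with $\cost(v) < \langle \cost, w\rangle / n$ can all be put in $X$ up front for total cost at most $\langle \cost, w\rangle$, and then their weights are irrelevant (set to $0$, and they are treated as already cut). After these steps, the remaining costs lie in a range of ratio $n^2$. Rounding up to integer multiples of the minimum costs $c_{\min}$ then gives integer costs $k_v := \lceil \cost(v)/c_{\min}\rceil \le O(n^2)$, with $\sum_v k_v$ possibly large. To get $O(n^2)$ total copies I would instead set $k_v := \lceil \cost(v) \cdot n / \sum_u \cost(u)\rceil$, which gives $\sum k_v \le 2n$. However, that loses resolution for cheap vertices, which are exactly those already handled in the cheap-vertex step relative to a different normalization. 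So I would pick the normalization $k_v = \lceil \cost(v)/ (\langle\cost,w\rangle/n)\rceil$ restricted to vertices with $\cost(v) \le n\langle \cost,w\rangle$, giving $k_v \le O(n^2)$ and total $O(n^3)$. To meet the claimed $O(n^2)$, I would further note that the weighted sum $\sum_v k_v w(v) \le O(n) + n\cdot O(1)$ controls things, and I expect to spend most effort here tightening the counting.

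Given the integer multiplicities $k_v$, build $G'$ by replacing $v$ with $k_v$ copies $v^{(1)},\dots,v^{(k_v)}$. Each copy inherits all in- and out-edges of $v$, as an independent set, not a path. Each copy gets weight $w(v)$ and unit cost. Every $a \leadsto b$ path in $G'$ projects to an $a\leadsto b$ walk in $G$ with the same internal weights, so weighted distances are preserved and $w$ is a fractional cut in $G'$. The total weight is $W' = \sum_v k_v w(v)$, which is $\Theta(\langle \cost, w\rangle \cdot n/\langle\cost,w\rangle)$-scaled. Here I must check that $W' = O(W)$. This fails for expensive vertices with large $k_v$ unless their weight is small, so the normalization must be chosen so that $k_v w(v) \lesssim$ something summing to $O(W)$. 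This tension between keeping $W'$ at $O(W)$ and keeping the cost approximation faithful is the main obstacle. I would try normalizing with $k_v = \lceil \cost(v) W / \langle \cost,w\rangle\rceil$, which gives $\sum k_v w(v) \le W + W = 2W$. Vertices with $k_v > n$ have $\cost(v) > n\langle \cost, w\rangle/W$; their weight sum on any path, I hope, can be shown to be negligible or else they can be handled by the doubling trick.

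Apply $\vufg$ to $G'$ to get a unit-cost cut $X'$. A cut in $G'$ separates $(a,b)$ only if, for every vertex $v$ it uses, it contains all copies of $v$, since otherwise a path survives through an uncut copy. So I define $X := \{v : \text{all copies of } v \in X'\}$, and correctness transfers. The cost of $X$ is then $\sum_{v\in X} \cost(v) \le \sum_{v \in X} k_v \cdot \langle\cost,w\rangle/W \le |X'| \cdot \langle \cost,w\rangle / W \le \langle\cost,w\rangle \cdot \vufg(n', O(W))$, which is the claimed bound with $n' = O(n^2)$ once the multiplicities are capped.
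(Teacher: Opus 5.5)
\textbf{There is a gap: the proposal never establishes $n' = O(n^2)$, and the fixes it floats do not work.}

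\textbf{What matches the paper.} Your copy construction (independent copies inheriting all edges, each with weight $w(v)$ and unit cost) is the paper's. So is the rule $X=\{v : \text{all copies of } v \text{ lie in } X'\}$, and so is your last normalization $k_v=\lceil \cost(v)\,W/\langle \cost,w\rangle\rceil$. That normalization gives $W'\le 2W$ and $\cost(X)\le |X'|\cdot\langle\cost,w\rangle/W$. These match the paper's final two steps up to a factor of $2$.

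\textbf{Why your remedies fail.} The open step is the bound $n'=\sum_v k_v=O(n^2)$.
\begin{itemize}
\item Zeroing the weight of an expensive vertex does not remove it from the instance. $X'$ may still contain all of its copies, so the cost transfer still needs $k_v\ge \cost(v)W/\langle\cost,w\rangle$, which is unbounded.
\item Capping the multiplicities breaks exactly the inequality $\cost(v)\le k_v\langle\cost,w\rangle/W$ that your last line relies on.
\item Vertices with $k_v>n$ only satisfy $w(v)=O(W/n)$. This need not be small when $W$ is close to $n$, so the doubling trick does not apply to them.
\item Even eliminating every vertex with $\cost(v)>2n\langle\cost,w\rangle$ gives only the per-vertex bound $k_v=O(nW)$. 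That yields $n'=O(n^2W)$, which can be $\Theta(n^3)$.
\end{itemize}

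\textbf{The missing idea.} This is the paper's first step: eliminate \emph{light} vertices rather than expensive ones, and do it by contraction rather than by zeroing weights.
\begin{itemize}
\item Delete every non-terminal vertex with $w(v)\le\frac{1}{2n}$, whatever its cost. Add an edge from each of its in-neighbors to each of its out-neighbors, so it can never enter an integral cut.
\item A simple path loses at most $\frac12$ of its weight this way, so doubling $w$ restores a fractional cut. You may then cap all weights at $1$. Together these steps increase $W$ and $\langle\cost,w\rangle$ by at most a factor of $2$.
\item Contraction deletes pairs whose endpoint is contracted. To keep them, first give each vertex $a$ a weight-$0$ source copy with edges to the out-neighbors of $a$, and a weight-$0$ sink copy with edges from its in-neighbors. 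Sources and sinks are never internal vertices, so each gets a single copy in $G'$ and is dropped from $X$.
\end{itemize}

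\textbf{The resulting count.} Every remaining non-terminal vertex now has $w(v)\ge\frac{1}{2n}$. This gives the \emph{aggregate} bound $\sum_{v\text{ non-terminal}}\cost(v)\le 2n\langle\cost,w\rangle$, and capping gives $W\le n$. Hence
$n'\le O(n)+\frac{W}{\langle\cost,w\rangle}\cdot 2n\langle\cost,w\rangle=O(n+nW)=O(n^2)$.

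With this preprocessing, your separate cheap-vertex step is also unnecessary. The ceiling in $k_v$ already absorbs cheap vertices at an additive cost of $\sum_v w(v)=W$ in $W'$.
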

\begin{proof}
Let $G = (V, E, \cost, w)$ be an $n$-node instance of the vertex flow-cut gap problem, with $w(V) =: W$.
Our goal is to construct a vertex cut $X \subseteq V$ for the pairs at weighted distance $\ge 1$.

In the following we will say that a \emph{terminal node} is a node in $G$ that is either a source (no incoming edges) or a sink (no outgoing edges).
Cutting a terminal node does not help separate any node pairs, so we may assume without loss of generality that all terminal nodes have weight $0$ in the fractional cut $w$, and also are not involved in any integral cut.
Thus we may change the costs of terminal nodes arbitrarily without affecting the reduction.
So, our focus in the following construction is on modifying the costs of the non-terminal nodes.

\begin{enumerate}
\item The goal of the first step is to ensure that all non-terminal nodes have weights in the range $[\frac{1}{2n}, 1]$.\footnote{This step is necessary only if we want to enforce $\poly(n)$ blowup in the first parameter of the reduction, counting the number of nodes.  In this paper, this is used to ensure that the reduction is algorithmic.  But if one is willing to accept unbounded blowup in the number of nodes, this step can be skipped.}
We achieve this with three steps:
\begin{itemize}
\item While possible, repeat the following step.
Find a non-terminal node $v$ with $w(v) \le \frac{1}{2n}$, and contract it: that is, delete $v$ from the graph, and for each pair of in-neighbors $u$ and out-neighbors $x$ of $v$, we add the edge $(u, x)$.

\item Double the weights of all nodes.
The purpose of this step is to restore the properties of $w$ as a fractional cut, following the previous contraction step.
That is: for any pair of nodes that initially had $\dist_{G \mid w}(u, v) \ge 1$, after the previous contraction step we have $\dist_{G \mid w}(u, v) \ge \frac{1}{2}$, and so after doubling we again have $\dist_{G \mid w}(u, v) \ge 1$.

\item For any nodes with $w(v) > 1$, set $w(v) := 1$.
\end{itemize}

\item By rescaling $\cost$, we may assume without loss of generality that
$$\frac{2\langle \cost, w \rangle}{w(V)} = 1.$$
Then, for all nodes $v \in V$ with $\cost(v) < 1$, increase $\cost(v)$ to $1$.

\item Split each vertex $v \in V$ into $\lceil \cost(v)\rceil$ copies $\{v_i\}$.
For each edge $(u, v)$ (or $(v, u)$), add the corresponding edges $(u, v_i)$ (or $(v_i, u)$) for all these copies.
All copies $\{v_i\}$ are assigned weight $w'(v_i) = w(v)$ as their original node $v$, and cost $1$.
\end{enumerate}

\begin{center}
\begin{tikzpicture}[
  x=1cm,y=1cm,
  node/.style={circle,draw,thick,minimum size=7mm,inner sep=0pt},
  tiny/.style={circle,draw,thick,minimum size=5mm,inner sep=0pt,font=\scriptsize},
  lab/.style={font=\small},
  edge/.style={->,thick},
  fadededge/.style={->,thin,draw=black!30}
]


\def\xA{0.0}
\def\xV{1.5}
\def\xB{3.0}

\node[node] (a1) at (\xA,  1.4) {$a_1$};
\node[node] (a2) at (\xA,  0.0) {$a_2$};
\node[node] (a3) at (\xA, -1.4) {$a_3$};

\node[node,draw=red,text=red] (v) at (\xV,0.0) {$v$};
\node[above=4mm of v, red, align=center] {\small $\cost(v) = 3$};

\node[node] (b1) at (\xB,  1.4) {$b_1$};
\node[node] (b2) at (\xB,  0.0) {$b_2$};
\node[node] (b3) at (\xB, -1.4) {$b_3$};

\draw[edge] (a1) -- (v);
\draw[edge] (a2) -- (v);
\draw[edge] (a3) -- (v);

\draw[edge] (v) -- (b1);
\draw[edge] (v) -- (b2);
\draw[edge] (v) -- (b3);

\draw[->,very thick] (3.8,0) to node[midway,above] {replace $v$} (5.3,0);

\def\xA{6.5}
\def\xV{8.6}
\def\xB{10.7}

\node[node] (a1r) at (\xA,  1.4) {$a_1$};
\node[node] (a2r) at (\xA,  0.0) {$a_2$};
\node[node] (a3r) at (\xA, -1.4) {$a_3$};

\node[node,draw=red,text=red] (v1r) at (\xV,  1.4) {$v_1$};
\node[node,draw=red,text=red] (v2r) at (\xV,  0.0) {$v_2$};
\node[node,draw=red,text=red] (v3r) at (\xV, -1.4) {$v_3$};
\node[above=3mm of v1r, red] {$\cost(v_1)=\cost(v_2)=\cost(v_3)=1$};

\node[node] (b1r) at (\xB,  1.4) {$b_1$};
\node[node] (b2r) at (\xB,  0.0) {$b_2$};
\node[node] (b3r) at (\xB, -1.4) {$b_3$};

\draw[edge] (a1r) -- (v1r);
\draw[edge] (a1r) -- (v2r);
\draw[edge] (a1r) -- (v3r);
\draw[edge] (a2r) -- (v1r);
\draw[edge] (a2r) -- (v2r);
\draw[edge] (a2r) -- (v3r);
\draw[edge] (a3r) -- (v1r);
\draw[edge] (a3r) -- (v2r);
\draw[edge] (a3r) -- (v3r);

\draw[edge] (v1r) -- (b1r);
\draw[edge] (v1r) -- (b2r);
\draw[edge] (v1r) -- (b3r);
\draw[edge] (v2r) -- (b1r);
\draw[edge] (v2r) -- (b2r);
\draw[edge] (v2r) -- (b3r);
\draw[edge] (v3r) -- (b1r);
\draw[edge] (v3r) -- (b2r);
\draw[edge] (v3r) -- (b3r);
\end{tikzpicture}
\end{center}

This completes the construction of $G' = (V', E', w')$.
Let $n'$ be the number of nodes in $G'$.
Since $G'$ has unit costs, it has a cut $X' \subseteq V'$ for all pairs at $w'$-vertex-weighted distance $\ge 1$ of size
$$|X'| \le w'(V') \cdot \vufg(n', w'(V')).$$
Define the cut $X \subseteq V$ by including each vertex $v \in V$ in $X$ iff all of its corresponding copies $\{v_i\}$ are included in $X'$.

\paragraph{Correctness.}

We next argue that $X$ is a correct cut for $G$.
Let $a, b \in V$ with $\dist_{G\mid w}(a, b) \ge 1$.
Let $a_1, b_1 \in V'$ be the first copy of $a, b$, respectively after the node-splitting step of the construction.
For any $a_1 \leadsto b_1$ path $\pi'$ in $G'$, there is a corresponding $a \leadsto b$ path $\pi$ in $G$ found by replacing each internal vertex of $\pi'$ with its corresponding vertex in $\pi$.
Since $\pi, \pi'$ have the same total internal vertex weight, and $\dist_{G \mid w}(a, b) \ge 1$, it follows that $\dist_{G' \mid w'}(a_1, b_1) \ge 1$.
So $X'$ is a cut for $(a_1, b_1)$.

Now, consider any $a \leadsto b$ path $\pi$ in $G$, with vertex sequence
$$\pi = \left(a, v^1, \dots, v^k, b  \right).$$
Each of the nodes $v^j \in \pi$ is expanded to one or more copies $\{v^j_i\}$ in $G'$, and we include all edges between adjacent copies, $\{v^j_i\} \times \{v^{j+1}_i\}$ in $E'$.
Thus, in order to cut $(a_1, b_1)$, it is necessary for $X'$ to include every copy of one of these vertices $v^j$.
Our construction therefore includes $v^j \in X$.

\paragraph{Bound on Size.}
The first step of the construction increases weights by at most a factor of $2$, and so the quantities $w(V)$ and $\langle \cost, w \rangle$ change by at most a constant factor, which may be ignored.

Before the second step of the construction begins, by assumption we have
$$\langle \cost, w \rangle = \frac{w(V)}{2},$$
and so
$$\sum \limits_{v \in V \mid \cost(v) > 1} w(v) \le \frac{w(V)}{2}.$$
We therefore have
$$\sum \limits_{v \in V \mid \cost(v) \le 1} w(v) \ge \frac{w(V)}{2}.$$
When we execute the second step of the construction and round costs up to $1$, we increase the value of $\langle \cost, w \rangle$ by at most
\begin{align*}
&\le \sum \limits_{v \in V} \underbrace{\left(\frac{2\langle \cost, w\rangle}{w(V)}\right)}_{\text{upper bound on increase in $\cost(v)$}} \cdot w(v)\\
&= 2\langle \cost, w \rangle\\
&= \Theta(w(V)),
\end{align*}
so it remains the case that $\langle \cost, w \rangle = \Theta(w(V))$.

In the third step of the construction, each node $v \in V$ contributes $w(v)$ to $w(V)$ before splitting, and it contributes $\lceil \cost(v) \rceil \cdot w(v)$ after splitting.
So the additive increase in $w(V)$ is bounded by
$$\sum \limits_{v \in V} \lceil \cost(v) \rceil \cdot w(v) \le 2\langle \cost, w \rangle \le O(w(V)),$$
which is a constant factor.
Similarly, the value of $\langle \cost, w \rangle$ increases by at most a factor of $2$.
As a final detail, the number of nodes $n'$ after this splitting step is bounded by
\begin{align*}
n' &\le \sum \limits_{v \in V} \lceil \cost(v) \rceil\\
   &\le 4n \cdot \langle \cost, w \rangle \tag*{since weights are $\ge \frac{1}{2n}$}\\
   &= 4n \cdot \Theta(w(V))\\
   &= \Theta(n^2) \tag*{since weights are $\le 1$.}
\end{align*}

We can now put the pieces together.
We have:
\begin{align*}
\cost(X) &\le O(|X'|) \tag*{each vertex $x \in X$ comes from $\lceil \cost(x) \rceil$ vertices in $X'$}\\
&\le O(w'(V)) \cdot \vufg\left(n', w'(V')\right)\\
&= \Theta(\langle \cost, w \rangle) \cdot \vufg\left(\Theta(n^2), \Theta(w(V))\right),
\end{align*}
completing the proof.
\end{proof}

\subsection{\label{sec:fcgreductions} Reducing Between Edge and Vertex Cuts}

\begin{theorem} \label{thm:edgetovert}
$\efg(n, W) \le O\left(\vfg\left(O(n \log n), O(W)\right)\right)$
\end{theorem}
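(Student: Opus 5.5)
The plan is to simulate edge cuts by vertex cuts. The naive approach subdivides every edge, which gives $|E|$ nodes, possibly far more than $n \log n$. To avoid this, we bucket the out-edges of each node by weight class and subdivide only once per (node, class) pair. After a preprocessing step there are only $O(\log n)$ relevant weight classes, so the new graph has $O(n \log n)$ nodes.

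\emph{Preprocessing.} Let $G=(V,E,\cost,w)$ be an edge instance. Any simple path has fewer than $n$ edges. So if we zero out every edge with $w(e) < \frac{1}{2n}$, every path loses less than $\frac12$ of its weight. Doubling all remaining weights then restores $\dist_w(s,t)\ge 1$ for every pair that originally had it. Capping weights at $1$ is also harmless. Next, round each nonzero weight up to a power of two $2^{-j}$ with $0 \le j \le \lceil \log_2(2n)\rceil$. All of these steps change $w(E)$ and $\langle \cost, w\rangle$ by at most a constant factor. Edges of weight $0$ never need to be cut and are kept as ordinary edges.

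\emph{Construction.} Build a vertex instance $G'$ as follows.
\begin{itemize}
\item Keep every original node $v$, with vertex weight $0$ and prohibitively large cost.
\item For each node $v$ and each class $j$ such that $v$ has at least one class-$j$ out-edge, create a node $v_j$ with $w'(v_j)=2^{-j}$ and $\cost'(v_j)=\sum \cost(e)$, the sum over the class-$j$ out-edges $e$ of $v$.
\item Replace each class-$j$ edge $(v,x)$ by the two edges $(v,v_j)$ and $(v_j,x)$.
\item Keep each weight-$0$ edge $(v,x)$ as a direct edge.
\end{itemize}
The number of nodes is $n(1+O(\log n)) = O(n\log n)$.

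Any $s\leadsto t$ path in $G'$ between original nodes corresponds to an $s\leadsto t$ path in $G$. Its internal vertex weight equals the sum of the rounded edge weights, and the original nodes contribute $0$ since they are endpoints or have weight $0$. Hence $\vdist_{w'}(s,t)\ge 1$ whenever $\dist_w(s,t)\ge 1$. For the parameters:
\begin{itemize}
\item $w'(V') = \sum_{(v,j)} 2^{-j} \le \sum_e w(e) = O(W)$, since each nonempty bucket contains at least one edge of weight $2^{-j}$.
\item $\langle \cost',w'\rangle = \sum_{(v,j)} 2^{-j}\sum_{e\in \text{bucket}} \cost(e) = \sum_e \cost(e)w(e)$ after rounding, which is $O(\langle \cost,w\rangle)$ relative to the original instance.
\end{itemize}

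\emph{Pulling the cut back.} The definition of $\vfg$ yields a vertex cut $X'$ with $\cost'(X') \le \langle \cost',w'\rangle \cdot \vfg(O(n\log n), O(W))$. Because original nodes have prohibitive cost, $X'$ contains none of them; making this rigorous needs a cost value large enough that including one exceeds the guaranteed bound. Define $X\subseteq E$ to be all out-edges in the buckets $v_j\in X'$. Then $\cost(X)=\cost'(X')$.

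Correctness: every $s\leadsto t$ path in $G$ lifts to a path in $G'$. That lifted path has an internal vertex in $X'$, which must be some $v_j$, and so the edge of the path through $v_j$ lies in $X$. The subtle point is that cutting $v_j$ cuts all class-$j$ out-edges of $v$ at once, and its cost is their total. I expect this to be the main point to check. It is harmless because the charge $2^{-j}\cost'(v_j)$ is exactly the sum of $w(e)\cost(e)$ over that bucket, so grouping loses nothing in $\langle\cost,w\rangle$.
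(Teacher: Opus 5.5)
Your proposal is correct and follows essentially the paper's approach. Both bucket each node's edges into $O(\log n)$ weight classes, create one vertex per (node, class) bucket whose weight is the class value and whose cost is the total cost of the bucket's edges, and pull the vertex cut back to the union of the cut buckets. The only difference is the gadget: the paper buckets both in- and out-edges and joins them through a biclique $L_v \times R_v$, which counts each edge's weight twice and so needs no doubling step and no hub vertex of prohibitive cost. You bucket only out-edges and route through the original node as a weight-$0$ hub; this is rigorous, since the hub's cost does not enter $\langle \cost', w' \rangle$ and can be taken larger than the guaranteed bound.
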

\begin{proof}
Let $G = (V, E, \cost), w$ be an $n$-node instance of the edge flow-cut gap problem, with $w(E) =: W$.
Our goal is to construct an edge cut $X \subseteq E$ for the pairs at weighted distance $\ge 1$.

\paragraph{The Construction.}
We use the following steps to create a new instance $G' = (V', E', \cost')$, where $\cost'$ holds \emph{vertex} costs, as well as a new fractional \emph{vertex} cut $w'$.
We will assume for convenience in the following that $n$ is a power of $2$; otherwise, it may be rounded without issue.

\begin{itemize}
\item For each vertex $v \in V$, replace $v$ with:
\begin{itemize}
\item An $L_v \times R_v$ biclique.
The nodes in both $L_v$ and $R_v$ are labeled from the set
$$\left\{1, \frac{1}{2}, \frac{1}{4}, \dots, \frac{1}{n}, 0\right\}$$
(hence $|L_v| = |R_v| = O(\log n)$).
The label of each vertex will act as its weight in $w'$.

\item Additionally, add a vertex called $v_s$ with an outgoing edge to every node in $R_v$, and add a vertex called $v_e$ with an incoming edge from every node in $L_v$, with $w'(v_s) = w'(v_e) = 0$ (the subscripts $s, e$ stand for \emph{start, end}).
\end{itemize}

\item For each edge $(u, v) \in E$, we map it to an edge $\phi(u, v) := (u'_i, v'_i) \in E'$, where $u'_i \in R_u, v'_i \in L_v$, $i$ denotes the label of these nodes, and:
\begin{itemize}
\item If $w(u, v) \le \frac{1}{2n}$, then $i=0$

\item If $w(u, v) \ge 1$, then $i=1$

\item Otherwise, choose $i$ to be the smallest number that is a power of $\frac{1}{2}$ at least as large as $w(u, v)$.
\end{itemize}
Delete any nodes in a biclique $L_v \cup R_v$ for which we do not map any incident edges.

\item Abusing notation, for $v' \in V'$, we will denote by $\phi^{-1}(v')$ the set of edges $e \in E$ with $\phi(e)$ incident to $v'$.
We assign vertex costs
$\cost'(v') := \cost\left(\phi^{-1}(v')\right).$
\end{itemize}

\begin{center}
\begin{tikzpicture}[
  x=1cm,y=1cm,
  node/.style={circle,draw,thick,minimum size=7mm,inner sep=0pt},
  tiny/.style={circle,draw,thick,minimum size=5mm,inner sep=0pt,font=\scriptsize},
  lab/.style={font=\small},
  edge/.style={->,thick},
  fadededge/.style={->,thin,draw=black!30}
]


\def\xA{0.0}
\def\xV{1.5}
\def\xB{3.0}

\node[node] (a1) at (\xA,  1.4) {$a_1$};
\node[node] (a2) at (\xA,  0.0) {$a_2$};
\node[node] (a3) at (\xA, -1.4) {$a_3$};

\node[node,draw=red,text=red] (v) at (\xV,0.0) {$v$};

\node[node] (b1) at (\xB,  1.4) {$b_1$};
\node[node] (b2) at (\xB,  0.0) {$b_2$};
\node[node] (b3) at (\xB, -1.4) {$b_3$};

\draw[edge] (a1) -- node[lab,above] {$\tfrac{1}{3}$} (v);
\draw[edge] (a2) -- node[lab,above] {$\tfrac{1}{5}$} (v);
\draw[edge] (a3) -- node[lab,above] {$\tfrac{1}{7}$} (v);

\draw[edge] (v) -- node[lab,above] {$\tfrac{1}{3}$} (b1);
\draw[edge] (v) -- node[lab,above] {$\tfrac{1}{5}$} (b2);
\draw[edge] (v) -- node[lab,above] {$\tfrac{1}{7}$} (b3);

\draw[->,very thick] (3.8,0) to node[midway,above] {replace $v$} (5.3,0);


\def\xRA{6.2}
\def\xL{8.0}     
\def\xR{9.3}     
\def\xRB{11.1}

\node[node] (A1) at (\xRA,  1.4) {$a_1$};
\node[node] (A2) at (\xRA,  0.0) {$a_2$};
\node[node] (A3) at (\xRA, -1.4) {$a_3$};

\node[node] (B1) at (\xRB,  1.4) {$b_1$};
\node[node] (B2) at (\xRB,  0.0) {$b_2$};
\node[node] (B3) at (\xRB, -1.4) {$b_3$};

\node[tiny,draw=red,text=red] (L1) at (\xL,  1.6) {$1$};
\node[tiny,draw=red,text=red] (L2) at (\xL,  0.8) {$\tfrac12$};
\node[tiny,draw=red,text=red] (L3) at (\xL,  0.0) {$\tfrac14$};
\node[tiny,draw=red,text=red] (L4) at (\xL, -0.8) {$\tfrac18$};
\node[tiny,draw=red,text=red] (L5) at (\xL, -1.6) {$0$};

\node [opacity=0.5] at (L1) {\Huge \color{blue} $\times$};
\node [opacity=0.5] at (L4) {\Huge \color{blue} $\times$};
\node [opacity=0.5] at (L5) {\Huge \color{blue} $\times$};

\node[tiny,draw=red,text=red] (R1) at (\xR,  1.6) {$1$};
\node[tiny,draw=red,text=red] (R2) at (\xR,  0.8) {$\tfrac12$};
\node[tiny,draw=red,text=red] (R3) at (\xR,  0.0) {$\tfrac14$};
\node[tiny,draw=red,text=red] (R4) at (\xR, -0.8) {$\tfrac18$};
\node[tiny,draw=red,text=red] (R5) at (\xR, -1.6) {$0$};

\node [opacity=0.5] at (R1) {\Huge \color{blue} $\times$};
\node [opacity=0.5] at (R4) {\Huge \color{blue} $\times$};
\node [opacity=0.5] at (R5) {\Huge \color{blue} $\times$};

\foreach \i in {1,2,3,4,5}{
  \foreach \j in {1,2,3,4,5}{
    \draw[fadededge] (L\i) -- (R\j);
  }
}

\draw[edge] (A1) -- (L2);
\draw[edge] (A2) -- (L3);
\draw[edge] (A3) -- (L3);

\draw[edge] (R2) -- (B1);
\draw[edge] (R3) -- (B2);
\draw[edge] (R3) -- (B3);

\node [tiny, draw=red, text=red] (VE) at ({\xL-1}, 0.5) {$v_e$};
\node [tiny, draw=red, text=red] (VS) at ({\xR+1}, 0.5) {$v_s$};
\foreach \i in {1,2,3,4,5}{
    \draw[fadededge] (VS) -- (R\i);
    \draw[fadededge] (L\i) -- (VE);
 }

\node[lab,text=red] at (\xL,2.4) {$L_v$};
\node[lab,text=red] at (\xR,2.4) {$R_v$};

\end{tikzpicture}
\end{center}

This new graph $G' = (V', E', \cost')$ has $n' = O(n \log n)$ nodes, since each vertex has been split into $O(\log n)$ vertices.
Additionally, it follows from the construction that $w'(V') \le O(w(E))$.
Hence there is a vertex cut $X' \subseteq V'$, for all pairs of nodes $(u, v)$ with $\dist_{G' \mid w'}(u, v) \ge 1$, and which has
$$\cost'(X') \le \cost'(w') \cdot \vfg\left(O(n \log n), O(W)\right).$$
Then, the final integral edge cut $X \subseteq E$ is defined as
$$X := \phi^{-1}(X') := \bigcup \limits_{x' \in X'} \phi^{-1}(x').$$

\paragraph{Correctness.}

Consider any pair of nodes $(a, b)$ in $G$ with $\dist_{G \mid w}(a, b) \ge 1$.
Let $\pi$ be any $a \leadsto b$ path that is simple (does not repeat nodes).
Let $\pi'$ be the corresponding $a_s \leadsto b_e$ path in $G'$, where each edge $(u, v) \in \pi$ is replaced by its mapped edge $\phi(u, v) \in E'$, and the endpoints of these edges are connected via biclique edges.
Notice that the endpoint nodes of $\phi(u, v)$ each have weight larger than $w(u, v)$, if $w(u, v) > \frac{1}{2n}$.
Thus we have
\begin{align*}
w'(\pi') &\ge 2 \cdot \sum \limits_{\substack{(u, v) \in \pi\\w(u, v) > \frac{1}{2n}}} w(u, v)\\
&= \left(2 \cdot \sum \limits_{(u, v) \in \pi} w(u, v)\right) - \left(2 \cdot \sum \limits_{\substack{(u, v) \in \pi\\w(u, v) \le \frac{1}{2n}}} w(u, v)\right)\\
&= \left(2 \cdot 1\right) - \left(2 \cdot \sum \limits_{\substack{(u, v) \in \pi\\w(u, v) \le \frac{1}{2n}}} w(u, v)\right) \tag*{since $w$ is a frac cut}\\
&\ge 2 \cdot 1 - 2 \cdot \frac{1}{2} \tag*{since $\pi$ is simple, so has at most $n$ edges}\\
&= 1.
\end{align*}
Thus $w'$ is a fractional cut for the pair $(a_s, b_e)$, and so $X'$ contains an internal vertex of $\pi'$.
By construction, this means there is an edge $e \in \pi$ that is mapped incident to $x'$, and so $e \in \phi^{-1}(x')$, and so $e \in X$.
Since we have argued that $X$ contains an edge from an arbitrary $a \leadsto b$ path $\pi$, it follows that $X$ is a valid integral cut.

\paragraph{Bound on $\cost(X)$.}

We have:
\begin{align*}
\cost(X) &= \cost\left( \bigcup \limits_{x' \in X'} \phi^{-1}(x')\right)\\
&\le \sum \limits_{x' \in X'} \cost(\phi^{-1}(x'))\\
&= \sum \limits_{x' \in X'} \cost'(x')\\
&= \cost'(X')\\
&\le \cost'(w') \cdot \vfg\left( O(n \log n), O(W) \right). 
\end{align*}

To continue simplifying, consider any edge $(u, v) \in E$.
This edge contributes $w(u, v) \cdot \cost(u, v)$ to $\langle \cost, w \rangle$.
In $G'$, this edge causes us to add $\cost(u, v)$ to both endpoint nodes of $\phi(u, v)$, which each have weight $O(w(u, v))$.
It follows that $\cost'(w') \le O(\langle \cost, w\rangle)$, and so we have
$$\cost(X) \le \langle \cost, w \rangle \cdot O\left(\vfg\left( O(n \log n), O(W) \right)\right),$$
completing the proof.
\end{proof}

\begin{theorem} \label{thm:vfgleefg}
$\vfg(n, W) \le \efg(O(n), W)$
\end{theorem}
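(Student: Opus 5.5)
The plan is the standard vertex-splitting reduction. Given an $n$-node vertex instance $G = (V, E, \cost, w)$ with $w(V) = W$, build an edge instance $G'$ as follows. Replace each vertex $v$ by two nodes $v^{in}, v^{out}$ joined by an internal edge $(v^{in}, v^{out})$ with cost $\cost(v)$ and weight $w(v)$. Replace each original edge $(u, v) \in E$ by $(u^{out}, v^{in})$, giving it weight $0$ and a prohibitively large cost; for concreteness, take $\cost(V) + 1$ times the maximum cost, or simply $+\infty$ if the definition of $\efg$ permits it. Then $G'$ has $2n = O(n)$ nodes and total edge weight exactly $w'(E') = w(V) = W$, and $\langle \cost', w' \rangle = \langle \cost, w \rangle$, since the heavy-cost edges carry zero weight.

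For correctness, fix a pair $(a, b)$ with $\vdist_w(a, b) \ge 1$ and consider the pair $(a^{out}, b^{in})$ in $G'$. Any $a^{out} \leadsto b^{in}$ path in $G'$ corresponds bijectively to an $a \leadsto b$ path in $G$. Its internal edges of positive weight are exactly the split edges of the internal vertices of the original path, which excludes $a$ and $b$ themselves. Hence $\dist_{w'}(a^{out}, b^{in}) = \vdist_w(a, b) \ge 1$. By the definition of $\efg$, there is an edge cut $X'$ for all such pairs with
$$\cost'(X') \le \langle \cost, w \rangle \cdot \efg(O(n), W).$$

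The main obstacle is ensuring that $X'$ uses only split edges, since edges of type $(u^{out}, v^{in})$ have no vertex counterpart. With infinite costs this is immediate. With finite costs, I argue as follows. If $X'$ contains any heavy edge, then $\cost'(X')$ exceeds $\cost(V)$. In that case we can instead output $X = V$ (all vertices), which is trivially a valid vertex cut with cost $\cost(V)$, already below the bound. Alternatively, one can rule this out directly whenever $\langle \cost, w \rangle \cdot \efg$ is below the heavy cost. A cleaner option is to replace each original edge by $n+1$ parallel-free disjoint two-hop paths, but the large-cost trick suffices.

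Assuming $X'$ consists only of split edges, define $X := \{v : (v^{in}, v^{out}) \in X'\}$, so that $\cost(X) = \cost'(X')$. Any $a \leadsto b$ path maps to an $a^{out} \leadsto b^{in}$ path, which $X'$ must hit on a split edge of some internal vertex. Therefore $X$ contains an internal vertex of the path, and $X$ is a valid vertex cut satisfying
$$\cost(X) \le \langle \cost, w \rangle \cdot \efg(O(n), W),$$
which gives $\vfg(n, W) \le \efg(O(n), W)$.
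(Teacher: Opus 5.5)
Your proposal is correct and is the same vertex-splitting reduction the paper uses: split edges carry $w(v)$ and $\cost(v)$, original edges get weight $0$ and prohibitive cost, and $X$ is read off the split edges in $X'$. Your fallback of outputting $X = V$ when $X'$ uses a heavy edge is a more explicit justification of the paper's one-line ``by choice of high enough costs'' step, but the heavy cost must genuinely exceed $\cost(V)$. Your concrete choice $(\cost(V)+1)\cdot\max_v \cost(v)$ need not exceed it when costs are below $1$, so take $\cost(V)+1$ instead, and drop the aside about two-hop paths, since it would break the $O(n)$ node bound.
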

\begin{proof}
The following reduction is standard in the literature, and described e.g.\ in \cite{Gupta03, CK09}, etc., so we will describe it only briefly.
Let $G = (V, E, \cost), w$ be an $n$-node input to the vertex flow-cut gap problem, with $w(V) =: W$.
Our goal is to construct a vertex cut $X \subseteq V$ for the pairs at vertex-weighted distance $\ge 1$.

Construct a new graph $G' = (V', E', \cost')$ by considering each vertex $v \in V$ in arbitrary order, and performing the following steps:
\begin{itemize}
\item Replace $v$ with two new vertices, $v_{in}$ and $v_{out}$.

\item Add the directed edge $(v_{in}, v_{out})$.
Assign its weight in $w'$ to be $w(v)$, and its cost in $\cost'$ to be $\cost(v)$.

\item For all edges $(u, v)$ entering $v$, replace them with $(u, v_{in})$, and for all edges $(v, u)$ leaving $v$, replace them with $(v_{out}, u)$.
Assign both types of edges weight $0$ under $w'$, and some sufficiently high cost under $\cost'$.
\end{itemize}
This graph has $n' = 2n$ nodes, and $w'(E)=W$.
So, there is an integral edge cut $X' \subseteq E'$ for the pairs $(u, v)$ with $\dist_{G' \mid w'}(u, v) \ge 1$, with
$$\cost'(X') \le \langle \cost, w' \rangle \cdot \efg(2n, W).$$
Moreover, by choice of high enough costs, all edges in $X'$ must be of the form $(v_{in}, v_{out})$ for some $v \in V$.
Define the vertex cut $X$ to contain all such nodes $v$.
Now, a straightforward analysis (that we omit) shows that the vertex cut $X$ satisfies the theorem.
\end{proof}

\subsection{A Self-Reduction for $\vfg$ \label{sec:vfgself}}

The following self-reduction is implied by a simple node-cutting technique, which is standard in the literature and has been used in most prior work \cite{AAC07, CK09, Gupta03}.

\begin{theorem} [Folklore] \label{thm:vfgself}
If $\vfg(n, W) \le W^{c+o(1)}$ for some constant $c$, then $\vfg(n, W) \le n^{\frac{c}{1+c} + o(1)}$.
\end{theorem}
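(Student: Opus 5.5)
The plan is the standard heavy-node preprocessing: integrally cut every vertex of large fractional weight, rescale the remaining weights so they sum to at most about $n^{1/(1+c)}$, and apply the $W$-parametrized bound to what is left. We may first use the reduction of Section \ref{sec:vufgtovfg} to assume unit costs. This changes $n$ only polynomially and $W$ by a constant factor. Since we only need an $n^{\frac{c}{1+c}+o(1)}$ bound, polynomial blowup would be harmful. So instead I will try to argue directly with general costs, cutting vertices by weight threshold and paying for them against $\langle \cost, w\rangle$.

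Fix a threshold $\tau := n^{-1/(1+c)}$. Let $H := \{v : w(v) \ge \tau\}$ be the heavy vertices, and put all of $H$ into the integral cut. Each $v \in H$ has $\cost(v) \le \cost(v) w(v)/\tau$, so
$$\cost(H) \le \frac{\langle \cost, w\rangle}{\tau} = \langle \cost, w \rangle \cdot n^{\frac{1}{1+c}},$$
which is too large for $c<1$. This suggests taking the opposite threshold, $\tau := n^{-c/(1+c)}$, which gives $\cost(H) \le \langle \cost,w\rangle\, n^{c/(1+c)}$. For the light vertices, every remaining vertex has weight $<\tau$, so the total remaining weight is at most $n\tau = n^{1/(1+c)}$. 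Let $w''$ be $w$ restricted to the light vertices, with weight $1$ (or cut) on $H$. On the graph $G\setminus H$, any pair with $\vdist_w(a,b)\ge 1$ whose paths avoid $H$ internally is still fractionally cut by $w''$. Applying the hypothesis to $G \setminus H$ with weights $w''$ gives a cut of cost at most $\langle \cost, w''\rangle \cdot (n^{1/(1+c)})^{c+o(1)} = \langle \cost,w\rangle\, n^{\frac{c}{1+c}+o(1)}$. The union of $H$ with this cut is a valid cut for $G$, because every path either passes through $H$ internally or lies in $G\setminus H$. Both terms are then $n^{\frac{c}{1+c}+o(1)}\langle \cost,w\rangle$.

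Two technical points remain. The first is that deleting $H$ should be done as removal of those vertices, not as a weight assignment, so that $W$ truly counts only light weights; endpoints in $H$ are harmless, since we only need internal vertices of each path cut. The second is checking that $\vfg$'s monotonicity in $W$ is fine. I expect the only real subtlety to be this bookkeeping, in particular that the $o(1)$ in the exponent of $W$ translates to $n^{o(1)}$ because $W\le n$. No serious obstacle is expected.
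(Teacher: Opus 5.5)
Your threshold $\tau=n^{-c/(1+c)}$ and your cost accounting for both parts agree with the paper. However, the correctness argument has a genuine gap: demand pairs with an endpoint in $H$.

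Once you delete $H$, a pair $(a,b)$ with $a\in H$ (or $b\in H$) is no longer a pair of $G\setminus H$ at all. So the cut the hypothesis gives you on $G\setminus H$ owes that pair nothing. Having $a$ itself in your cut does not help either, since only internal vertices count. Your claim that ``every path either passes through $H$ internally or lies in $G\setminus H$'' therefore fails exactly for paths whose only heavy vertex is an endpoint, and ``endpoints in $H$ are harmless'' gets it backwards.

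Concretely, let $G$ be the path $a\to v_1\to\dots\to v_k\to b$ with $w(a)=1$, $w(b)=0$, and $w(v_i)=1/k<\tau$ (take $k$ large). Then $\vdist_w(a,b)=1$ and $H=\{a\}$. Every reachable pair in $G\setminus H$ is at distance $<1$ (e.g.\ $\vdist_w(v_1,b)=(k-1)/k$), so $\emptyset$ is a valid cut there. Your final cut is then $\{a\}$, which does not cut $(a,b)$.

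The paper closes this gap by passing to the first and last internal vertices. Suppose a path $(a,v_1,\dots,v_k,b)$ has no internal vertex in $X_1$. Then $v_1,v_k$ are light and survive in $G\setminus X_1$. Every $v_1\leadsto v_k$ path extends, via the edges $a\to v_1$ and $v_k\to b$, to an $a\leadsto b$ walk, so $\vdist_w(v_1,v_k)\ge 1-w(v_1)-w(v_k)$. Lowering the threshold to $n^{-c/(1+c)}/4$ makes this at least $1/2$. The paper then applies the hypothesis on $G\setminus X_1$ with the doubled weights $2w$. The resulting cut separates $(v_1,v_k)$ and hence contains one of $v_2,\dots,v_{k-1}$. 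The factor $4$ and the doubling change $\cost(X_1)$, $\langle\cost,w\rangle$, and the remaining total weight only by constant factors, so your bounds survive.

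Alternatively, you could replace each $h\in H$ by two weight-$0$ copies: one keeping only the out-edges of $h$, the other only its in-edges. These copies can be endpoints but never internal vertices. The hypothesis applied to this graph on at most $2n$ nodes then covers the heavy-endpoint pairs directly, with no loss in distance.
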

\begin{proof}
Let $G = (V, E, \cost, w)$ be an $n$-node input to the vertex flow-cut gap problem.

\paragraph{The Construction.}
We construct an integral cut $X \subseteq V$ for the pairs at vertex-weighted distance $\ge 1$ using the following steps.

\begin{itemize}

\item Let $X_1$ be the set of all nodes of fractional weight $\ge n^{-\frac{c}{1+c}}/4$.

\item Let $2w$ denote the vertex weight function that assigns every vertex twice its weight in $w$.
Let $X_2$ be an integral cut in the graph $G \setminus X_1$, under weight function $2w$, for all pairs $(a, b)$ at distance $\dist_{G \setminus X_1 \mid 2w}(a, b) \ge 1$.

\item The final cut is $X := X_1 \cup X_2$.
\end{itemize}

\paragraph{Correctness.}
Consider any node pair $(a, b)$ with $\dist_{G \mid w}(a, b) \ge 1$, and let
$\pi = (a, v_1, \dots, v_k, b)$
be any $a \leadsto b$ path.
If any of the internal vertices $v_1, \dots, v_k$ are added to $X_1$ in the first step of the construction, then we are done, so let us assume otherwise.
Our plan is to then argue that one of the vertices $v_2, \dots, v_{k-1}$ are added to $X_2$.

Since we assume $v_1, v_k \notin X_1$, we must have
$$w(v_1), w(v_k) \le \frac{n^{-\frac{c}{1+c}}}{4}.$$
By the triangle inequality, we therefore have
\begin{align*}
\dist_{G \mid w}(v_1, v_k) &\ge \dist(a, b) - w(v_1) - w(v_k)\\
&\ge 1 -  \frac{n^{-\frac{c}{1+c}}}{4} -  \frac{n^{-\frac{c}{1+c}}}{4}\\
&\ge \frac{1}{2},
\end{align*}
and so
$$\dist_{G \setminus X_1 \mid 2w}(v_1, v_k) \ge 2 \cdot \dist_{G \mid w}(v_1, v_k) \ge 1.$$
So $X_2$ is an integral cut for the pair $(v_1, v_k)$, and so it includes at least one node from $v_2, \dots, v_{k-1}$.
Thus $X$ includes an internal vertex of $\pi$.

\paragraph{Bound on $\cost(X)$.}
First, we have:
\begin{align*}
\cost(X_1) &= \sum \limits_{v \in V \mid w(v) \ge n^{-\frac{c}{c+1}}/4} \cost(v)\\
&\le \sum \limits_{v \in V} \cost(v) \cdot w(v) \cdot n^{\frac{c}{c+1}}/4\\
&= \langle \cost, w \rangle \cdot O\left(n^{\frac{c}{c+1}}\right).
\end{align*}
Then, we have:
\begin{align*}
\cost(X_2) &\le \langle \cost, 2w\rangle \cdot \left( 2w(V \setminus X_1)\right)^{c+o(1)}\\
&\le \langle \cost, w\rangle \cdot \left( n \cdot n^{-\frac{c}{c+1}}\right)^{c+o(1)} \tag*{max weight in $V \setminus X_1$ is $\le n^{-\frac{c}{c+1}}$}\\
&= \langle \cost, w \rangle \cdot \left(n^{\frac{1}{c+1}}\right)^{c+o(1)}\\
&= \langle \cost, w \rangle \cdot n^{\frac{c}{c+1} + o(1)}.
\end{align*}
Combining these two bounds completes the proof.
\end{proof}

\subsection{Reducing Between Flow-Cut Gap and Weak Low-Diameter Decompositions \label{sec:wlddred}}

Here we provide details on the proof of Corollary \ref{cor:wldd}.
In fact, we will phrase it more generally as a bound for weak low-diameter decompositions (LDDs).
In the following theorem statement and proof, we use $\Oish(\cdot)$ notation to hide factors of $\log n$ \emph{or} $\log W$.

\begin{theorem}
There is a weak LDD for vertex cuts with loss parameter $\Oish(\vfg(n, W))$, and a weak LDD for edge cuts with loss parameter $\Oish(\efg(n, W))$.
\end{theorem}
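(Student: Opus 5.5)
The plan is to follow the multiplicative weights argument of Bringmann, Fischer, Haeupler, and Latypov \cite{BFHL25} (Lemma 2.2), phrased as a zero-sum game. We treat the vertex case first; the edge case is identical with edges in place of vertices. By rescaling, take $\Delta = 1$, so the input is $G=(V,E,w)$ and the set of pairs to separate is $P := \{(s,t) : \vdist_w(s,t)\ge 1\}$. Let $\beta := \vfg(n, W)$ with $W = w(V)$. We want a distribution $\mathcal{D}$ over integral cuts $X$ that cut all of $P$ and satisfy $\Pr_{X\sim\mathcal{D}}[v\in X] \le \Oish(\beta)\cdot w(v)$ for every $v$.

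Consider the game where the cutter picks a valid integral cut $X$ for $P$, the adversary picks a cost vector $\cost \ge 0$ normalized so that $\langle \cost, w\rangle = 1$, and the payoff is $\cost(X)$. For any fixed $\cost$, the definition of $\vfg$ applied to $(G,\cost,w)$ gives a valid cut with $\cost(X) \le \langle \cost,w\rangle\cdot\beta = \beta$. So the cutter has a best response of value $\le\beta$ against every pure adversary strategy. By minimax (LP duality), this gives a distribution over cuts with $\mathbb{E}[\cost(X)]\le\beta$ simultaneously for all normalized $\cost$. Taking $\cost = \mathbf{1}_v / w(v)$ then yields $\Pr[v\in X]\le \beta\, w(v)$.

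To make this algorithmic and to handle nodes with tiny or zero weight, I would instead run multiplicative weights. Maintain per-vertex costs $\cost_j(v)$, initialized to $w(v)$-normalized uniform values. In each iteration, call the flow-cut gap rounding on $(G,\cost_j,w)$ to get $X_j$, then increase $\cost_{j+1}(v)$ by a factor $(1+\eps)$ for each $v\in X_j$, i.e.\ proportionally to the violation $\mathbf{1}[v\in X_j]/(\beta w(v))$. After $T=O(\log n/\eps^2)$ rounds, output a uniformly random $X_j$. The standard regret bound then gives $\Pr[v\in X]\le (1+O(\eps))\beta w(v)$ up to the usual $\log$ factors, with the width bounded as in \cite{BFHL25}.

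The main technical obstacle is the width and range issues. The ratio $\mathbf{1}[v\in X_j]/w(v)$ is unbounded when $w(v)$ is tiny, and it is undefined when $w(v)=0$, where we must never cut $v$. I would handle this as follows.
\begin{itemize}
\item Nodes with $w(v)=0$ get infinite cost, which is harmless since the $\vfg$ guarantee then never cuts them.
\item Nodes with $w(v)\le 1/(2n)$ are handled by contracting or doubling, exactly as in the reduction to unit costs in Section~\ref{sec:vufgtovfg}. This restricts weights to a $\poly(n)$ range at the price of a factor $2$ in $w$, which only changes $W$ by $O(1)$.
\item The remaining weights are grouped into $O(\log n)$ dyadic scales, and the guarantee is tracked per scale. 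Equivalently, we use the $\log W$ dependence in the number of rounds; this is why $\Oish$ hides $\log W$ factors.
\end{itemize}
The edge version follows the same steps with $\efg$, or alternatively follows by composing the vertex version with Theorem~\ref{thm:vfgleefg} and Theorem~\ref{thm:edgetovert}.
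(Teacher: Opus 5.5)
Your proposal is correct, but its core is a different argument from the paper's.

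\textbf{The paper's route.} The paper does not argue via minimax. It runs multiplicative weights on the costs directly, calls the $\efg$ rounding once per round, and reads off the frequency bound from the growth of $\langle \cost, w\rangle$. This gives an explicit multiset of $\Oish(n)$ cuts, computable in polynomial time given an efficient rounding subroutine. The price is logarithmic factors and the preprocessing $w(e)\ge\frac{1}{2n}$.

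\textbf{Your route.} Your duality step already proves the statement. The finite LP over distributions on valid cuts has dual value at most $\beta=\vfg(n,W)$, because the $\vfg$ guarantee is a best response to every cost vector. This gives loss exactly $\vfg(n,W)$, with no preprocessing. LP duality has no width dependence. A vertex with $w(v)=0$ is automatically cut with probability $0$, since putting arbitrarily large cost on it leaves $\langle\cost,w\rangle$ unchanged. So tiny or zero weights are not a reason to switch to MWU; MWU buys only constructivity.

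\textbf{Repairing your MWU sketch.} If you want the constructive version, the sketch needs fixing. After your preprocessing, the losses $\mathbf{1}[v\in X_j]/(\beta w(v))$ have width $\rho=1/(\beta w_{\min})=O(n/\beta)$. You therefore need step factors $1+\eps\, w_{\min}/w(v)$ and $T=O(\rho\log n/\eps^2)=\Oish(n)$ rounds. A uniform factor $1+\eps$ with $T=O(\log n/\eps^2)$ does not work. Grouping weights into dyadic scales does not remove this dependence, since one cut must serve all scales simultaneously.

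This is also where the paper's write-up is loose. With its factor $1+\frac{1}{w(e)\efg}$, it uses $\log(1+1/x)=\Theta(1/x)$ for $x=w(e)\efg\le 1$, which fails as $x\to 0$. A factor such as $1+\frac{1}{2n\,w(e)}$ with $\Theta(n\log(nW))$ rounds repairs it.

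\textbf{Edge case.} Rerun the duality argument directly on edges. The detour through Theorems \ref{thm:vfgleefg} and \ref{thm:edgetovert} only gives a bound in terms of $\efg(O(n\log n),O(W))$, not $\efg(n,W)$.
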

\begin{proof}
We will show the proof here for edge cuts; the proof for vertex cuts is essentially identical.
The input is an $n$-node directed edge-weighted graph $G = (V, E, w)$.
We will assume without loss of generality that the minimum nonzero weight is at least $\frac{1}{2n}$.
This may be achieved, e.g., by rounding all lower weights down to $0$, while doubling all remaining weights.
Note that any pair $(s, t)$ with $\dist(s, t) \ge 1$ will still have this property following this weight modification.

Our goal will be to construct\footnote{We note that our construction of $\mathcal{X}$ runs in randomized polynomial time, so long as the subroutine computing a cut satisfying the $\efg(n, W)$ bound is as well.} a multiset of cuts $\mathcal{X}$ of size $|\mathcal{X}| \le \Oish(n)$, where:
\begin{itemize}
\item each cut $X \in \mathcal{X}$ deterministically hits all $s \leadsto t$ paths for any pair $(s, t)$ with $\dist_G(s, t) \ge 1$, and
\item each edge $e \in E$ is in at most $w(e) \cdot \Oish(\efg(n, W)) \cdot |\mathcal{X}|$ many cuts in $\mathcal{X}$.
\end{itemize}
Thus, uniformly sampling a cut from $\mathcal{X}$ suffices for a weak low-diameter decomposition.
To construct $\mathcal{X}$, we repeat the following procedure:

\begin{itemize}
\item initially, assign the \emph{cost} of each edge with nonzero weight to be $1$, and the cost of each edge with zero weight to be a very large value (such that these edges will never be cut).
\item in each round $i$, we may compute an edge cut $X_i$ that satisfies
$\cost(X_i) \le \efg(n, W) \cdot \langle \cost, w \rangle,$
with respect to the current edge costs.
Then we add $X_i$ to $\mathcal{X}$.
\item after computing $X_i$, for each edge $e \in X_i$, update
$$ \cost(e) \gets \cost(e) \cdot \left( 1 + \frac{1}{w(e) \cdot \efg(n, W)}\right),$$
where $W := w(E)$.
The costs of edges that are not in $X_i$ remain unchanged.
\end{itemize}

Our goal is now to prove that each edge $e \in E$ is in at most $w(e) \cdot \Oish(\efg(n, W))$ many cuts in $\mathcal{X}$.
We first observe that in each round $i$, when we update $\cost$, the quantity $\langle \cost, w \rangle$ increases to at most
\begin{align*}
& \langle \cost, w \rangle + \sum_{e \in X_i} \cost(e) \cdot w(e) \cdot \frac{1}{w(e) \cdot \efg(n, W)}\\
= \ & \langle \cost, w \rangle  + \frac{\cost(X_i)}{\efg(n, W)}\\
\le& \ 2\cdot \langle \cost, w\rangle.
\end{align*}

Since initially $\langle \cost, w \rangle = W$, all cuts $X_i \in \mathcal{X}$ must satisfy
$$\cost(X_i) \le 2^i \cdot W \cdot \efg(n, W).$$
Meanwhile, suppose $e$ is in at least $k_e+1$ cuts.
Then its cost in the last round $i$ in which it is cut must satisfy
\begin{align*}
\left( 1 + \frac{1}{w(e) \cdot \efg(n, W)}\right)^{k_e} = \cost(e) \le \cost(X_i) \le 2^i \cdot W \cdot \efg(n, W).
\end{align*}
We then compare the first and last terms and rearrange, as follows:
\begin{align*}
\left(1+\frac{1}{w(e)\cdot \efg(n,W)}\right)^{k_e}
&\le 2^i \cdot W \cdot \efg(n, W) \\
k_e \log\left(1+\frac{1}{w(e)\cdot \efg(n,W)}\right)
&\le i\log 2 + \log W + \log \efg(n, W) \\
\frac{k_e}{i}
&\le
\frac{
\Oish(1)
}{
\log\left(1+\frac{1}{w(e)\cdot \efg(n,W)}\right)
}.
\end{align*}
At this point, we note that we may assume that $w(e) \cdot \efg(n, W) \le 1$, since otherwise the claim is satisfied trivially.
So we may apply the estimate $\log(1+\frac{1}{x}) = \Theta(1/x)$, and continue simplifying:
\begin{align*}
\frac{k_e}{i}
&\le
\frac{
\Oish(1)
}{
\Theta\left(\frac{1}{w(e)\cdot \efg(n,W)}\right)} \\
\frac{k_e}{i}
&\le
\Oish\left(
w(e)\cdot \efg(n,W)
\right)\\
\frac{k_e + 1}{i}
&\le
\Oish\left(
w(e)\cdot \efg(n,W)
\right) + \frac{1}{i}.
\end{align*}
Thus the desired bound holds so long as $\frac{1}{i} \le \Oish\left(
w(e)\cdot \efg(n,w(E))
\right)$.
Since $w(e) \ge \frac{1}{2n}$, this holds when $i \ge \Omega(n)$.
So, repeating for that many rounds provides our desired guarantee.
\end{proof}

\section*{Acknowledgments}

We are grateful to Gary Hoppenworth for useful technical discussions, and for some helpful references to prior work.

\bibliographystyle{plain}
\bibliography{refs}

\appendix

\end{document}